\documentclass[USletter, 11pt]{article}
\usepackage[top=1in, bottom=1in, left=1in, right=1in]{geometry}

\synctex=1
\pdfoutput=1 

\usepackage[utf8]{inputenc}
\usepackage[T1]{fontenc}
\usepackage{lmodern}
\usepackage{microtype}
\usepackage{enumerate}

\usepackage[USenglish]{babel}

\usepackage{graphicx} 
\usepackage{xcolor}
\usepackage{amsmath,amssymb,amsthm}
\usepackage{mathtools}
\usepackage[]{hyperref}
\hypersetup{
    colorlinks,
    linkcolor={red!50!black},
    citecolor={blue!50!black},
    urlcolor=.
}
\usepackage[normalem]{ulem}
\usepackage{float}
\usepackage[capitalize]{cleveref}
\usepackage{todonotes}
\usepackage{tabularx}
\usepackage[square,numbers]{natbib}
\usepackage{tikz}
\usetikzlibrary{calc}
\usetikzlibrary{positioning,backgrounds,patterns,calc,matrix,shapes,decorations.pathreplacing,decorations.pathmorphing,math}
\tikzset{crossing/.style={cross out, draw=red, minimum size=2*(#1-\pgflinewidth), inner sep=0pt, outer sep=1pt, very thick}, crossing/.default={4pt}}
\usetikzlibrary{arrows.meta,bending,decorations.markings,hobby,patterns,calc}
\newcolumntype{C}[1]{>{\centering\let\newline\\\arraybackslash\hspace{0pt}}m{#1}}

\usepackage{thmtools, thm-restate}

\newtheorem{theorem}{Theorem}
\newtheorem{claim}[theorem]{Claim}
\newtheorem{lemma}[theorem]{Lemma}

\newtheorem{observation}[theorem]{Observation}

\newtheorem{definition}[theorem]{Definition}


\usepackage[ruled, vlined, linesnumbered, nofillcomment]{algorithm2e}

\DontPrintSemicolon
\SetKwInOut{Input}{Input}
\SetKwInOut{Output}{Output}
\SetKwProg{Procedure}{Procedure}{}{}


\newcommand{\cclass} [1] {\textnormal{\textsf{#1}}}

\newcommand{\problem}[1]{\textnormal{\textsc{#1}}}
\newcommand{\BI}{\problem{Global Reach Improvement}}
\newcommand{\MKCENTER}{\problem{Metric} $k$-\problem{Center}}
\newcommand{\MtwoKCENTER}{\problem{Metric} $2k$-\problem{Center}}
\newcommand{\MplusoneKCENTER}{\problem{Metric} $(k+1)$-\problem{Center}}
\newcommand{\DR}{\problem{Diameter Reduction}}
\newcommand{\SRI}{\problem{Reach Improvement}}
\newcommand{\RI}{\problem{Single-Source Reach Improvement}}
\newcommand{\gsc}{\textsc{Gap Set Cover}}
\newcommand{\setc}{\textsc{Set Cover}}
\newcommand{\reachOPT}{\beta^*}
\newcommand{\broadOPT}{\beta^*}

\newcommand{\broadprime}{\beta'}

\newcommand{\reachprime}{\beta'}

\newcommand{\proxradius}{\mu_r}
\newcommand{\proxdiameter}{\mu_d}

\newcommand{\HS}{\problem{Hitting Set}}
\newcommand{\centerv}{u}

\newcommand{\proxG}[3]{\mathcal{P}_{#1}({#2}, {#3})}
\newcommand{\prox}[2]{\mathcal{P}({#1}, {#2})}

\newcommand{\broad}[1]{\beta({#1})}

\newcommand{\reach}[2]{\beta_{#1}({#2})}

\newcommand{\metricfunc}{\phi}
\newcommand{\sourcev}{v_s}
\newcommand{\sourceV}{V_s}

\newcommand{\pmin}{\alpha_{min}}
\newcommand{\pmax}{\alpha_{max}}

\newcommand{\W}{\mathcal{W}}
\newcommand{\R}{\mathbb{R}}
\newcommand{\C}{\mathcal{C}}
\newcommand{\cE}{\mathcal{E}}

\newcommand{\poly}{\text{poly}}

\newcommand{\argmin}{\text{argmin}}

\newenvironment{inequality}
  {\crefalias{equation}{inequality}\begin{equation}}
  {\end{equation}\ignorespacesafterend}
\crefname{inequality}{inequality}{inequality}


\usepackage{graphicx}
\usepackage{wrapfig}
\graphicspath{ {./figs/} }

\newcommand{\su}[1]{^{(#1)}}

\newcommand{\problemdef}[3]{
	\begin{center}
	\begin{minipage}{0.95\columnwidth}
		\noindent
		\textsc{#1}
		\vspace{5pt}\\
		\setlength{\tabcolsep}{3pt}
		\begin{tabularx}{\textwidth}{@{}lX@{}}
			\textbf{Input:}     & #2 \\
			\textbf{Question:}  & #3
		\end{tabularx}
	\end{minipage}
	\end{center}
}

\usepackage{authblk}

\title{Optimizing Probabilistic Propagation in Graphs by Adding Edges}
\date{}

\author{Aditya Bhaskara}
\author{Alex Crane}
\author{Shweta Jain}
\author{Md Mumtahin Habib Ullah Mazumder}
\author{Blair~D.~Sullivan}
\author{Prasanth Yalamanchili}

\affil{University of Utah}

\begin{document}

\maketitle
\thispagestyle{empty}

\begin{abstract}
Probabilistic graphs are an abstraction that allow us to study randomized propagation in graphs. In a probabilistic graph, each edge is ``active'' with a certain probability, independent of the other edges. For two vertices $u,v$, a classic quantity of interest, that we refer to as the \emph{proximity} $\proxG{G}{u}{v}$, is the probability that there exists a path between $u$ and $v$ all of whose edges are active. For a given subset of vertices $\sourceV$, the \emph{reach} is defined as the minimum over pairs $u \in \sourceV$ and $v \in V$ of the proximity $\proxG{G}{u}{v}$. This quantity has been studied in the literature in the context of multicast in unreliable communication networks and in social network analysis.

We study the problem of \emph{improving} the reach in a probabilistic graph via edge augmentation. Formally, given a budget $k$ of edge additions and a set of source vertices $\sourceV$, the goal of \SRI{} is to maximize the reach of $\sourceV$ by adding at most $k$ new edges to the graph. The problem was introduced in earlier empirical work in the algorithmic fairness community~\cite{bashardoust2023reducing}, but lacked any formal guarantees. In this work, we provide the first approximation guarantees and hardness results for \SRI{}. 

We prove that the existence of a good augmentation implies a cluster structure for the graph in an appropriate metric. We use this structural result to analyze a novel algorithm that outputs a $k$-edge augmentation with an objective value that is poly($\broadOPT{}$), where $\broadOPT{}$ is the objective value for the optimal augmentation. When we are allowed slack in the edge budget $k$, we give an algorithm that adds $O(k \log n)$ edges, and a multiplicative approximation to the objective value. Our arguments rely on new probabilistic tools for analyzing proximity, inspired by techniques in percolation theory; these tools may be of broader interest. We also prove lower bounds, showing that significantly better approximation algorithms are unlikely, under known hardness assumptions related to gap variants of the classic \setc{} problem.
\end{abstract}

\newpage
\setcounter{page}{1}
\section{Introduction}\label{sec:intro}

We study communication between the nodes of a \emph{probabilistic} graph, i.e., a graph in which each edge is \emph{active} only with a certain probability. Probabilistic graphs allow us to model a variety of randomized diffusion processes in networks; some classic examples include information propagation in social networks (e.g., via the classic independent cascade model~\cite{kempe03maximizing,kempe2005influential,borgs2014maximizing,wang2012}), the spread of an epidemic in a population~\cite{newman2002spread,physRevE.76.036113,srivastava2025}, and finally, communication between nodes in a network whose whose links (edges) are prone to failures~\cite{callaway2000network,goldschmidt1994reliability,ke2020reliability,albert2000error}. 

Formally, a probabilistic graph $G$ is defined using a set of vertices $V$, a set of edges $E$, and a parameter $\alpha_e \in [0,1]$ for each $e \in E$, which is the probability that $e$ is ``active.'' We assume that edges are bidirectional (i.e., if an edge $uv$ is active, then $u$ can communicate with $v$ and vice versa), and that edges are active independently of one another. For a pair of vertices $u, v \in V$, we can define the \emph{proximity} $\proxG{G}{u}{v}$ as the probability that there exists a path from $u$ to $v$ in $G$, using only active edges. It is well-known (see, e.g., \cite{chen2010scalable}) that computing $\proxG{G}{u}{v}$ exactly is \#P-hard. Intuitively, this is because there can be exponentially many paths between two vertices, sharing edges in complex ways. However, in practice, Monte Carlo simulations can often be used to obtain approximations to $\proxG{G}{u}{v}$, as long as the probabilities estimated are not too small~\cite{fishman2007comparison,li2015recursive}.

In this paper, we study the problem of \emph{augmenting} a graph $G$ via edge addition in order to improve the proximity between vertices. We focus on applications in which a subset of the vertices wish to communicate reliably with all vertices in the graph. In the context of a social network, this corresponds to a subset of the users (content generators) being able to ``reach'' all the users in the graph; in a communication network, this corresponds to a subset of nodes being able to perform a broadcast. Formally, we have a set of \emph{source} vertices $V_s$, and we focus on the objective function:
\[ \reach{G}{\sourceV} := \min_{u \in \sourceV} \min_{v \in V} \proxG{G}{u}{v}.  \]

With this objective, we study the following augmentation problem: given a budget $k$ and a set of source vertices $\sourceV$, add at most $k$ edges to maximize $\reach{G'}{\sourceV}$ for the resulting graph $G'$. We call this problem \SRI{}, and our goal will be to develop approximation algorithms and understand its complexity. For a formal definition of \SRI{} (which also requires $\alpha_e$ values for all possible edges), see Section~\ref{sec:preliminaries}. 

The special case of $\sourceV = V$, which we will subsequently refer to as \BI{}, was studied in the work of~\cite{bashardoust2023reducing}, where they motivate the quantity $\reach{G}{V}$ from the perspective of \emph{fairness in information access} in social networks. However,~\cite{bashardoust2023reducing} provides only an empirical study of the problem; no formal approximation guarantees or hardness results are known. One of our main contributions is to provide such guarantees. Another related line of research focuses on the problem of edge augmentation to minimize the \emph{shortest path distance} between pairs of nodes (for the cases of $\sourceV$ being a singleton or $\sourceV = V$). While these works (e.g.,~\cite{li1992minimum,demaine2010minimizing,bilo2012improved,frati2015augmenting}) develop approximation algorithms, they rely on the relatively simple structure of the shortest path metric, which fails to hold in our setting. These, as well as other related works, are discussed in Section~\ref{sec:warmup}.

\subsection{Our Results}\label{sec:our-results}

We design approximation algorithms and prove lower bounds for the \SRI{} problem.  
The following notation will help describe our results. Let $G = (V, E, \{\alpha_e\})$ denote the given probabilistic graph, and let $\sourceV$ be the set of source vertices. We write $n = |V|$, and use $k$ to denote the edge addition budget. For every (potential) edge $e\in \binom{V}{2}$, we assume that the value of $\alpha_e$ is known (given as input) and that $\alpha_e \in [\pmin,\pmax]$. We denote the optimal objective value by $\broadOPT$.

We first focus on the setting where $\sourceV = V$, as studied in~\cite{bashardoust2023reducing}. This setting turns out to capture most of our algorithmic ideas. Our result is the following:

\begin{theorem}[Informal version of Theorem~\ref{thm:ballpacking-theorem}]\label{thm:informal-main-algo}
For any probabilistic graph $G$ and budget parameter $k$, there is a polynomial-time algorithm that outputs a set $S$ of at most $k$ edges to add to $G$, such that for the augmented graph $G+S$ we have
\[ \reach{G+S}{V} \ge (\broadOPT)^4 \cdot \poly \left( \frac{\alpha_{\min}}{k} \right).  \]
\end{theorem}

Our approximation guarantee is ``polynomial'' in $\broadOPT$. As we will see, the quantities $\proxG{G}{u}{v}$ form a metric after taking the logarithm. So in a sense, the polynomial factor above actually corresponds to a constant factor approximation (plus additive factors) after taking logarithms. 

Next, we generalize the result to the case of arbitrary source vertices $\sourceV$. We obtain a slightly weaker result in general, but match the result of~\Cref{thm:informal-main-algo} when $|\sourceV|=1$.

\begin{theorem}[Informal version of~\Cref{claim:ss-ballpacking-theorem,claim:sr-ballpacking-theorem}]\label{thm:informal-subset-source}
For any probabilistic graph $G$, budget parameter $k$, and set of source nodes $\sourceV$, there is a polynomial-time algorithm that outputs a set $S$ of at most $k$ edges to add to $G$, such that for the augmented graph $G+S$ we have
\[ \reach{G+S}{\sourceV} \ge (\broadOPT)^8 \cdot \poly \left( \frac{\alpha_{\min}}{k} \right).  \]

Furthermore, when $\sourceV$ is a singleton $\{\sourcev\}$, the guarantee matches that of Theorem~\ref{thm:informal-main-algo}:
\[ \reach{G+S}{\{\sourcev\}} \ge (\broadOPT)^4 \cdot \poly \left( \frac{\alpha_{\min}}{k} \right).  \]
\end{theorem}

Our algorithms rely on the following key idea: if the addition of $k$ edges improves the reach to $\beta^*$, then for every $v \in V$, at least one of the endpoints of the $k$ added edges must be ``close'' to $v$. This implies that balls of small radius around the endpoints of the optimal edge additions must cover all the vertices in $V$. This structural result (\Cref{thm:strengthened-ball-growing-theorem}) is used algorithmically: we find a mutually well-separated set of vertices, and argue that this set cannot be too large. By taking care when we pick these vertices (each newly selected vertex is separated yet ``close enough'' to previously chosen vertices), we show how to add at most $k$ edges while obtaining the desired approximation.

A significant technical challenge in our analysis is reasoning about the proximity function $\proxG{G}{u}{v}$, because there can be many $uv$-paths, potentially sharing edges. We develop a probabilistic tool we call the \emph{Splitting Lemma} that plays a key role in our analysis. It turns out to be a consequence of the Van den Berg-Kesten inequality from percolation theory~\cite{BK,bollobas2006percolation}, and may be of independent interest.\looseness=-1

It is natural to ask if the polynomial dependence on~$\broadOPT$ is necessary. We show that this is indeed the case, under well-known hardness assumptions. In fact, we show that a polynomial loss is unavoidable even for \emph{bicriteria} approximation algorithms, i.e., algorithms that can output $O(k)$ edges while trying to compete with an optimum that adds only $k$ edges:

\begin{theorem}[Informal version of Theorem~\ref{thm:broadcasthardness}] \label{thm:informal-hardness}
For any constant $c\ge 1$, under the \normalfont{\gsc{}} assumption (\Cref{asmp:hardgsc}), unless  $\cclass{P} = \cclass{NP}$, there is no polynomial-time algorithm for \SRI{} that adds most $ck$ edges to $G$ and guarantees $\reach{G'}{\sourceV} \ge (\broadOPT)^{1.1}$ for the resulting $G'$.
\end{theorem}

We remark that the hardness result also holds for the cases $\sourceV = V$ and $|\sourceV|=1$ (Theorem~\ref{thm:sshardness}), as well as for uniform edge activation probabilities.  Our reductions are from a known (and \cclass{NP}-hard~\cite{feige1998,feige2010}) variant of \setc{}, known as \gsc{}, which asks us to distinguish between instances where $k$ sets cover the entire universe (the YES-case) and those where any $ck$ sets cover at most a $(1-\delta)$-fraction of the universe (the NO-case), for some $\delta = \Omega_c(1)$. 

Motivated by the connection to \setc{}, we can ask if there exist algorithms that obtain \emph{linear} approximations to $\broadOPT$ if we allow $O(k \log n)$ edge additions. Our final algorithmic result shows that this is indeed the case:
\begin{theorem}[Informal version of Theorem~\ref{thm:main-klogn-algo}]\label{thm:informal-main-klogn-algo}
For any probabilistic graph $G$, budget parameter $k$, and set of source nodes $\sourceV$, there is a polynomial-time algorithm that outputs a set $S$ of $O(k \log n)$ edges to add to $G$, such that for the augmented graph $G+S$ we have
\[ \reach{G+S}{\sourceV} \ge \broadOPT \cdot \poly\left(\frac{\alpha_{\min}}{k}\right).  \]
\end{theorem}

Recall that one way to analyze the greedy algorithm for \setc{} is by observing that the ``coverage'' function, which measures how many elements are covered by the chosen sets, is submodular. In our setting, it turns out that the objective $\beta_{G'}(\sourceV)$, viewed as a function of the set of added edges, is not submodular (even after taking the logarithm). However, we show that there always exists a near-optimal \emph{star solution}, i.e., one in which all edge additions share a common endpoint, and furthermore that if we restrict our search space to star solutions, submodularity holds. These insights allow us to give a greedy algorithm based on a submodular potential function, but surprisingly, it is non-trivial to show that the optimal star solution optimizes our potential function. To do this, we once again use the Splitting Lemma.

\vspace{-6pt}
\paragraph{Extensions.} 
In~\Cref{appendix:constant-witnesses}, we give linear approximations for~\SRI{} which use $\poly(k)\cdot\log n$ edge additions. These results are weaker in general than~\Cref{thm:informal-main-klogn-algo}, but they may be preferable for certain values of $k, \pmin, \pmax$, and a parameter $\varepsilon$. The key step is to show that there is a near-optimal set $S$ of edge additions for which, given any $(u, v) \in \sourceV \times V$, only a constant number of edge additions from $S$ are needed to witness that $u$ and $v$ have high proximity in the augmented graph. We leverage this result to obtain a reduction to~\setc{}. Our technique may be of independent interest in the broader context of graph modification.

\subsection{Related work}

Our results contribute to the extensive literature on edge augmentation in order to improve various connectivity metrics in a graph. Perhaps closest to our work are the results on minimizing the diameter of a graph via edge additions. Specifically, the results of~\cite{li1992minimum, demaine2010minimizing, bilo2012improved,frati2015augmenting} also provide inspiration for some of our techniques. In the data mining literature, other metrics such as betweenness centrality have been studied from the perspective of edge augmentation (see, e.g.,~\cite{bergamini2018improving,papagelis2011suggesting,crescenzi2016greedily}). 
Augmentation to improve spectral metrics, such as the sum of effective resistances (known as the Kirchoff index), have also been studied in more recent works~\cite{Zhou_2025,achterberg2025}. Improving algebraic connectivity, specifically the Laplacian eigenvalue $\lambda_2(\mathcal{L}(G))$, has been studied in the work of~\cite{ghosh2007growing,wang2008algebraic}.

Transmission along the edges of a probabilistic graph has been extensively studied in probability theory as \emph{percolation}. Indeed, properties of the quantity $\proxG{G}{u}{v}$ are well-understood for (high-dimensional) grids and graphs with special structure~\cite{bollobas2006percolation}. Some of our key technical lemmas (e.g., Lemma~\ref{lemma:splitting}) rely on inequalities developed in this context. In the algorithms and data mining communities, probabilistic propagation has been used to model the spread of epidemics and information in networks; see the classic work of Kempe, Kleinberg, and Tardos~\cite{kempe03maximizing} on influence maximization. Here, the goal is to find a set of $k$ vertices such that a propagation process starting at these vertices ``reaches'' as many other vertices as possible in expectation. This work has had several applications to online advertising and information spread. However, it has been observed that influence maximization may end up unfairly isolating small groups of users~\cite{fish2019gaps,stoica2019fairness}. This has led to a significant body of research around \emph{fairness of information access} on social networks, e.g., \cite{tsang2019group, ali2019fairness, farnad2020unifying, rahmattalabi, becker2023improving}, and our work provides algorithms with approximation guarantees in this setting.

More broadly, beyond connectivity, problems such as shortest paths have been studied in probabilistic graphs. An extension of our notion of a probabilistic graph is one in which the \emph{length} of each edge is a random variable (for instance, in an application involving congestion in road networks). Here, a well-studied problem~\cite{loui1983optimal,nikolova2006optimal,nikolova2006stochastic,zhang2023finding} is that of finding the expected shortest path between given vertices. While these works are not directly related to our formulation, they are related to variants of \SRI{} in which one is interested in connectivity using paths of bounded length. It is interesting to ask if our techniques shed light on these settings.
\section{Preliminaries}\label{sec:preliminaries}
A \emph{probabilistic graph} $G = (V, E, \{\alpha_e \mid e \in \binom{V}{2}\})$ is a simple, undirected, connected graph $(V, E)$ and a set of activation probabilities for each $e \in \binom{V}{2}$. Note that we have abused the usual notation slightly by assuming activation probabilities for every possible edge, not just those in $E$. We will usually just write $\{\alpha_e\}$, and we will assume throughout that the activation probabilities are bounded below and above by constants $\pmin, \pmax \in (0, 1]$.
A \emph{realization} of the probabilistic graph $G$ (or a \emph{sampled graph}) is formed by deleting each edge in $E$ independently with probability $1 - \alpha_e$. Equivalently, we may think of ``activating'' each edge $e$ independently with probability $\alpha_e$.
The \emph{proximity function} $\proxG{G}{u}{v}$ is given by the probability with which $u$ and $v$ are connected in a sampled graph.
Recalling that Monte Carlo techniques may be used in practice~\cite{fishman2007comparison,li2015recursive}, we assume an oracle which computes pairwise proximity values in polynomial time.
The \emph{reach} of a vertex subset $\sourceV \subseteq V$ is $\min_{u \in \sourceV}\min_{v \in V} \proxG{G}{u}{v}$. We can now formally define our main problem.\looseness=-1 

\problemdef{\SRI{}}{A probabilistic graph $G = (V, E, \{\alpha_e\})$, a source-set $\sourceV \subseteq V$, and $k \in \mathbb N$.}
{Find a set $S \subseteq V^2 \setminus E$ of at most $k$ edge additions such that $\reach{G+S}{\sourceV}$ is maximized.}

We call the problem \BI{} if $\sourceV = V$, or \RI{} if $|\sourceV| = 1$.
We write $S^*$ for an optimal set of edge additions, $G^* = G +$~$S^*$ for the optimally augmented graph, and $\reach{G^*}{\sourceV}$ for the optimum reach achievable.
If $\sourceV = V$, we write $\broadOPT = \broad{G^*}$ for $\reach{G^*}{\sourceV}$, and if $\sourceV = \{\sourcev\}$ we write $\reach{G^*}{\sourcev}$ for $\reach{G^*}{\sourceV}$.
Many of our techniques for reasoning about proximity rely on
analyzing paths. We present here some non-standard notation.

\begin{definition}
    Let $p = v_0, v_1, v_2, \ldots, v_\ell$ be a (simple, unless noted otherwise) path. For $i < j$, we write $p[v_i, v_j] = v_i, v_{i+1}, \ldots, v_{j}$ for the \emph{segment} of $p$ from $v_i$ to $v_j$.
    We extend this notation to sets of paths when every path in the set shares a vertex. That is, for a set of paths $P$ from $w$ to $v$ which all use the vertex $u$, we write $P[w, u]$ and $P[u, v]$ for the \emph{segments} $\{p[w, u] \colon p \in P\}$ and $\{p[u, v] \colon p \in P\}$, respectively.
    The probability that at least one path in $P$ exists in a sampled graph is the \emph{contribution} of $P$, denoted $\Pr[P]$.\looseness=-1
\end{definition}

By definition, the proximity of $u$ to $v$ is the contribution of the set consisting of all $uv$-paths.
Observe that the negative logarithm of
proximity is a metric on the vertices of a graph. Formally, let $G = (V, E)$,
and define $\metricfunc \colon V^2 \rightarrow \R_{\geq 0}$ by $\metricfunc(u, v) = -\log\proxG{G}{u}{v}$. We claim that
$(V, \metricfunc)$ is a metric space. Symmetry and non-negativity are trivial, and every point has distance zero
to itself because $\prox{v}{v} = 1$ for all $v \in V$. The triangle inequality follows quickly from the
observation that for all $u, v, w \in V, \prox{u}{v} \geq \prox{u}{w} \cdot \prox{w}{v}$. 
Henceforth, we refer to $(V, \metricfunc)$ as the \emph{implied metric} of $G$.

We conclude this section with a brief summary of a first result for \BI{}, which will provide useful context for our later technical developments.
Intuitively, the following lemma says that edge additions ``far'' from a given vertex $u$ cannot drastically improve the reach of $u$.
Note that for a vertex subset $X \subseteq V$ and a vertex $u$, we write $\proxG{G}{u}{X} = \max_{v \in X}\proxG{G}{u}{v}$ for the proximity of $u$ to $X$.
\begin{restatable}{lemma}{ballgrowinglemma}\label{lemma:ball-growing}
    Let $G = (V, E, \{\alpha_e\})$, $u, v \in V$. Let $S \subseteq V^2 \setminus E$ be a set of $k$ edge additions, and $V(S)$ be the endpoints of the edges in $S$.
    Then $\proxG{G + S}{u}{v} \leq \proxG{G}{u}{v} + \proxG{G}{u}{V(S)}\cdot 2k\pmax$.
\end{restatable}

In~\Cref{appendix:k-center}, we prove~\Cref{lemma:ball-growing} by partitioning the paths from $u$ to $v$ in $G+S$ according to the first edge in $S$ appearing along the path (accounting for the orientation of the edge), and applying the union bound. We also show by example that the bound given by the lemma is asymptotically tight.
A simple consequence is a covering result for the implied metric of $G$. To see this, observe that if $\proxG{G}{u}{v}$ and $\proxG{G}{u}{V(S^*)}$ are both less than $\broadOPT/(1 + 2k\pmax)$, then by~\Cref{lemma:ball-growing}, $\proxG{G^*}{u}{v} < \broadOPT$, a contradiction. Thus, every $u \in V$ must have proximity at least $\broadOPT/(1+2k\pmax)$ to some vertex in $V(S^*)$.\looseness=-1

\begin{restatable}{corollary}{twokballcorollary}\label{cor:2kballs}
    There exist at most $2k$ balls of radius $-\log \frac{\broadOPT}{1 + 2k\pmax}$ which cover $V$.
\end{restatable}

\Cref{cor:2kballs} implies a straightforward reduction to \MtwoKCENTER{}, where edges are added to form a star on the selected centers.
Note that the dependence on $\broadOPT$ in the approximation guarantee is quartic instead of quadratic because we must use a $2$-approximation, e.g.,~\cite{gonzalez1985clustering,hochbaum1985best},
to solve \MtwoKCENTER{} in polynomial time. We defer the details to~\Cref{appendix:k-center}.

\begin{restatable}{proposition}{kcenterreduction}\label{thm:k-center-reduction}
    There exists a polynomial-time algorithm for \BI{} which produces a probabilistic graph with reach at least $\frac{(\broadOPT)^4\pmin^2}{(1+2k\pmax)^4}$ by adding at most $2k-1$ edges.
\end{restatable}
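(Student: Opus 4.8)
The plan is to carry out exactly the reduction sketched just before \Cref{lem:k-center-upper-bound}: compute the implied metric, run an off-the-shelf $2$-approximation for \problem{Metric} $2k$-\problem{Center}, and wire the returned centers into a star. Concretely, using the assumed proximity oracle I would first compute $\metricfunc(u,v)=-\log\proxG{G}{u}{v}$ for all pairs, obtaining the implied metric $(V,\metricfunc)$ explicitly in polynomial time, and then run the Gonzalez or Hochbaum--Shmoys $2$-approximation on $((V,\metricfunc),2k)$ to get a center set $S$ with $|S|\le 2k$ and covering radius $R\le 2r^{*}$, where $r^{*}$ is the optimum radius. Set $x:=2^{-R}$, so that by definition of $R$ every vertex $w$ has some center $s(w)\in S$ with $\metricfunc(w,s(w))\le R$, i.e., $\proxG{G}{w}{s(w)}\ge x$. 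Finally, fix an arbitrary hub $c_0\in S$ and add the (at most) $2k-1$ edges of the star on $S$ centered at $c_0$, skipping any edge already present in $E$; call the resulting information graph $G'$.

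The next step is to lower-bound $x$. Since $-\log x = R$ is at most a $2$-factor larger than $r^{*}$, \Cref{lem:k-center-upper-bound} applied with $c=2$ gives $\broadOPT\le x^{1/2}(1+2k\alpha)$, which rearranges to $x\ge (\broadOPT)^2/(1+2k\alpha)^2$.

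The heart of the argument is then to lower-bound $\broad{G'}$ by bounding the proximity of an \emph{arbitrary} pair $u,v$. Proximity is supermultiplicative --- $\prox{a}{b}\ge \prox{a}{w}\prox{w}{b}$, the very inequality used earlier to establish that $(V,\metricfunc)$ is a metric --- and proximity only increases when edges are added, so chaining through $s(u)$, $c_0$, and $s(v)$ gives
\[
\proxG{G'}{u}{v}\;\ge\;\proxG{G'}{u}{s(u)}\cdot\proxG{G'}{s(u)}{c_0}\cdot\proxG{G'}{c_0}{s(v)}\cdot\proxG{G'}{s(v)}{v}\;\ge\;x\cdot\alpha\cdot\alpha\cdot x\;=\;\alpha^{2}x^{2},
\]
where I used $\proxG{G'}{u}{s(u)}\ge\proxG{G}{u}{s(u)}\ge x$ (and symmetrically for $v$), and $\proxG{G'}{s(u)}{c_0}\ge\alpha$ because either $s(u)=c_0$ or $s(u)c_0$ is an edge of $G'$. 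Substituting the bound on $x$ yields $\proxG{G'}{u}{v}\ge \alpha^{2}(\broadOPT)^4/(1+2k\alpha)^4$ for every pair, hence $\broad{G'}$ is at least this quantity, and the whole procedure is polynomial time.

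I do not anticipate a genuine obstacle; the one thing to track carefully is the bookkeeping of exponents: the $2$-approximation forces the $c=2$ instance of \Cref{lem:k-center-upper-bound}, which is what turns the lower bound on $x$ into a square of $\broadOPT$, and the two star hops together with the two in-cluster hops turn that $x$ into $x^{2}$, producing the fourth power and a single factor $\alpha^{2}$. A minor point to check is the edge budget: a star on $|S|\le 2k$ vertices has at most $2k-1$ edges, and omitting edges already in $E$ only decreases the count (and can only increase proximities). One could alternatively avoid invoking supermultiplicativity and instead bound the contribution of the explicit witness path $u\rightsquigarrow s(u)\to c_0\to s(v)\rightsquigarrow v$ directly, using Harris's inequality to handle the positive correlation between the two ``connected in the sampled original graph'' events and independence of the two new star edges; but the supermultiplicative route is cleaner and self-contained given what is already proved in the excerpt.
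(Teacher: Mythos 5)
Your proposal is correct and follows essentially the same route as the paper's proof: reduce to \problem{Metric} $2k$-\problem{Center} on the implied metric, run a standard $2$-approximation, wire the returned centers into a star, chain proximities through the star hub, and then invoke \Cref{lem:k-center-upper-bound} with $c=2$ to convert the bound into one involving $\broadOPT$. The only cosmetic difference is that you make explicit that $x$ is defined from the actual covering radius $R$ of the returned center set (so that $-\log x \le 2r^{*}$ automatically), whereas the paper states this more tersely; this is the intended reading and does not change the argument.
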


\section{A Polynomial Approximation for \BI{}}\label{sec:polynomial-approximation}

We now present our main algorithmic result for \BI{}, i.e., the case where $\sourceV = V$. Our guarantee will be nearly the same as that of~\Cref{thm:k-center-reduction}, up to poly($k, \pmin, \pmax$) factors, but we achieve it by adding at most $k$ edges, rather than $2k-1$.

\begin{restatable}{theorem}{ballpackingtheorem}\label{thm:ballpacking-theorem}
Let $\broadOPT$ be the optimum objective value for the \BI{} problem on a probabilistic graph $G$ and parameter $k$, as in~\Cref{sec:preliminaries}. For any constant $\varepsilon > 0$, there is a polynomial-time algorithm that finds a set $S$ of at most $k$ edge additions, such that
\[ \reach{G+S}{V} \ge  \frac{(\broadOPT{})^4 \pmin^{4}}{4^4k^8(1 + \varepsilon)^4}. \] 
\end{restatable}

\paragraph{Remark.} Somewhat surprisingly, we show that when $G$ is connected (as we have assumed), our algorithm adds at most $(k-1)$ edges.

\subsection{Warm-up: An Exponential Approximation}\label{sec:warmup}

To begin, we will give an algorithm that achieves the goal of adding exactly $k$ edges, but has an approximation factor that is exponential in $k$. The algorithm is based on trying to strengthen~\Cref{lemma:ball-growing} by analyzing what happens when a \emph{single edge} is added. This approach is an analog of the techniques used in~\cite{bilo2012improved} to obtain the best-known approximation factor for the \DR{} problem.  While weaker than our main result (\Cref{thm:ballpacking-theorem}), the analysis will help build intuition and motivate one of our key technical lemmas (\Cref{lemma:splitting}). We defer some of the details to \Cref{appendix:splitting-lemma,appendix:exponential-algs}; there, we also discuss an approach based on the related work of~\cite{li1992minimum}. 

Analogously to~\cite{bilo2012improved}, we say that a vertex subset $X \subseteq V$ is a $\beta$-\emph{independent set} in $G$ if $\proxG{G}{u}{v} < \beta$ for all $u, v \in X$. The following claim lets us bound the effect of a single edge addition.

\begin{restatable}{claim}{independentsetclaim}\label{claim:independent-set-claim}
If there exists a $\beta$-independent set $X$ in a probabilistic graph $G$, then for every $uv \in V^2 \setminus E$, there exists a $3\beta$-independent set of size $|X| - 1$ in $G+uv$. 
\end{restatable}

Before sketching the proof, let us see why~\Cref{claim:independent-set-claim} allows us to obtain an algorithm. By definition, for the optimal $k$-edge augmentation $G^*$ of $G$, there is no $\broadOPT$-independent set of size $2$ in $G^*$. Then by induction (using~\Cref{claim:independent-set-claim}) there is no $(\broadOPT/3^k)$-independent set of size $k+2$ in $G$.
Now, let $c_1, c_2, \ldots, c_{k+1}$ be the centers chosen by furthest point traversal (i.e., the Gonzalez algorithm~\cite{gonzalez1985clustering} for \MplusoneKCENTER{}) in the implied metric of $G$: point $c_1$ is chosen arbitrarily, and for $i=2, 3, \dots, k+1$, $c_i$ is chosen to be $\argmin_{v \in V}\max_{j < i} (\proxG{G}{v}{c_j})$, breaking ties arbitrarily. Since there is no $(\broadOPT/3^k)$-independent set of size $k+2$ in $G$, it follows that for every $v \in V$, there is some $c_i$ such that $\proxG{G}{v}{c_i} \geq \broadOPT/3^k$. Thus, by adding $k$ edges to form a star on the centers $c_1, c_2, \dots, c_{k+1}$ and using the triangle inequality for the implied metric, we obtain the following result.

\begin{restatable}{theorem}{independentsettheorem}\label{thm:independent-set-thm}
There exists a polynomial-time algorithm which produces reach (for $\sourceV = V$) at least $\frac{\broadOPT{}^2\pmin^2}{9^{k}}$ by adding at most $k$ edges. 
\end{restatable}

We will remark more on the bound given by~\Cref{thm:independent-set-thm} at the end of this subsection. Now, we outline the proof of~\Cref{claim:independent-set-claim}. Assume toward a contradiction that the claim is false, so no subset $X' \subseteq V$ (and in particular, no $X' \subseteq X$) of size $|X| - 1$ is a $3\beta$-independent set in $G+uv$.
Then, we can argue that upon adding $uv$, either three distinct vertices of $X$ ``got closer'' or two \emph{pairs} of vertices got closer. For the sake of our outline, we focus on the latter case, specifically that there exist four distinct vertices $x_i, x_j, y_i, y_j \in X$ with $\proxG{G+uv}{x_i}{x_j}$ and $\proxG{G+uv}{y_i}{y_j}$ both $\geq 3\beta$. The other case is similar; see~\Cref{appendix:bilo-tecnique}. Since the vertices formed a $\beta$-independent set in $G$, we can argue that paths from $x_i$ to $x_j$ using the edge $uv$ have contribution at least $2\beta$ in $G+uv$, and furthermore, without loss of generality, the subset $P_u$ of these paths on which $u$ precedes $v$ has contribution at least $2\beta/2 = \beta$. By a similar argument (after possibly relabeling $y_i, y_j$), paths from $y_i$ to $y_j$ using $uv$ in the orientation $u \rightarrow v$ have contribution at least $\beta$ in $G + uv$.

Now the key step in the argument is proving that $\Pr[P_u[x_i, u]] \cdot \Pr[P_u[u, x_j]] \geq \beta$. Once we have this, since all paths in $P_u[u, x_j]$ begin with the edge $uv$, we have that $\Pr[P_u[x_i, u]] \cdot \Pr[P_u[v, x_j]] \geq \beta$.
Moreover, since all edges used by paths in $P_u[x_i, u]$ and $P_u[v, x_j]$ exist in $G$, we may conclude that $\proxG{G}{x_i}{u}\cdot\proxG{G}{v}{x_j} \geq \beta$, and by a similar argument $\proxG{G}{y_i}{u}\cdot\proxG{G}{v}{y_j} \geq \beta$.
Using these inequalities, it follows (see~\Cref{appendix:bilo-tecnique}) that either $\proxG{G}{x_i}{y_i} \geq \beta$ or $\proxG{G}{x_j}{y_j} \geq \beta$, which is a contradiction because $x_i, x_j, y_i, y_j$ are distinct vertices in the $\beta$-independent set $X$.

It remains to prove the key step described above. Note that the analog of this step in the shortest-path context is a triviality:
if there exists a path of length $\ell$ from $x_i$ to $x_j$ via the vertex $u$, then $d(x_i, u) + d(u, x_j) \leq \ell$.
When reasoning about a single path of contribution $\mu$, this lifts to our setting, i.e., we may ``split'' the path at an internal vertex $u$ and observe that $\proxG{G}{i}{u} \cdot \proxG{G}{u}{j} \geq \mu$.
Unfortunately, unlike for shortest path distances, to understand proximity we need to reason about \emph{sets} of paths, so we need stronger techniques.
We show the following.

\begin{restatable}[The Splitting Lemma]{lemma}{splittinglemma}\label{lemma:splitting}
    Let $G = (V, E)$, $i, j, u \in V$. Let $P_u$ be a set of simple paths from $i$ to $j$ which use the vertex $u$. Then $\Pr[P_u] \leq \Pr[P_u[i, u]] \cdot \Pr[P_u[u, j]]$.
\end{restatable}

Essentially, the Splitting Lemma says that we may reason about sets of paths which all use a vertex $u$ in much the same way as we reason about a single path, i.e., we may ``split'' the paths at $u$ and obtain a useful lower bound on the product $\proxG{G}{i}{u} \cdot \proxG{G}{u}{j}$.
To understand why the lemma is non-trivial,
define $\cE_i$ (resp., $\cE_j$) to be the event that one of the paths from $P_u[i, u]$ (resp., $P_u[u, j]$) exists in a graph sampled from $G$ (note that the lemma is true for any set of simple paths $P_u$). Observe that the edges used by paths in $P_u[i, u]$ may not be disjoint from those used by paths in $P_u[u, j]$. In this case, events $\cE_i$ and $\cE_j$ are positively correlated, i.e., $\Pr[ \cE_i | \cE_j ] \ge \Pr[ \cE_i ]$. This implies that $\Pr[ \cE_i \cap \cE_j ] \ge \Pr[\cE_i] \cdot\Pr[\cE_j]$. But unfortunately, this inequality goes in the reverse direction of what we would like, as our goal is to lower-bound the product $\Pr[\cE_i] \cdot\Pr[\cE_j]$. 
To overcome this, the key is to observe that the paths in $P_u$ are simple. This implies that if a path $p \in P_u$ exists in the sampled graph, we can find two vertex-disjoint (not considering $u$) paths $p_i \in P_u[i, u]$ and $p_j \in P_u[u, j]$ that are also in the sampled graph. 

To make this idea more formal, let $\cE_{ij}^*$ be the event that in a sampled graph, there exists $p_i \in P_u[i, u]$ and $p_j \in P_u[u, j]$ that both occur, and moreover that $p_i, p_j$ are edge-disjoint. Let $\cE_u$ be the event that at least one path from $P_u$ exists.
By definition, $\Pr[\cE_u] \leq \Pr[\cE_{ij}^*]$, so it suffices to show that $\Pr[\cE_{ij}^*] \leq \Pr[\cE_i]\cdot\Pr[\cE_j]$.
We do this by applying the van den Berg-Kesten inequality~\cite{BK}, one of the fundamental inequalities in percolation theory~\cite{bollobas2006percolation}. Informally, the inequality is applicable because in a sampled graph satisfying event $\cE_{ij}^*$, the paths $p_i$ and $p_j$ can also be seen as ``disjoint certificates'' for events $\cE_i$ and $\cE_j$. We formalize this argument in~\Cref{appendix:splitting-lemma}.

We conclude by remarking that the exponential dependence on $k$ in~\Cref{thm:independent-set-thm} is caused by the loss in~\Cref{claim:independent-set-claim}. That is, the claim only guarantees the existence of a $3\beta$-independent set after adding a single edge, so inductive application of the claim yields an exponential loss.
A natural question is whether the lossless version of the claim, i.e., replacing $3\beta$ with $\beta$ (or, for example, $\beta (1+1/k)$) in the statement, can be proven. This turns out to be impossible in general; we give a counterexample in~\Cref{appendix:bilo-tecnique-lb}.
A second approach is to try to develop a claim which analyzes the structure of independent sets when batches of edges are added to a graph. 
More formally, we might hope that for some functions $f, g$, if a $\beta$-independent set $X$ exists in $G$, and $f(k)$ edges are added to form $G'$, then a $g(k, \beta)$-independent set of size at least $|X| - f(k)$ exists in $G'$.
As stated,~\Cref{claim:independent-set-claim} corresponds to the functions $f(k) = 1$ and $g(k, \beta) = 3\beta$. However, an algorithm guaranteeing reach $(\broadOPT)^{O(1)}\cdot\poly(k, \pmin)$ could be obtained, for instance, if $f \in \Omega(k)$ and $g \in \poly(k, \beta)$ or if $f \in \Omega(k/\log k)$ and $g \in O(1)$.
Unfortunately, the natural extension of the argument of~\Cref{claim:independent-set-claim} leads to a combinatorial explosion, so we cannot improve on the $\exp(k)$ bound. We discuss this in more detail in~\Cref{appendix:bilo-tecnique-lb}. Thus, to achieve~\Cref{thm:ballpacking-theorem} we will develop a new algorithmic technique.\looseness=-1

\subsection{Polynomial Approximation: Proof of~\Cref{thm:ballpacking-theorem}}\label{sec:ballpacking}

Now we will develop the remaining tools needed to prove~\Cref{thm:ballpacking-theorem}.
The first is a stronger structural result that shows that $V$ can be covered using $2k$ balls of a small \emph{proximity radius} and one set of small \emph{proximity diameter}. Let us now formalize this notation. For any $v \in V$, the \emph{ball of proximity radius} $\proxradius$ is defined as
\[ B_v (\proxradius) := \{u \in V~:~ \proxG{G}{v}{u} \ge \proxradius\}. \]

Note that a ball grows larger as $\proxradius$ decreases. Likewise, we say that a set of vertices $U$ has proximity diameter $\proxdiameter$ if for all $u,v \in U$, we have $\proxG{G}{u}{v} \ge \proxdiameter$. In this notation,~\Cref{cor:2kballs} showed that if $G$ can be augmented using $k$ edges to obtain $G^*$ with $\broadOPT = \broad{G^*}$, then a set of at most $2k$ balls of proximity radius $\broadOPT \cdot \poly(k,\pmax)$ suffice to cover the vertices of $G$. We now show the following result, which improves the radius significantly, at the expense of an additional ball.

\begin{restatable}{theorem}{strengthenedballgrowingtheorem}\label{thm:strengthened-ball-growing-theorem}
Suppose there is a $k$-edge augmentation of a probabilistic graph $G$ that yields $G^*$ with reach $\broadOPT = \broad{G^*}$. Let $\proxdiameter = \frac{\broadOPT}{4k^2}$ and $\proxradius = \sqrt{\proxdiameter}$. Then, we can cover $V$ as $V = \left( \cup_{i=1}^{2k} F_i \right) \cup U$, where $F_i$ are balls of proximity radius $\proxradius$ and $U$ is a set with proximity diameter at least $\proxdiameter$. 
\end{restatable}

The proof proceeds as follows. Let $c_1, c_2, \dots, c_\ell$, with $\ell \le 2k$, be the endpoints of the $k$ edges added to obtain $G^*$. We will use $F_i = B_{c_i} (\proxradius)$. The key step will be to show that the set of vertices not covered by these balls has proximity diameter at least~$\proxdiameter$. To argue this, we will use the Splitting Lemma. The details will be presented below, in~\Cref{sec:proof-of-main-theorem}. 

Next, we present the second tool needed for~\Cref{thm:ballpacking-theorem}, which is a new algorithmic insight: we start by obtaining $2k+1$ balls of radius roughly $\proxdiameter$ that cover $V$. But crucially, the balls will be chosen such that each ball will be \emph{neighboring} one (or more) other balls. This lets us consider an auxiliary graph whose vertices are the centers of the balls, and edges are between centers of neighboring balls. The dominating set in this graph turns out to have size at most $k$, which then leads to an augmentation with at most $(k-1)$ edges and the desired approximation guarantee. 

Let us now formalize this outline. 
First, we assume as input to our algorithm a guess $\beta'$ for the value of $\broadOPT$ such that $\broadOPT \geq \beta' \geq \frac{\broadOPT}{1 + \varepsilon}$. In~\Cref{appendix:ballpacking}, we show how to obtain an arbitrarily good estimate for $\broadOPT$
via a binary search. Here, we proceed as if we know $\beta'$ and we set $\proxdiameter = \frac{\beta'}{4k^2}$ and $\proxradius = \sqrt{\proxdiameter}$. Note that since $\beta' \leq \broadOPT$,~\Cref{thm:strengthened-ball-growing-theorem} is true for these values of $\proxdiameter$ and $\proxradius$.  
Next, for a set of centers $C \subset V$ and a proximity parameter $r$, we define the auxiliary graph $H_r^C = (C, E_H)$ of $G$ as one in which we have a vertex for each $c \in C$, and an edge $c_ic_j$ exists iff $\proxG{G}{c_i}{c_j} \geq r$.
We can now state the algorithm.

\vspace{6pt}

\noindent\begin{minipage}{\linewidth}
    \begin{algorithm}[H]


      Let $v$ be any vertex in $V$. Initialize $C = \{v\}$. \\
      \While{$\exists \ c \in V$ s.t. $\proxG{G}{c}{C} \in [\proxdiameter\pmin, \proxdiameter)$\label{step:pickw}} { 
        $C := C \cup \{c\}$\\ 
      }
      Construct the auxiliary graph $H_r^C$ of $G$ for $r=\proxdiameter\pmin$. \\
      Find a spanning forest $F$ in $H$. \\
      Find a 2-coloring of $F$ and let $D\subset C$ be the set of vertices in the smaller color class. \label{step:picksmallercolorclass} \\
      Pick any vertex $c \in D$ as the center and let $\hat{S}$ be the edges of a star on $D$ centered at $c$. \\
      \KwRet{$\hat{S}$}
      \caption{\label{alg:ballpacking}}
    \end{algorithm}
  \end{minipage}

  \vspace{6pt}

To analyze Algorithm~\ref{alg:ballpacking}, we begin with a simple lemma about properties that hold true for the set of selected centers.

\begin{restatable}{lemma}{ballpackingtightness}\label{claim:ballpacking-tightness}
Let $c_i$ denote the $i^{th}$ vertex added to $C$ by Algorithm~\ref{alg:ballpacking}. At the end of the {\bf while} loop in Line~\ref{step:pickw}, all of the following properties hold true for the set $C$:
\begin{enumerate}[(i)]
    \item $\forall c_i, c_j \in C, i \neq j, \proxG{G}{c_i}{c_j} < \proxdiameter$, \label{claim:ballpacking-farness-claim}
    \item $\forall c \in C$, $\proxG{G}{c}{C \setminus \{c\}} \geq \proxdiameter\pmin$, and \label{claim:ballpacking-closest-center-claim}
    \item $\forall c_i, c_j \in C, i \neq j, B_{c_i} (\proxradius)$ and $B_{c_j}(\proxradius)$ are disjoint. \label{claim:ballpacking-disjoint-claim} 
\end{enumerate}
\end{restatable}

\begin{proof}
Without loss of generality, assume that $c_i$ was added to $C$ before $c_j$. At the time of adding $c_j$, we have $\proxG{G}{c_i}{c_j} < \proxdiameter$, by definition (Line~\ref{step:pickw} of the algorithm). Thus, the first property holds. 
For property~(\ref{claim:ballpacking-closest-center-claim}), note that at the time of adding $c_i$, $\proxG{G}{c_i}{C} \geq \proxdiameter\pmin$. Since the proximity can only reduce after adding more $c_j$, the property follows. 
For property~(\ref{claim:ballpacking-disjoint-claim}), suppose the claim is false. Let $c_i$ and $c_j$ be two centers such that $u \in B_{c_i}(\proxradius) \cap B_{c_j}(\proxradius)$. Then by the triangle inequality in the implied metric of $G$, $\proxG{G}{c_i}{c_j} \geq \proxG{G}{c_i}{u} \cdot \proxG{G}{u}{c_j} \geq \proxradius^2 = \proxdiameter$ which contradicts property~(\ref{claim:ballpacking-farness-claim}).
\end{proof}

Using this, we show that the algorithm adds at most $(2k+1)$ vertices to $C$.

\begin{restatable}{claim}{ballpackingterminatesclaim}\label{claim:ballpacking-terminates-claim}
At the end of the {\bf while} loop in Line~\ref{step:pickw} of Algorithm~\ref{alg:ballpacking}, $|C| \leq 2k+1$.
\end{restatable}
\begin{proof}
Recall that by~\Cref{thm:strengthened-ball-growing-theorem}, we can write $V = \left( \cup_{i=1}^{2k} F_i \right) \cup U$, where $F_i$ are balls of proximity radius $\proxradius$ and $U$ is a set of proximity diameter $\proxdiameter$. By property~(\ref{claim:ballpacking-farness-claim}) of~\Cref{claim:ballpacking-tightness}, the algorithm chooses at most one $c$ from each of the $F_i$, and at most one $c$ from $U$. This implies that $|C| \le 2k+1$.
\end{proof}

Interestingly, it is not yet clear that the {\bf while} loop in Algorithm~\ref{alg:ballpacking} ends up ``covering'' all the vertices of $V$. Specifically, there may not exist any $c$ that satisfies the condition of the {\bf while} loop, but there could be $v \in V$ that have $\proxG{G}{v}{C} < \proxdiameter \pmin$. The following technical lemma shows that as long as $G$ is connected, this cannot happen (in other words, $\proxG{G}{v}{C}$ values change ``smoothly''). 

\begin{restatable}{lemma}{bucketing}\label{lem:bucketing}
Let $G$ be a connected probabilistic graph, $v$ be a vertex in $G$ and $U \subseteq V\setminus \{v\}$. Suppose $r$ is any parameter such that $\proxG{G}{v}{U} < r \leq 1$. Then there exists a $v' \in V$ that satisfies $\proxG{G}{v'}{U} \in [r\pmin, r)$.
\end{restatable}

The proof is deferred to~\Cref{appendix:ballpacking}. Using~\Cref{lem:bucketing}, we obtain the last step in the analysis.

\begin{restatable}{claim}{ballpackingcompactnessclaim}\label{claim:ballpacking-compactness-claim}
At the end of the {\bf while} loop in Line~\ref{step:pickw} of Algorithm~\ref{alg:ballpacking}, every vertex in $G$ has proximity at least $\proxdiameter$ to $C$.
\end{restatable}
\begin{proof}
If the claim is not true, then $\exists v \in V$ such that $\proxG{G}{v}{C} < \proxdiameter$. If $\proxG{G}{v}{C} \in [\proxdiameter\pmin, \proxdiameter)$, then $v$ satisfies the condition in Line~\ref{step:pickw} of the algorithm, which implies the {\bf while} loop could not have ended. So assume $\proxG{G}{v}{C} < \proxdiameter\pmin$. But by~\Cref{lem:bucketing}, setting $r=\proxdiameter$, $\exists v' \in V$ such that $\proxG{G}{v'}{C} \in [\proxdiameter\pmin, \proxdiameter)$. Thus, $v'$ meets the criteria in Line~\ref{step:pickw}, which again contradicts the fact that the {\bf while} loop has ended. 
\end{proof}

\subsubsection{Proofs of the Main Results,~\Cref{thm:ballpacking-theorem,thm:strengthened-ball-growing-theorem}}\label{sec:proof-of-main-theorem}
First we sketch the proof of~\Cref{thm:ballpacking-theorem}, assuming~\Cref{thm:strengthened-ball-growing-theorem} and the preceding analysis. We show that Algorithm~\ref{alg:ballpacking} satisfies the requirements of the theorem. By~\Cref{claim:ballpacking-terminates-claim}, at the end of the {\bf while} loop in Line~\ref{step:pickw} of the algorithm, $|C| \le 2k+1$. In the auxiliary graph $H_r^C$, for all $c_i, c_j \in C, i \neq j$, we place an edge if $\proxG{G}{c_i}{c_j} \geq \proxdiameter\pmin$. By property~(\ref{claim:ballpacking-closest-center-claim}) of~\Cref{claim:ballpacking-tightness}, for every $c_i \in C$ at least one $c_j$ satisfies this condition. In other words, the degree of every vertex in $H_r^C$  is at least one. Thus, the smaller color class $D$ picked in Line~\ref{step:picksmallercolorclass} has at most $k$ vertices and the star centered at $c$ has at most $k-1$ edges. By then using~\Cref{claim:ballpacking-compactness-claim}, property~(\ref{claim:ballpacking-closest-center-claim}) of~\Cref{claim:ballpacking-tightness}, and the triangle inequality in the implied metric of $G$, we conclude that for all $v \in V$, $\proxG{G+\hat{S}}{v}{D} \ge \proxdiameter^2\pmin$. Therefore, the resultant reach is at least
\[ \proxdiameter^4\pmin^4 = \frac{(\broadprime)^4\pmin^{4}}{4^4k^8} \geq \frac{(\broadOPT)^4 \pmin^{4}}{4^4k^8(1 + \varepsilon)^4}. \]

In~\Cref{appendix:ballpacking}, we complete the proof by showing how to repeat Algorithm~\ref{alg:ballpacking} in a binary search fashion to ensure the inequality above holds, i.e., to ensure a good guess for the value of $\broadOPT$.

Now, we conclude this section by proving the key structural result,~\Cref{thm:strengthened-ball-growing-theorem}.
Let $S^*$ be the set of edges added to $G$ to obtain $G^*$ and let $c_1, c_2, \dots, c_{\ell}$ (with $\ell \le 2k$) be the endpoints of the edges in $S^*$. Assign each vertex $v \in V$ to its closest $c_i$, breaking ties arbitrarily. We will say that the set of vertices assigned to $c_i$ constitute the \emph{cluster} of $c_i$ and that $c_i$ is the \emph{center} of its cluster. 
We will say that a vertex $v$ is \emph{good} if $\proxG{G}{v}{c} \geq \proxradius$ where $c$ is the center of the cluster to which $v$ belongs in $G^*$, and \emph{bad} otherwise. 

\begin{restatable}{claim}{badverticesclaim}\label{claim:bad-vertices-claim}
The set of all bad vertices has proximity diameter at least $\proxdiameter$ in $G$.
\end{restatable}

\begin{proof}
Let $u$ and $v$ be any two bad vertices. Consider the set of all paths from $u$ to $v$ in $G^*$ and form equivalence classes based on the leading vertex of the first edge from $S^*$ and the trailing vertex of the last edge from $S^*$ on the path. Since the paths are simple, for every class these vertices must be unique. Thus, there are at most $\ell \cdot (\ell - 1) + 1 < 4k^2$ equivalence classes --- including the empty class (i.e., the class corresponding to no new edges). At least one of these equivalence classes must have contribution $> \frac{\broadOPT}{4k^2} = \proxdiameter$ to the proximity. If this contribution comes from the empty class, we are already done, since it means that $\proxG{G}{u}{v} \ge \proxdiameter$. Thus, let us assume that every path in the class with the largest contribution has at least one edge from $S^*$. Let $c_i$ be the leading vertex of the first edge from $S^*$ and $c_j$ the trailing vertex of the last edge from $S^*$ corresponding to this class. Let $P_{ij}$ represent the set of simple paths (between $u$ and $v$) in this class. Then, applying the Splitting Lemma (\Cref{lemma:splitting}) at $c_i$, we have that 
\[
\proxdiameter = \frac{\broadOPT}{4k^2} < \Pr[P_{ij}] \leq \Pr[P_{ij}[u, c_i]] \cdot \Pr[P_{ij}[c_i, v]]. 
\]
Applying the Splitting Lemma once again, to $P_{ij}[c_i, v]$ at $c_j$, we have that 
\[
\Pr[P_{ij}[c_i, v]] \leq \Pr[P_{ij}[c_i, c_j]] \cdot \Pr[P_{ij}[c_j, v]].
\]
Combining the two inequalities, we get that $\proxdiameter < \Pr[P_{ij}[u, c_i]] \cdot \Pr[P_{ij}[c_i, c_j]] \cdot \Pr[P_{ij}[c_j, v]] \leq \Pr[P_{ij}[u, c_i]] \cdot \Pr[P_{ij}[c_j, v]]$. This implies that  either $\Pr[P_{ij}[u, c_i]] > \proxradius$ or $\Pr[P_{ij}[c_j, v]] > \proxradius$.
By construction, no new edges appear along paths in $P_{ij}[u, c_i]$ nor along paths in $P_{ij}[c_j, v]$, so it follows that either $\proxG{G}{u}{c_i} > \proxradius$ or $\proxG{G}{c_j}{v} > \proxradius$, which is a contradiction since both $u$ and $v$ were assumed to be bad. Thus, $\proxG{G}{u}{v} \geq \proxdiameter$. This completes the proof of the claim.
\end{proof}

Once we have the claim,~\Cref{thm:strengthened-ball-growing-theorem} follows quickly. The set of bad vertices forms the set $U$, and the good vertices are covered by the balls $F_i$ centered at $c_i$. Since $\ell \le 2k$, the theorem follows.

\section{A Linear Approximation for \SRI{}}\label{sec:star-submodular}

Here, we show how to combine the Splitting Lemma, submodular optimization, and an additional existential result (\Cref{thm:star-structure}) to achieve a linear approximation for the most general variant of~\SRI{}.
We defer technical details to~\Cref{appendix:submodularity}.

\begin{restatable}{theorem}{submodalgtheorem}\label{thm:main-klogn-algo}
  Let $\reach{G^*}{\sourceV}$ be the optimum objective value for the \SRI{} problem given a probabilistic graph $G$ and parameter $k$, as in~\Cref{sec:preliminaries}. For any constant $\varepsilon > 0$, there is a polynomial time algorithm that finds a set $S$ of $O(k\log n)$ edge additions such that \[\reach{G+S}{\sourceV} \geq \frac{\reach{G^*}{\sourceV}\pmin^{2+\varepsilon}}{(1+\varepsilon)12k^2}.\]
\end{restatable}

Our idea is to define an appropriate function and use its submodularity to obtain an algorithm. Unfortunately, the natural candidate ---improvement in reach when a set of edges is added--- is not a submodular function (see~\Cref{obs:broadcast-not-submodular} in~\Cref{appendix:submodularity}).
However, when we restrict edge additions to edges out of a ``center'' vertex $u$, submodularity holds for the proximities of $u$ to other vertices.
Specifically, let $E^u$ be $\{uv : v \in V\}$. For a subset $S \subseteq E^u$, define 
$g_v (S) :=  \log \proxG{G+S}{v}{u}$.

\begin{restatable}{lemma}{submodularitylemma}\label{lem:submodularity}
  For any graph $G$, vertices $u, v$, the function $g_v : 2^{E^u} \mapsto \mathbb{R}$ is monotone and submodular.
\end{restatable}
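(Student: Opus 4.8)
The plan is to dispatch monotonicity directly and to reduce submodularity --- through a ``reachability'' potential and a single pivot identity --- to an inequality that can be checked by conditioning on the sampled copy of the original edges of $G$. Monotonicity is immediate: adding edges can only increase the probability that $v$ and $u$ lie in the same component of a sampled graph, so $\proxG{G+S}{v}{u}$, and hence $g_v(S)$, is nondecreasing in $S$. (Connectivity of $G$ gives $\proxG{G}{v}{u} > 0$, so $g_v$ is real-valued everywhere.)

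For submodularity I will use the standard local criterion: it suffices to show, for every $S \subseteq E^u$ and every two distinct edges $uw, uw' \notin S$, that the marginal gain of $uw$ over $S$ is at least its marginal gain over $S \cup \{uw'\}$ (if $uw$ or $uw'$ already lies in $G$ this is trivial, so assume not). The workhorse is the quantity
\[
  \rho_S(W) := \Pr\bigl[\, v \text{ is connected to some vertex of } W \text{ in a graph sampled from } G+S \,\bigr],
\]
defined for $W \subseteq V$ with $u \in W$; note $\rho_S(\{u\}) = \proxG{G+S}{v}{u}$. Because every edge of $E^u$ is incident to $u$, conditioning on whether $uw'$ is sampled --- and noting that sampling $uw'$ merely merges the components of $u$ and $w'$ --- gives the identity $\rho_{S \cup \{uw'\}}(W) = (1-\alpha)\,\rho_S(W) + \alpha\,\rho_S(W \cup \{w'\})$. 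Applying this to numerator and denominator, $g_v(S \cup \{uw\}) - g_v(S)$ equals $\log\bigl((1-\alpha) + \alpha\,\rho_S(\{u,w\})/\rho_S(\{u\})\bigr)$, and the marginal gain over $S \cup \{uw'\}$ is the same expression with $\rho_S$ replaced by $\rho_{S \cup \{uw'\}}$; since $x \mapsto \log((1-\alpha) + \alpha x)$ is increasing, one more use of the identity and clearing of denominators reduce the claim to
\[
  \rho_S(\{u,w\}) \cdot \rho_S(\{u,w'\}) \ \geq\ \rho_S(\{u\}) \cdot \rho_S(\{u,w,w'\}).
\]

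To prove this, I will condition on $H_0$, the random subgraph obtained by sampling only the original edges of $G$. Since all edges of $S$ touch $u$, in the full sample the component of $v$ equals its $H_0$-component $C_v(H_0)$ when $v$ is not connected to $u$, and equals the component of $u$ otherwise. Hence, conditioned on an $H_0$ in which $v$ is not connected to $u$, we have $\rho_S(\{u\}) \mid H_0 = y$ with $y := 1 - (1-\alpha)^n$, where $n$ is the number of edges of $S$ with an endpoint in $C_v(H_0)$; and since a vertex outside $C_v(H_0)$ can be reached from $v$ only through $u$, each of $\rho_S(\{u,w\})$, $\rho_S(\{u,w'\})$, $\rho_S(\{u,w,w'\})$ conditioned on $H_0$ equals $1$ if at least one of its listed targets besides $u$ lies in $C_v(H_0)$ and equals $y$ otherwise. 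When $v$ is connected to $u$ in $H_0$, all four conditional values equal $1$. Taking expectations over $H_0$ after partitioning its outcomes by which of $w, w'$ lie in $C_v(H_0)$, the four $\rho$-values turn into affine expressions $s-b$, $s-c$, $s-b-c-d$, $s$ (for a suitable assignment) in nonnegative parameters: $b$ and $c$ are the sums of $\Pr[H_0](1-\alpha)^n$ over those $H_0$ in which $v$ is not connected to $u$ and exactly one of $w,w'$ lies in $C_v(H_0)$, $d$ is the analogous sum with both in $C_v(H_0)$, and $1 - s$ is the analogous sum with neither; here $s \in [0,1]$ and $b + c + d \leq s$ because $s - b - c - d = \proxG{G+S}{v}{u} \geq 0$. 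The inequality then follows from the one-line identity $(s-b)(s-c) - (s-b-c-d)s = bc + sd \geq 0$.

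The step I expect to be the main obstacle is precisely this last passage. The inequality $\rho_S(\{u,w\})\rho_S(\{u,w'\}) \geq \rho_S(\{u\})\rho_S(\{u,w,w'\})$ holds pointwise in $H_0$ --- each side is then a product of two values drawn from $\{1, y\}$ --- but that does not by itself yield the inequality for the products of the expectations, since the four events are correlated through $H_0$; it is the partitioning-and-bookkeeping that produces the affine forms above which actually carries the argument. As a secondary point, the familiar difficulty of manipulating proximities over overlapping path segments (the same obstacle met in~\Cref{lemma:weak-fundamental-inequality}) is side-stepped here by conditioning on $H_0$, after which the star structure pins down the component of $u$ explicitly.
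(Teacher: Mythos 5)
Your proof is correct, and it reaches the key two-edge inequality by a genuinely different route than the paper, although the broad skeleton (reduce to the two-edge exchange condition; prove it as a cross-ratio inequality of four probabilities; lift to sets by the standard chaining argument) is the same.

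The paper works directly in the sample space of $G+e_1+e_2$: it partitions the simple $v$--$u$ paths into $P_0$ (no new edge), $P_1$ (ends in $e_1$), $P_2$ (ends in $e_2$), observes that every simple path lies in one of these (this is where the star structure enters), sets $\cE_i$ to the event that some path in $P_i$ exists, and then the inequality
\[
\frac{\Pr[\cE_0\cup\cE_1\cup\cE_2]}{\Pr[\cE_0\cup\cE_2]} \le \frac{\Pr[\cE_0\cup\cE_1]}{\Pr[\cE_0]}
\]
falls to a two-line chain after introducing the four atom probabilities $x_a,\dots,x_d$. You instead introduce the reachability potential $\rho_S(W)$, use the pivot identity $\rho_{S\cup\{uw'\}}(W)=(1-\alpha)\rho_S(W)+\alpha\rho_S(W\cup\{w'\})$ to peel off the two candidate edges, and reduce the exchange condition to $\rho_S(\{u,w\})\rho_S(\{u,w'\})\ge\rho_S(\{u\})\rho_S(\{u,w,w'\})$ --- which is in fact algebraically equivalent to the paper's inequality after clearing denominators, as the $(1-\alpha)$-terms cancel leaving an $\alpha^2$ multiple of your inequality. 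The difference is in how that inequality is proved: the paper decomposes by which of the path events $\cE_0,\cE_1,\cE_2$ hold in a joint sample of $G+e_1+e_2$; you condition on the $G$-part $H_0$ of a sample of $G+S$ and exploit the fact that, with all added edges incident to $u$, the conditional connectivity structure collapses to values in $\{1,y(H_0)\}$, after which a four-parameter bookkeeping closes the argument. You are right that the pointwise inequality over $H_0$ does not by itself yield the inequality of products of expectations (the four conditional values are all positively correlated through $H_0$), so the explicit affine decomposition into $s-b$, $s-c$, $s-b-c-d$, $s$ is genuinely needed. Both arguments are clean; the paper's is shorter because it never has to fold the sample space back up, while yours isolates a reusable potential $\rho_S$ and makes the role of the star structure more explicit (components merge only through $u$), side-stepping any reasoning about overlapping path segments as in~\Cref{lemma:weak-fundamental-inequality}.
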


By subtracting a term corresponding to the proximity in $G$ (without any edge additions), $g_v$ lets us measure the ``gain'' in the proximity between $u$ and $v$ provided by adding edges $S$.
In the rest of this section, we will show how to leverage this submodularity to obtain an algorithm which searches for \emph{star solutions}, i.e., solutions in which every added edge is incident on a shared endpoint $u$.
To achieve the approximation guarantee of~\Cref{thm:main-klogn-algo}, we first need to ensure that there exists a near-optimal star solution.
We prove the following in~\Cref{appendix:star-lemma}.

\begin{restatable}[The Star Lemma]{lemma}{starstructuretheorem}\label{thm:star-structure}
  Let $(G = (V, E, \{\alpha_e\}), \sourceV, k)$ be an instance of \SRI{} and let $S \subseteq V^2 \setminus E$ be a solution of size $k$ achieving $\beta = \reach{G+S}{\sourceV}$. Let $V(S)$ be the endpoints of $S$.
  Then the solution $S_{star}$ of size at most $2k - 1$ formed by creating a star on $V(S)$ (with an arbitrary endpoint $u$ chosen as the center) achieves $\reach{G+S_{star}}{\sourceV} \geq \frac{\beta\pmin^2}{12k^2} \vcentcolon = \beta_{star}$.
  Furthermore, for every vertex pair $(i, j) \in \sourceV \times V$ with $\proxG{G}{i}{j} < \beta_{star}$, the contribution in $G + S_{star}$ of paths from $i$ to $j$ using the vertex $u$ is at least $\beta_{star}$.
\end{restatable}

Our algorithm is based on a potential function that captures the proximity between all relevant vertex pairs $(i,j) \in \sourceV \times V$. Before describing it, we introduce an auxiliary function that is defined for one pair $(i,j) \in \sourceV \times V$, a given vertex $\centerv$, a ``target'' proximity value $\beta'$, and a set of edges $T$ all incident to $u$:
\[  \mu (i, j; T, \beta') := - \log \proxG{G+T}{\centerv}{i}  - \log \proxG{G+T}{\centerv}{j} + \log \beta'. \]

Note that by Lemma~\ref{lem:submodularity}, for any $(i,j)$ we have that $\mu(i,j; T, \beta')$ is submodular. The algorithm assumes as parameters a vertex $\centerv$ (we will need to run the algorithm for every choice of $\centerv$), a parameter $\varepsilon > 0$, and a target reach value $\beta'$. The algorithm is as follows:

\vspace{6pt}

\noindent\begin{minipage}{\linewidth}
    \begin{algorithm}[H]
      \SetKwData{Left}{left}
      \SetKwData{This}{this}
      \SetKwData{Up}{up}

      \SetKwFunction{Union}{Union}
      \SetKwFunction{FindCompress}{FindCompress}
      
      Initialize $S\su{0} = \emptyset$, $t=0$ \\
      Define $A = \{(i, j) \in \sourceV \times V: \proxG{G}{i}{j} < \beta' \}$; (call these \emph{active pairs}) \\
      For any $T$, define $\Psi (T) = \sum_{(i,j) \in A} \max \{0, \mu(i,j; T, \beta') \}$ \\
      \While{$\Psi(S\su{t}) > \log (1/\pmin^\varepsilon)$}{
        Find edge $e$ incident to $u$ that minimizes $\Psi(S\su{t-1} \cup \{e\})$ \\
        Increment $t$; define $S\su{t} = S\su{t-1} \cup \{e\}$
      }
      Return $S\su{t}$
      \caption{\label{alg:submod}}
    \end{algorithm}
  \end{minipage}

  \vspace{6pt}

While submodularity will ensure that the drop in potential is significant at every step, it turns out that it is non-trivial to prove that the optimal subset achieves low potential!
This is where we use the Splitting Lemma (\Cref{lemma:splitting}):

\begin{restatable}{lemma}{optisgood}\label{lem:opt-is-good}
Let $u$ be the center of the star solution $S_{star}$ obtained from~\Cref{thm:star-structure} and $\beta_{star}$ be the corresponding reach value. Suppose $(i, j) \in \sourceV \times V$ such that $\proxG{G}{i}{j} < \beta_{star}$. Then $\proxG{G+S_{star}}{i}{u} \cdot \proxG{G+S_{star}}{j}{u} \ge \beta_{star}$.
\end{restatable}
\begin{proof}
  Consider the set of all paths from $i$ to $j$ in $G + S_{star}$. Let $P_u$ be the subset of these paths which use the vertex $u$. By~\Cref{thm:star-structure}, $\Pr[P_u] \geq \beta_{star}$.   
  We now apply the Splitting Lemma (\Cref{lemma:splitting}), and complete the proof by noting that (by definition of proximity) $\proxG{G+S_{star}}{i}{u} \cdot \proxG{G+S_{star}}{j}{u} \geq \Pr[P_u[i, u]]\cdot\Pr[P_u[u, j]]$.
\end{proof}

Lemma~\ref{lem:opt-is-good} immediately implies that for $\Psi$ as defined in Algorithm~\ref{alg:submod}, for the star edges $S_{star}$ and $\beta' \leq \beta_{star}$, we have
$\Psi (S_{star}) = 0$.  This is because for every active pair $(i,j)$, the lemma implies that $\mu(i,j; S_{star}, \beta') \le 0$. We then have the following guarantee for the algorithm, at every step $t$.

\begin{restatable}{lemma}{potentialdeclinelemma}\label{lemma:potential-decline-lemma}
Let $S\su{t}$ be the set of added edges as defined in Algorithm~\ref{alg:submod}. For any $t \ge 1$, we have
$\Psi (S\su{t}) \le \Psi (S\su{t-1}) \left( 1 - \frac{1}{2k} \right)$.
\end{restatable}

\Cref{lemma:potential-decline-lemma} follows from a standard submodularity argument, but some extra care must be taken because of the $\max$ appearing in the definition of $\Psi$. 
The preceding lemmas give us all the tools we need to analyze Algorithm~\ref{alg:submod} and prove~\Cref{thm:main-klogn-algo}.
We again defer the details to~\Cref{appendix:submodularity}.

\section{Lower Bounds}\label{sec:hardness-bi}

In this section we give lower bounds for \BI{} and \RI{} which are special cases of \SRI{}. We note that, prior to our work, even \cclass{NP}-hardness was open~\cite{bashardoust2023reducing}.
We also note that our lower bounds hold even under uniform edge-sampling probabilities. For the rest of this section, we denote this uniform probability by $\alpha$.
Our results are as follows. 

\begin{restatable}{theorem}{broadcasthardness}\label{thm:broadcasthardness}
Let $\broadOPT$ be the optimum objective value for the \BI{} problem given a probabilistic graph $G$ and parameter $k$, as in~\Cref{sec:preliminaries}. For any constants $c' \ge 1$ and $\varepsilon>0$, unless $\cclass{P}=\cclass{NP}$,
there is no polynomial-time algorithm which can guarantee reach at least
$(\broadOPT{})^{\frac{6}{5}-\varepsilon}$ while adding at most $c'k$ edges.
\end{restatable}

\begin{restatable}{theorem}{sshardness}\label{thm:sshardness}
    Let $\reach{G^*}{\sourcev}$ be the optimum objective value for the \RI{} problem given a probabilistic graph $G$, parameter $k$, and source vertex $\sourcev$, as in~\Cref{sec:preliminaries}. For any constants $\varepsilon > 0$ and $c \geq 1$, unless $\cclass{P}=\cclass{NP}$, there is no
    polynomial-time algorithm which can guarantee reach at least $\big(\reach{G^*}{\sourcev})^{\frac{4}{3}-\varepsilon}$ while adding at most $ck$ edges. 
\end{restatable}

Here, we will present the construction and a sketch of the analysis for~\Cref{thm:broadcasthardness}. The full version is in~\Cref{appendix:hardness-bi}.
The construction and analysis for~\Cref{thm:sshardness} are similar, and can be found in~\Cref{appendix:ss-hardness}.

The reduction is from a variant of the \textsc{Set Cover} problem. Specifically, we rely on the following hardness assumption ~\cite{feige1998,feige2010}:

\begin{restatable}{assumption}{hardgapsetcover}\label{asmp:hardgsc}[\gsc{}]
    Let $c \ge 1$ be any constant. Given a collection of $n$ sets $S_1,S_2, \dots,  S_n \subseteq [m]$, it is \cclass{NP}-hard to distinguish between two following cases:
    \begin{itemize}
    \item \textbf{YES:} There are $k$ sets in the collection whose union is $[m]$. 
    \item \textbf{NO:} There exists a $\delta < 1$ such that the union of any $ck$ sets can cover at most $\delta\cdot m$ elements.
    \end{itemize}
    Furthermore, the hardness holds even when $m = \Theta(n)$ and $|S_i| = O(\text{polylog}(n))$.
\end{restatable}

We remark that Assumption~\ref{asmp:hardgsc} likely holds even with $c = (\log n)^{1-o(1)}$. With this stronger assumption, our hardness results can be improved to nearly match our algorithmic bounds. We omit the details. 

\begin{proof}[Proof Sketch of~\Cref{thm:broadcasthardness}] 
Our reduction from \textsc{Gap Set Cover} is as follows.

\vspace{6pt}

\noindent \underline{\textbf{Instance:}}  Given an instance of 
\gsc{} consisting of a collection of $n$ sets $S_1,S_2, \dots,  S_n \subseteq [m]$, we construct a \BI{} instance $(G=(V,E,\{\alpha_e\}),k)$ with $\pmin = \pmax = \alpha $ as follows:

We create a graph $G$ with a \emph{pivot vertex} $p$, vertices $s_i$ corresponding to sets $S_i$ (called \emph{set vertices}) and vertices $e_i$ corresponding to elements $i \in [m]$ (called \emph{element vertices}). Between every pair of set vertices $s_i, s_j$, we add a path of length $l$, where $l$ is an even integer parameter whose value will be specified later. These paths are mutually disjoint, and so there are $\binom{n}{2} (l-1)$ vertices along the paths. We call these \emph{set-set internal vertices}. Next, we add a path of length $l$ between $s_i$ and $e_j$ for all $j \in S_i$. That is, we connect a set vertex $s_i$ to all the element vertices $e_j$ corresponding to elements $j\in S_i$. Once again, these paths are all mutually disjoint. We call the vertices on the paths the \emph{set-element internal vertices}. Finally, we connect the pivot $p$ to each set vertex via mutually disjoint paths of length $l$. We call the internal vertices along these paths \emph{pivot-set internal vertices.}

\begin{figure}[t]\label{fig:hardness-mainpaper}
    \centering
    \includegraphics[width=\textwidth]{../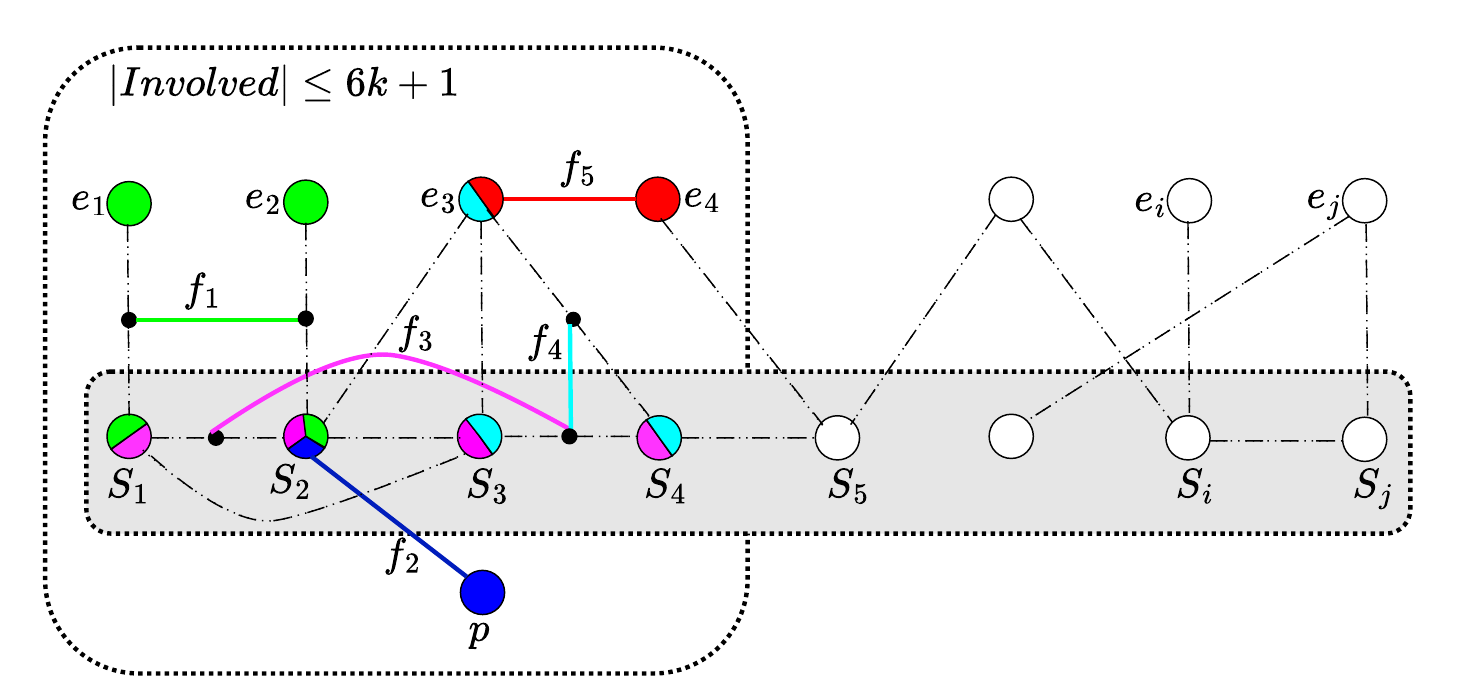}
    \caption{An example instance of \BI{} constructed by our hardness reduction, after adding $k$ edges. We use  $s_i$ for set vertices, $e_i$ for element vertices, and dot-dash lines to indicate the $l$-length paths joining these. Internal vertices which are the endpoint of a new edge are shown as smaller black vertices along these paths. Among the $k$ added edges (solid), we highlight examples of most types of possible edge additions as $f_1, \ldots, f_5$, and indicate using the associated color the set and element vertices they involve. For clarity, several types of edge additions incident to $p$ are omitted, along with the $l$-length paths between $p$ and each set vertex. Finally, $e_i$ and $e_j$ are a pair of element vertices outside the involved set that have a distance of at least $3l$ between them. These are the vertices that we use to prove (in the no-case) that the reach is at most $\alpha^{3l - \varepsilon}$.}    
\end{figure}
Now we argue about the reach that can be achieved after adding $k$ edges to such a graph $G$.

\vspace{6pt}

\noindent \underline{\textbf{YES-case:}} 
Consider the cover of $[m]$ that uses only $k$ sets. Now consider adding $k$ edges between the pivot vertex $p$ and the set vertices corresponding to the sets in the cover.
We claim that between any two vertices in the resulting graph, there is a path of length at most $2.5l + 1$, implying that $\broad{G} \geq \alpha^{2.5l+1}$.
The proof follows by a straightforward case analysis.

\vspace{6pt}

\noindent \underline{\textbf{NO-case:}}
In this case,~\Cref{asmp:hardgsc} says that the union of any $ck$ sets among $S_1, \ldots, S_n$ covers at most $(1 - \Omega(1))m$ elements (for any constant $c \geq 1$).
Now consider adding $r = c'k$ edges, where $r \leq ck/8$. Let $E'$ denote the set of added edges and let
$G'$ be the graph obtained from $G$ by adding the edges in $E'$.

We define a set of \emph{involved} set and element vertices as follows. A set vertex $s_i$ is said to be involved in edge $uv \in E'$
if for $x \in \{u, v\}$ we have (a) $s_i = x$, (b) $x$ is a set-set internal vertex and one of the end-points of the corresponding path
(the one containing $x$) is $s_i$, or (c) $x$ is a set-element internal vertex and the corresponding set vertex is $s_i$.
Analogously, we say that an element vertex $e_i$ is involved in edge $uv \in E'$ if for $x \in \{u, v\}$
either (a) $e_i = x$ or (b) $x$ is a set-element internal vertex and $e_i$ is the corresponding element.
We also generalize the notation slightly and say that a vertex is involved in the set of edges $E'$ if it is
involved in at least one of the edges in $E'$. The main claim due to our choice of parameters is the following.

\begin{claim}\label{claim:no-case-claim}
    There exist element vertices $e_i, e_j$ such that (a) neither is involved in $E'$, (b) none of the sets containing $i, j$
    are involved in $E'$, and (c) no $S_t$ contains both $i$ and $j$.
\end{claim}

The claim follows by noting that every $e \in E'$ involves no more than~$4$ set and~$2$ element vertices.
Thus, there is a set $I$ of no more than $6r$ set vertices such that all the set \emph{and} element vertices
involved in $E'$ are either in $I$ or are covered by $I$. From this, the claim can be proved by using our assumption in
the NO-case, followed by a counting argument.

The remainder of the proof argues that for $e_i, e_j$ satisfying~\Cref{claim:no-case-claim}, by choosing $l$ large enough
we can ensure that $\proxG{G+E'}{e_i}{e_j} < \alpha^{3l - \varepsilon}$, for any $\varepsilon > 0$. This is done by
first contracting the involved set and element vertices (and paths between them) into a ``hub'', followed by counting
paths of different lengths in the resulting graph. The details are somewhat technical; see~\Cref{appendix:hardness-bi}.
Since the reach in the YES-case is at least $\alpha^{2.5l + 1}$, the desired gap follows.

\end{proof}
\section{Generalizing our Algorithms}\label{sec:subset-source}

Note that the main result of~\Cref{sec:star-submodular} (\Cref{thm:main-klogn-algo}) applies to all variants of our problem. Meanwhile, the main result of~\Cref{sec:polynomial-approximation} (\Cref{thm:ballpacking-theorem}) applies only to \BI{}, i.e., the case where $\sourceV = V$.
Our task in this section will be to adapt those techniques to achieve polynomial approximations for \RI{} and for \SRI{}.
We begin with the former, for which we will show nearly the same guarantee as given by~\Cref{thm:ballpacking-theorem}.

\begin{restatable}{theorem}{ssballpackingtheorem}\label{claim:ss-ballpacking-theorem}
    Let $\reach{G^*}{\sourcev}$ be the optimum objective value for \RI{} given a probabilistic graph $G$, parameter $k$, and source vertex $\sourcev$, as in~\Cref{sec:preliminaries}.
    For any constant $\varepsilon > 0$, there exists a polynomial-time algorithm which finds a set $S$ of at most $k$ edge additions such that \[\reach{G+S}{\sourcev} \geq \frac{\reach{G^*}{\sourcev}^4\pmin^2}{(2k+2)^4(1+\varepsilon)^4}.\] 
\end{restatable}

The key insight is the following analogue to~\Cref{thm:strengthened-ball-growing-theorem}.
\begin{restatable}{theorem}{ssballgrowingtheorem}\label{thm:ss-ball-growing-theorem}
Let $G'$ be the graph obtained by augmenting $G$ with a set of $k$ edges $S$, let $V(S)$ be the endpoints of $S$, $\sourcev$ be the source vertex, and let $\reachprime = \reach{G'}{\sourcev}$ represent the reach of $\sourcev$ in $G'$. Let $\proxradius = \frac{\reachprime}{(2k+2)}$. Let $v$ be any vertex in $G$. Then either $\proxG{G}{\sourcev}{v} \geq \proxradius$ or $\proxG{G}{v}{V(S)} \geq \proxradius$.
\end{restatable}
We prove~\Cref{thm:ss-ball-growing-theorem} in~\Cref{appendix:subset-source}.~\Cref{thm:ss-ball-growing-cor} is an immediate consequence.
\begin{restatable}{corollary}{ssballgrowingcor}\label{thm:ss-ball-growing-cor}
Let $G$, $\sourcev$, and $\proxradius$ be as in~\Cref{thm:ss-ball-growing-theorem}. Then one ball of proximity radius $\proxradius$ centered at $\sourcev$ and at most $2k$ additional balls of proximity radius $\proxradius$ cover $G$. 
\end{restatable}

We now adapt Algorithm~\ref{alg:ballpacking} to the single-source setting.
Let $\proxdiameter = \proxradius^2$ where $\proxradius$ is as defined in~\cref{thm:ss-ball-growing-theorem}. 
As in~\Cref{sec:ballpacking}, we must assume as input a guess $\beta'$ for $\reach{G^*}{\sourcev}$, with $\reach{G^*}{\sourcev} \geq \beta' \geq \reach{G^*}{\sourcev}/(1+\varepsilon)$.
An arbitrarily good guess can once again be found via a binary search procedure.
We call Algorithm~\ref{alg:ballpacking} with a single change: we pick the source vertex as the first vertex to add to $C$, i.e., $v=\sourcev$.
Then~\cref{claim:ballpacking-tightness} and~\Cref{claim:ballpacking-compactness-claim} are true for the set $C$ at the end of the \textbf{while} loop of Line~\ref{step:pickw}.

Furthermore, in Line~\ref{step:picksmallercolorclass} of Algorithm~\ref{alg:ballpacking}, if $\sourcev \notin D$, we add $\sourcev$ to $D$ and set $\sourcev$ to be the center of the star. We can now prove the following claim that is analogous to~\cref{claim:ballpacking-terminates-claim}.

\begin{restatable}{claim}{ssballpackingterminatesclaim}\label{claim:ss-ballpacking-terminates-claim}
At the end of the \textbf{while} loop in Line~\ref{step:pickw}, $|C| \leq 2k+1$.
\end{restatable}

\begin{proof}
Suppose the claim is not true. Every vertex in $C$ is either $\sourcev$ or by~\Cref{thm:ss-ball-growing-theorem} has a new edge end point within the ball of proximity radius $\proxradius$ centered at the vertex. Since the balls of radius $\proxradius$ centered at vertices in $C$ are disjoint by property~(\ref{claim:ballpacking-disjoint-claim}) of~\Cref{claim:ballpacking-tightness}, and since there are at most $2k$ new edge endpoints, the claim follows.
\end{proof}

This completes the toolbox needed to prove~\Cref{claim:ss-ballpacking-theorem}. The proof is similar to that of~\Cref{thm:ballpacking-theorem}; see~\Cref{appendix:subset-source}.

We will now leverage this algorithm to get a polynomial approximation for \SRI{}, albeit with a slightly worse approximation factor.
We need only repeat the previous algorithm, choosing an arbitrary vertex in $\sourceV$ as $\sourcev$.

\begin{restatable}{theorem}{sr-ballpackingtheorem}\label{claim:sr-ballpacking-theorem}
    Let $\beta_{G^*}(\sourceV)$ be the optimum objective value for \SRI{} given a probabilistic graph $G$, parameter $k$, and source set $\sourceV$, as in~\Cref{sec:preliminaries}. For any constant $\varepsilon > 0$, there exists a polynomial-time algorithm which finds a set $S$ of at most $k$ edge additions such that
    \[\beta_{G+S}(\sourceV) \geq \frac{{\reach{G^*}{\sourceV}}^8\pmin^{4}}{(2k+2)^8(1+\varepsilon)^8}.\]
\end{restatable}
\begin{proof}
Let $u \in \sourceV$. 
Let $S_u$ be the optimal set of edges for \RI{} on $G$ with source-vertex $\sourcev = u$, and $\reach{G+S_u}{u}$ be the corresponding optimal reach.
Let $S^*$ be the optimal set of edges and $\reach{G+S^*}{\sourceV}$ denote the optimal reach for \SRI{} on $G$ with $\sourceV$ as the source set.
Then $\reach{G+S_u}{u} \geq \reach{G+S^*}{u} \geq \reach{G+S^*}{\sourceV}$. 
By~\Cref{claim:ss-ballpacking-theorem}, using the above algorithm for \RI{} with $u$ as the source vertex, we obtain a set of at most $k$ edges $\hat{S_u}$ such that 
\[\reach{G+\hat{S_u}}{u} \geq \frac{{\reach{G+S_u}{u}}^4\pmin^{2}}{(2k+2)^4(1+\varepsilon)^4} \geq \frac{{\reach{G+S^*}{\sourceV}}^4\pmin^{2}}{(2k+2)^4(1+\varepsilon)^4}.\]
By the triangle inequality, for all $(v, w) \in \sourceV \times V$, we have $\proxG{G+\hat{S_u}}{v}{w} \geq \proxG{G+\hat{S_u}}{v}{u}\cdot\proxG{G+\hat{S_u}}{u}{w}$.
Substituting, we have $\proxG{G+\hat{S_u}}{v}{w} \geq \frac{{\reach{G+S^*}{\sourceV}}^8\pmin^{4}}{(2k+2)^8(1+\varepsilon)^8}$, as desired.
\end{proof}

We conclude by noting that in the special case of~\RI{}, we can improve upon the generic guarantee given by~\Cref{thm:main-klogn-algo}.
The key insight is that, in the single-source context, we can prove a stronger analogue of the Star Lemma (\Cref{thm:star-structure}):

\begin{restatable}{lemma}{ssstarlemma}\label{lemma:ss-star-lemma}[The Single-Source Star Lemma]
    Let $G'$ be the graph obtained by augmenting a probabilistic graph $G$ with a set of $k$ edges $S$, let $V(S)$ be the endpoints of $S$, $\sourcev$ be the source vertex, and let $\beta' = \reach{G'}{\sourcev}$ represent the reach of $\sourcev$ in $G'$.
    Then the solution $S_{star}$ formed by adding (at most $2k$) edges to form a star on $V(S)$ centered at $\sourcev$ achieves $\reach{G+S_{star}}{\sourcev} \geq \frac{\reach{G'}{\sourcev}\pmin}{(2k+2)\pmax} \vcentcolon = \beta_{star}$.
    Furthermore, for every vertex $v \in V$, if we partition the paths from $\sourcev$ to $v$ in $G+S_{star}$ according to the (at most one) edge from $S_{star}$ used, then there exists an equivalence class with contribution at least $\beta_{star}$.
\end{restatable}

We prove~\Cref{lemma:ss-star-lemma} in~\Cref{appendix:constant-witnesses}. Using the lemma, the following may be achieved either via a direct submodularity-based approach (as in~\Cref{sec:star-submodular}) or via a black-box reduction to \HS{}.
We give the details of the latter approach in~\Cref{appendix:constant-witnesses}.

\begin{restatable}{theorem}{ssalg}\label{thm:ss-klogn}
    Let $\reach{G^*}{\sourcev}$ be the optimum objective value for \RI{} given a probabilistic graph $G$, parameter $k$, and source vertex $\sourcev$, as in~\Cref{sec:preliminaries}.
    For any constant $\varepsilon > 0$, there exists a polynomial-time algorithm which finds a set $S$ of $O(k\log n)$ edge additions such that
    \[\beta_{G+S}(\sourcev) \geq \frac{\reach{G^*}{\sourcev}\pmin}{(2k + 2)(1 + \varepsilon)} .\]
\end{restatable}

\section*{Acknowledgements}
This work was supported in part by the National Science Foundation under award IIS-1956286 to Blair D. Sullivan, under Grant \#2127309 to the Computing Research Association for the CIFellows 2021 Project,
and awards CCF-2008688 and CCF-2047288 to Aditya Bhaskara.

\appendix
\section{Proofs of~\Cref{lemma:ball-growing} and~\Cref{thm:k-center-reduction}}\label{appendix:k-center}
\ballgrowinglemma*
\begin{proof}    
    Consider all paths from $u$ to $v$ in $G+S$ which include at least one new edge (an edge in $S$). We partition these paths according to the first
    new edge to appear along the path. For each of the $k$ equivalence classes, we partition again according to the orientation of the first new edge. That is, for the equivalence class corresponding to new edge $wz$, we partition into paths using $wz$ with $w$ being the leading vertex, i.e., the orientation $w \rightarrow z$, and those for which $z$ is the leading vertex, i.e., the orientation $z \rightarrow w$.
    This procedure produces at most $2k$ equivalence classes in total.

    We now bound the contribution of each equivalence class. Consider the equivalence class defined by new edge $wz$, with $w$ being the leading vertex. Every path in this class
    begins with a segment from $u$ to $w$ which exists in $G$, and then extends via the edge $wz$. Consequently, the contribution
    of this set of paths is at most $\proxG{G}{u}{w} \cdot \alpha_{wz} \leq \proxG{G}{u}{V(S)} \cdot \pmax$. The result follows from applying the union bound on each
    of our $2k$ equivalence classes, plus the contribution of paths which use no new edges. 
\end{proof}

We note that the bound given by~\Cref{lemma:ball-growing} is asymptotically tight; see the example given in~\Cref{fig:tightness-of-ball-growing}.
In this case, $u$ is the center of a spider graph with legs of length $l$. Let $v, q_1, q_2, \ldots, q_k$ be the leaves. Assuming uniform sampling probabilities $\pmax = \pmin = \alpha$, it holds that $u$ has proximity $x = \alpha^l$ to each of $v, q_1, q_2, \ldots q_k$. 
Let $S = \{vq_1, vq_2, \ldots, vq_k\}$. With sufficiently small $\alpha$ and $x$, we have
\[
\proxG{G+S}{u}{v} = 1- (1-x\alpha)^{k}(1-x) \approx 1-(1 -kx\alpha)(1-x) =x+\Omega(xk\alpha).
\]

\begin{figure}[H]
    \scalebox{.75}{
    \begin{minipage}[t]{.4\textwidth}
    \begin{tikzpicture}
        \node[circle,fill,inner sep=2pt,label=left:$u$] (u) at (-1,-0.5) {};
        \node[circle,fill,inner sep=2pt] (11) at (-.5,0.30) {};
        \node[circle,fill,inner sep=2pt] (12) at (0,0.30) {};
        \node[circle,fill,inner sep=2pt] (13) at (0.5,0.30) {};
        \node[shape=circle,draw=black] (1n) at (1.85,0.20) {$q_1$};
        \node[circle,fill,inner sep=2pt] (21) at (-.5,-.55) {};
        \node[circle,fill,inner sep=2pt] (22) at (0,-.55) {};
        \node[circle,fill,inner sep=2pt] (23) at (0.5,-.55) {};
        \node[shape=circle,draw=black] (2n) at (1.95,-0.75) {$q_2$};
        \node[circle,fill,inner sep=2pt] (k1) at (-.5,-1.5) {};
        \node[circle,fill,inner sep=2pt] (k2) at (0,-1.5) {};
        \node[circle,fill,inner sep=2pt] (k3) at (0.5,-1.5) {};
        \node[shape=circle,draw=black] (kn) at (1.90,-1.65) {$q_k$};
        \node[circle,fill,inner sep=2pt] (l1) at (-.5,1) {};
        \node[circle,fill,inner sep=2pt] (l2) at (0,1) {};
        \node[circle,fill,inner sep=2pt] (l3) at (0.5,1) {};
        \node[shape=circle,draw=black] (v) at (1.75,1.20) {$v$};
        \draw (11) -- (u);
        \draw (21) -- (u);
        \draw (k1) -- (u);
        \draw (l1) -- (u);
        \draw[dashed] (1n) to[out=90, in=270] (v);
        \draw[dashed] (2n) to[out=30, in=330] (v);
        \draw[dashed] (kn) to[out=0, in=30] (v);

        \draw (11) -- (12) -- (13);
        \draw (21) -- (22) -- (23);
        \draw (k1) -- (k2) -- (k3);
        \draw (l1) -- (l2) -- (l3);
        
        \draw[dotted] (13) -- (1n);
        \draw[dotted] (23) -- (2n);
        \draw[dotted] (k3) -- (kn);
        \draw[dotted] (l3) -- (v);
        \end{tikzpicture}
    
    \end{minipage}
    } 
    \begin{minipage}[b]{.65\textwidth}
        
        \caption{An example demonstrating the tightness of the bound given in~\Cref{lemma:ball-growing}. The vertex $u$ is connected to $v$ and each $q_i$ via disjoint paths of length $l$.
        The edge addition set $S$ consists of (dashed) edges between $v$ and each $q_i$. 
        \bigskip
        }
        \label{fig:tightness-of-ball-growing}
    \end{minipage}
    
\end{figure}

\twokballcorollary*
\begin{proof}
    Let $S^*$ be the optimal set of $k$ edge additions, forming $G^* = G + S^*$ with $\broad{G^*} = \broadOPT$.
    Let $V(S^*)$ be the endpoints of the edges in $S^*$.
    If any vertex has reach at least $\broadOPT/(1+2k\pmax)$ in $G$, then the result is trivially true. Thus, for each vertex $u$, we may assume the existence of a vertex $v$ with $\proxG{G}{u}{v} < \broadOPT/(1 + 2k\pmax)$.
    If $\proxG{G}{u}{V(S^*)} < \broadOPT/(1 + 2k\pmax)$, then by~\Cref{lemma:ball-growing} we have $\proxG{G^*}{u}{v} < \broadOPT$, a contradiction.
    Then for every vertex $u$, $\proxG{G}{u}{V(S^*)} \geq \broadOPT/(1+2k\pmax)$. Then in the implied metric, $V$ is covered by at most $2k$ balls of radius $-\log (\broadOPT/(1+2k\pmax))$ centered at the vertices of $V(S^*)$.
\end{proof}

\kcenterreduction*
\begin{proof}
    Since~\Cref{cor:2kballs} guarantees the existence of $2k$ balls of radius $r = -\log \frac{\broadOPT}{1 + 2k\pmax}$ that cover $V$, 
    we can find a set of $2k$ centers $C = \{c_1, c_2,\dots,c_{2k}\}$ such that every vertex in $V$ is within distance $2r$ of some center in $C$ by using a $2$-approximation algorithm for \MtwoKCENTER{}~\cite{gonzalez1985clustering,hochbaum1985best}.
    We then add the $2k-1$ edges $c_1c_2, c_1c_3, \ldots, c_1c_{2k}$. We claim that the resulting probabilistic graph $G'$ formed by adding these $2k-1$ edges has reach at least
    $\frac{(\reachOPT)^4\pmin^2}{(1+2k\pmax)^4}$. To see this, consider two arbitrary vertices $u$ and $v$. By our construction of $C$, there exist (possibly non-distinct) centers 
    $c_u, c_v$ such that $\proxG{G'}{u}{c_u}, \proxG{G'}{v}{c_v} \geq \frac{(\broadOPT)^2}{(1 + 2k\pmax)^2}$. 
    Due to our edge additions, $\proxG{G'}{c_u}{c_1}, \proxG{G'}{c_v}{c_1} \geq \pmin$. Then we have
    \[
        \proxG{G'}{u}{v} \geq \proxG{G'}{u}{c_u}\cdot\proxG{G'}{c_u}{c_1}\cdot\proxG{G'}{c_1}{c_v}\cdot\proxG{G'}{c_v}{v} \geq \frac{(\reachOPT)^4\pmin^2}{(1+2k\pmax)^4}.
    \]
\end{proof}
\section{Full proof of the Splitting Lemma}\label{appendix:splitting-lemma}

The Splitting Lemma, stated formally in Lemma~\ref{lemma:splitting}, is a fundamental structural result that we invoke repeatedly throughout the paper.
 It captures a crucial aspect of proximity behavior in the context of probabilistic graphs. Specifically, while the metric-like property of proximity implies that $\proxG{G}{i}{j} \ge \proxG{G}{i}{u} \cdot \proxG{G}{j}{u}$ for all $i, j, u \in V$, this direction reflects a standard triangle inequality. 
 A more subtle question arises in the reverse setting: suppose the proximity $\proxG{G}{i}{j}$ is largely explained by paths that pass through an intermediate vertex $u$. What can then be said about $\proxG{G}{i}{u}$ and $\proxG{G}{j}{u}$ individually? 
 The Splitting Lemma addresses this question directly, providing a structural lower bound on the product $\proxG{G}{i}{u} \cdot \proxG{G}{j}{u}$ in terms of the total contribution of paths from $i$ to $j$ that pass through $u$. 
 Our proof leverages correlation inequalities, most notably the van den Berg-Kesten inequality, to handle dependencies induced by overlapping paths and to formally reason about the decomposition of such contributions.

  \begin{lemma}\label{lemma:bk}(Van den Berg-Kesten inequality~\cite{BK})
    Let $G = (V, E, \{\alpha_e\})$ be a probabilistic graph and let $L$ be the set of all possible subgraphs of $G$ formed by including each edge $e \in E$ independently with probability $\alpha_e$. 
    Then, for any two increasing events $A, B \subseteq L$ (i.e., events that are preserved under the addition of edges: if $H \in A$ and $H \subseteq H'$, then $H' \in A$), we have the following inequality
    \[
    \Pr[A \circ B] \leq \Pr[A] \cdot \Pr[B]
    \]
    where $A \circ B$ denotes the \emph{disjoint occurence} of $A$ and $B$ in the sampled graph. That is, there exist disjoint subsets of edges $E_A, E_B \subseteq E$ in the sampled graph such that the occurrence of $A$ is supported by the edges in $E_A$ and the occurrence of $B$ is supported by the edges in $E_B$, and moreover, $E_A \cap E_B = \emptyset$. 
    \end{lemma}

\splittinglemma*
\begin{proof}
  Let $\cE_i$ be the event that at least one path from the set $P_u[i,u]$ exists in the sampled subgraph, and let $\cE_j$ be the event that at least one path from the set $P_u[u,j]$ exists.
  We first observe that both $\cE_i$ and $\cE_j$ are increasing events because if a sampled subgraph $H$ contains a path from $i$ to $u$ (i.e., $H \in \cE_i$), then any supergraph $H' \supseteq H$ also contains that path, and hence $H' \in \cE_i$. The same reasoning applies to $\cE_j$.
  
  Next, we define the event $\cE_i \circ \cE_j$ to be the disjoint occurrence of $\cE_i$ and $\cE_j$: that is, there exist paths $p_i \in P_u[i,u]$ and $p_j \in P_u[u,j]$ in the sampled subgraph such that $p_i$ and $p_j$ are edge-disjoint.
  Since both $\cE_i$ and $\cE_j$ are increasing events, we can apply the van den Berg-Kesten inequality (Lemma~\ref{lemma:bk}) to obtain:
  \[
  \Pr[\cE_i \circ \cE_j] \leq \Pr[\cE_i] \cdot \Pr[\cE_j].
  \]

  We now relate the event $\cE_i \circ \cE_j$ to the existence of a path from $P_u$. 
  Let $p$ be a path in $P_u$. Then $p$ can be uniquely decomposed into two subpaths $p[i,u]$ and $p[u,j]$ consisting of the prefix from $i$ to $u$ and the suffix from $u$ to $j$, respectively. Note that $p[i,u] \in P_u[i,u]$ and $p[u,j] \in P_u[u,j]$. Moreover, since $p$ is simple, $p[i,u]$ and $p[u,j]$ must be edge disjoint (in fact, they must be vertex disjoint except for the vertex $u$). This implies that $p \in \cE_i \circ \cE_j$. Since this is true for every path $p \in P_u$, we have that $P_u \subseteq \cE_i \circ \cE_j$. Thus, 
  \[  
  \Pr[P_u] \leq \Pr[\cE_i \circ \cE_j] \leq \Pr[\cE_i] \cdot \Pr[\cE_j] = \Pr[P_u[i, u]] \cdot \Pr[P_u[u, j]].
  \]

\end{proof}
\section{Combining the Splitting Lemma with existing techniques}\label{appendix:exponential-algs}
In this section, we include a pair of technical exercises in which we explore the ability of the Splitting Lemma to
achieve approximations for \BI{} when combined with existing techniques which were developed for \DR{}.
The main takeaway is that while it is possible to achieve reach guarantees, these results have exponential dependence on $k$, thus motivating the further technical developments of~\Cref{sec:ballpacking}.

\subsection{Proof of~\Cref{thm:independent-set-thm}}\label{appendix:bilo-tecnique}

In a graph $G$, we say that a subset of vertices $X\subseteq V$ is a $\beta$-independent set if for all distinct $u,v \in X$, $\proxG{G}{u}{v} < \beta$. We now recall the Gonzalez clustering algorithm \cite{gonzalez1985clustering} and adapt it to our proximity-based setting. Given a graph $G$ and a value $k \in \mathbb{Z}^+$, the algorithm computes a $k$-clustering $\langle V_1, V_2, \ldots, V_k \rangle$ of $V(G)$ in two steps. First, it selects $k$ vertices $c_1, c_2, \ldots, c_k \in V(G)$ to serve as cluster centers: $c_1$ is chosen arbitrarily, and for each $i = 2, \ldots, k$, the vertex $c_i$ is selected to minimize the maximum proximity to the previously selected centers, i.e., $c_i \in \arg\min_{v \in V(G)} \max_{j \in [1,i-1]} \proxG{G}{v}{c_j}$. In the second step, each vertex $v \in V(G)$ is assigned to the cluster $V_i$ associated with the closest center $c_i$, i.e., $v \in V_i$ if $\proxG{G}{v}{c_i} \geq \proxG{G}{v}{c_j}$ for all $j \in [1,k]$, $j \neq i$. For any resulting cluster $V_i \subseteq V$, we define its proximity radius as: $r_G(V_i) := \max_{u\in V_{i}} \min_{v \in V_{i}} \proxG{G}{u}{v}$.
\begin{lemma}\label{lemma:lemma_independentSet}
 Let $G$ be a graph in which the size of the largest $\beta$-independent set is at most $k$. Then, the Gonzalez algorithm run on $G$ with parameter $k$ returns a clustering ${ V_1, V_2, \ldots, V_k }$ such that the proximity radius of each cluster satisfies $r_G(V_i) \geq \beta$ for all $i \in [1,k]$.
\end{lemma}

\begin{proof}
    To prove the lemma, suppose for contradiction that there exists a vertex $u \in V(G)$ such that $\max_{i \in [1,k]} \proxG{G}{u}{c_i} < \beta$. Then, by construction of the $k$-element set $\{c_1, c_2, \ldots, v_k\}$ via Gonzalez, we also have $\proxG{G}{c_i}{c_j} < \beta$ for all $i \neq j$. Thus, the set $\{c_1, c_2, \ldots, c_k\} \cup \{u\}$ forms a $\beta$-independent set, contradicting the assumption that the maximum cardinality of any $\beta$-independent set in $G$ is at most $k$. Therefore, it must be that $\min_{v \in V(G)} \max_{i \in [1,k]} \proxG{G}{v}{c_i} \geq \beta$. As a consequence, for each cluster $V_i$ centered at $c_i$, the cluster radius satisfies:
    \[
    r_G(V_i) \geq \min_{v \in V_i} \proxG{G}{v}{c_i} \geq \min_{v \in V(G)} \max_{j \in [1,k]} \proxG{G}{v}{c_j} \geq \beta.
    \]
\end{proof}

\independentsetclaim*
\begin{proof}
    Let $X = \{v_1, v_2, \ldots, v_\ell\}$ be a $\beta$-independent set in $G$, i.e., $\max_{a,b \in S} \proxG{G}{a}{b} < \beta$. 
    Assume toward a contradiction that there does not exist any $3\beta$-independent set of size $\ell - 1$ in $G+uv$.
    This implies that no $X' \subseteq X$ of size $\ell - 1$ is a $3\beta$-independent set in $G+uv$.
    It follows that one of two cases occurs. Either there are four distinct vertices in $X$, of which at most two may be in any given $3\beta$-independent set in $G+uv$, or there are three distinct vertices in $X$, no two of which have proximity less than $3\beta$.
    Below, we derive contradictions in both of these cases.

    \textbf{Case 1: Disjoint pairs become close:} Suppose there exist four distinct vertices $v_i, v_{i'}, v_j, v_{j'} \in X$ such that $\proxG{G+uv}{v_i}{v_j} \geq 3\beta$ and $\proxG{G+uv}{v_{i'}}{v_{j'}} \geq 3\beta$. Let 
    $d_1 = \log_\beta \proxG{G}{v_i}{u}$,
    $d_2 = \log_\beta \proxG{G}{v_j}{v}$,
    $d_3 = \log_\beta\proxG{G}{v_{i'}}{u}$, and
    $d_4 = \log_\beta\proxG{G}{v_{j'}}{v}$. 
    See~\Cref{fig:four-vertices-case} for a visual aid. Note that since $v_i, v_j, v_{i'}$, and $v_{j'}$ are distinct vertices in the $\beta$-independent set $X$, we have that $d_1 + d_3 > 1$ and $d_2 + d_4 > 1$.
        
    \begin{figure}[!h]
    \begin{minipage}[t]{.5\textwidth}
    \begin{tikzpicture}[scale=1.2, 
        vertex/.style={circle, draw, minimum size=8mm}, 
        dot/.style={circle, fill=black, inner sep=2pt},
        edge label/.style={midway, fill=white, inner sep=1pt},
        wavy/.style={decorate, decoration={snake, amplitude=1mm, segment length=5mm},dotted,thick}]

        \node[dot, label=above:$u$] (u) at (0,0) {};
        \node[dot, label=above:$v$] (v) at (2,0) {};
    
        \node[dot, label=above:$v_i$] (vi) at (-1.5,1) {};
        \node[dot, label=below:$v_{i'}$] (vip) at (-1.5,-1) {};
        \node[dot, label=above:$v_j$] (vj) at (3.5,1) {};
        \node[dot, label=below:$v_{j'}$] (vjp) at (3.5,-1) {};
    
        \draw[blue] (u) -- node[edge label, below = 1mm , text = black] {$e$} (v);
    
        \draw[wavy] (vi) -- node[edge label, above right] {$\beta^{d_1}$} (u);
        \draw[wavy] (v) -- node[edge label, above left] {$\beta^{d_2}$} (vj);
        \draw[wavy] (vip) -- node[edge label, below right] {$\beta^{d_3}$} (u);
        \draw[wavy] (v) -- node[edge label, below left] {$\beta^{d_4}$} (vjp);
    
    \end{tikzpicture}
    
    \end{minipage}
    \begin{minipage}[b]{.5\textwidth}
        
        \caption{An illustration of Case 1. In the graph $G$, $v_i, v_{i'}, v_j, v_{j'}$ are distinct vertices in a $\beta$-independent set. The proof reasons about proximities after the (blue) edge $e = uv$ is added to $G$.\looseness=-1}
        \label{fig:four-vertices-case}
    \end{minipage}      
    \end{figure}

    Since $\proxG{G}{v_i}{v_j} < \beta$ and $\proxG{G+uv}{v_i}{v_j} \geq 3\beta$, it must be the case that the contribution in $G+uv$ of $v_iv_j$-paths which use the edge $e = uv$ is lower bounded by $2\beta$. The edge $e$ may appear in both orientations, i.e., $u$ before $v$ or $v$ before $u$, on these paths. We partition these paths further according to which of $u$ or $v$ appears first along the path from $v_i$. Without loss of generality, we assume that the subset of these paths on which $u$ precedes $v$ has contribution at least $\beta$. We denote these paths by $P_u$.
    By the Splitting Lemma (\Cref{lemma:splitting}), $\beta \leq \Pr[P_u] \leq \Pr[P_u[v_i, u]]\cdot\Pr[P_u[u, v_j]]$. Applying the Splitting Lemma again to the set of paths $P_u[u, v_j]$, we have that \[\beta \leq \Pr[P_u[v_i, u]]\cdot \Pr[P_u[u, v]] \cdot \Pr[P_u[v, v_j]] \leq \Pr[P_u[v_i, u]]\cdot \Pr[P_u[v, v_j]].\]
    Since the edge $e$ does not appear along paths in $P_u[v_i, u]$, nor along those in $P_u[v, v_j]$, it follows that $\proxG{G}{v_i}{u}\cdot\proxG{G}{v}{v_j} \geq \beta$. Equivalently, $d_1 + d_2 \leq 1$. By a similar argument, up to possibly relabeling $v_{i'}$ and $v_{j'}$, we have that $d_3 + d_4 \leq 1$.

    Now we are ready to derive a contradiction. From $d_1 + d_3 > 1$, we have that $d_1 > 1 - d_3$. By substituting into the inequality $d_1 + d_2 \leq 1$, we obtain $d_2 \leq d_3$. Substituting again into $d_3 + d_4 \leq 1$, we obtain $d_2 + d_4 \leq 1$, but this is a contradiction, since we already have that $d_2 + d_4 > 1$. This concludes the argument.

    \textbf{Case 2: Three vertices become mutually close}: Assume there exist distinct $v_i, v_{i'}, v_{i''} \in X$ such that $\proxG{G+uv}{v_i}{v_{i'}}, \proxG{G+uv}{v_i}{v_{i''}}, \proxG{G+uv}{v_{i'}}{v_{i''}} \geq 3\beta$ in $G+uv$. 
    We define constants $d_1, d_2, d_3, d_4, d_5$, and $d_6$ in a similar fashion as the previous case; see~\Cref{fig:three-vertices-case}. Since $v_i, v_{i'}$, and $v_{i''}$ are distinct vertices in the $\beta$-independent set $X$, we have the following set of inequalities:
    \begin{enumerate}[(i)]
        \item $d_1 + d_3 > 1$,
        \item $d_1 + d_5 > 1$,
        \item $d_2 + d_4 > 1$,
        \item $d_2 + d_6 > 1$,
        \item $d_3 + d_5 > 1$, and
        \item $d_4 + d_6 > 1$.
    \end{enumerate}
    
    \begin{figure}[!h]
    \scalebox{.75}{
    \begin{minipage}[t]{.5\textwidth}
    \begin{tikzpicture}[scale=1.2, 
        vertex/.style={circle, draw, minimum size=8mm}, 
        dot/.style={circle, fill=black, inner sep=2pt},
        edge label/.style={midway, fill=white, inner sep=1pt},
        wavyedge/.style={decorate, decoration={snake, amplitude=1mm, segment length=9mm}, dotted, black,thick}
        ]
    
        \node[dot, label=left:$u$] (u) at (0,0) {};
        \node[dot, label=right:$v$] (v) at (2,0) {};
    
        \node[dot, label=above:$v_i$] (vi) at (-1,2) {};
        \node[dot, label=above:$v_{i'}$] (vip) at (3.5,2) {};
        \node[dot, label=below:$v_{i''}$] (vipp) at (1,-2) {};
    
        \draw[blue] (u) to[bend left=0] node[edge label, below = 1mm,text = black] {$e$} (v);
    
        \draw[wavyedge] (vi) to[bend left=15] node[edge label, left] {$\beta^{d_1}$} (u);
        \draw[wavyedge] (vi) to[bend left=15] node[edge label, above right = 1mm] {$\beta^{d_2}$} (v);
    
        \draw[wavyedge] (vip) to[bend left=-10] node[edge label, below right] {$\beta^{d_3}$} (u);
        \draw[wavyedge] (vip) to[bend left=10] node[edge label, below right] {$\beta^{d_4}$} (v);
    
        \draw[wavyedge] (vipp) to[bend left=10] node[edge label, below right] {$\beta^{d_6}$} (v);
        \draw[wavyedge] (vipp) to[bend right=10] node[edge label, below left] {$\beta^{d_5}$} (u);
    \end{tikzpicture}
    \end{minipage}
    } 
    \begin{minipage}[b]{.6\textwidth}
        
        \caption{An illustration of Case 2. In the graph $G$, $v_i, v_{i'}, v_{i''}$ are distinct vertices in a $\beta$-independent set. The proof reasons about proximities after the (blue) edge $e = uv$ is added to $G$.}
        \label{fig:three-vertices-case}
    \end{minipage}
    
    \end{figure}

    Since $\proxG{G}{v_i}{v_{i'}} < \beta$ and $\proxG{G+uv}{v_i}{v_{i'}} \geq 3\beta$, we have that $v_iv_{i'}$-paths using the edge $uv$ have contribution at least $2\beta$ in $G+uv$. Then, using the Splitting Lemma as in the previous case of this proof, we obtain that:
    \begin{enumerate}[(a)]
        \item Either $d_1 + d_4 \leq 1$ or $d_2 + d_3 \leq 1$, 
        \item Either $d_2 + d_5 \leq 1$ or $d_1 + d_6 \leq 1$, and
        \item Either $d_3 + d_6 \leq 1$ or $d_4 + d_5 \leq 1$.
    \end{enumerate}
    
    From inequalities (i), (ii), and (v), we have that at least two of $d_1, d_3$, and $d_5$ are $> \frac{1}{2}$. 
    Similarly, from inequalities (iii), (iv), and (vi), at least two of $d_2, d_4,$ and $d_6$ are $> \frac{1}{2}$.
    We will show that these conclusions are mutually exclusive.

    If $d_1$ and $d_3$ are both $> \frac{1}{2}$, then from (a) we have that either $d_2$ or $d_4$ is $< \frac{1}{2}$. Then there are two options. 
    If $d_2$ and $d_6$ are $> \frac{1}{2}$, then we use (a) to observe that $d_1 + d_4 \leq 1$ and (b) to observe that $d_2 + d_5 \leq 1$. Thus, if $d_1 \geq d_2$, we substitute into the former to achieve $d_2 + d_4 \leq 1$, and if $d_1 < d_2$ we substitute into the latter to achieve $d_1 + d_5 \leq 1$. These conclusions contradict (iii) and (ii), respectively.
    Thus, if $d_1$ and $d_3$ are both $> \frac{1}{2}$, then it must be the case that $d_4$ and $d_6$ are both $> \frac{1}{2}$.
    Now, we use (a) to observe that $d_2 + d_3 \leq 1$, and (c) to observe that $d_4 + d_5 \leq 1$. If $d_4 \leq d_3$, we substitute into the former to see that $d_2 + d_4 \leq 1$, and if $d_4 > d_3$ we substitute into the latter to see that $d_3 + d_5 \leq 1$. These conclusions contradict (iii) and (v), respectively.

    We have shown that $d_1$ and $d_3$ cannot both be greater than $\frac{1}{2}$. By symmetric arguments, at least one of $d_1, d_5$ and at least one of $d_3, d_5$ is $\leq \frac{1}{2}$. This is a contradiction, since we have already shown that at least two of $d_1, d_3$, and $d_5$ must be greater than $\frac{1}{2}$. Thus, the proof is complete. 
\end{proof}

\independentsettheorem*
    
\begin{proof}
Let $S^*$ be an optimal set of $k$ edges whose addition to $G$ yields a new graph $G^* = G + S^*$ with optimal reach value $\beta^*$, i.e., $\proxG{G^*}{u}{v} \geq \beta^*$ for all $u \neq v$. Now, we claim that no $(\beta^*/3^{k})$-independent set of size $k+2$ exists in $G$. 
We prove this by inducting on $k$. In the base case, for $k=0$, $S^* = \emptyset$, so $G^* = G$. Hence, we have $\proxG{G}{u}{v} \geq \beta^*$ for all $u,v \in V(G)$. Thus, no $\beta^*$-independent set of size $2$ exists in $G$. So, the base case holds. Now, as induction hypothesis, we assume that if a graph $G'$ can be augmented with at most $k-1$ edges to obtain reach at least $\beta^*$, then no $(\beta^*/3^{(k-1)})$-independent set of size $k+1$ exists in $G'$. Now, we prove the inductive step for $k$. For contradiction, we assume that $G$ contains a $(\beta^*/3^{k})$-independent set of size $k+2$. Given our budget $k$ and optimal set of augmenting edges $S^* = {e_1, e_2, ..., e_k}$ (ordered arbitrarily), we add the first edge $e_1$ and obtain the graph $G_1 = G + {e_1}$. Applying Claim \ref{claim:independent-set-claim}, the maximum size of a $3 \cdot (\beta^*/3^{k}) = \beta^*/3^{k-1}$-independent set in $G_1$ is at least $k+1$, but this contradicts the inductive hypothesis. Therefore, no $(\beta^*/3^{k})$-independent set of size $k+2$ exists in $G$.

Thus if we run the Gonzalez algorithm on $G$ with parameter $k+1$ (the desired number of cluster centers), then by Lemma~\ref{lemma:lemma_independentSet}, each cluster has proximity radius at least $\frac{\beta^*}{3^{k}}$. That is, $r_G(V_i) \geq \frac{\beta^*}{3^{k}}$ for all $i \in [k+1]$.

Now, we construct a new probabilistic graph by adding $k$ edges between $c_1$ and the remaining $k$ centers: $c_1c_2, c_1c_3, \ldots, c_1c_{k+1}$. We call this set of edges $\tilde{S}$. We now analyze the reach of the resulting graph. Consider any two arbitrary vertices $v_i \in V_i$ and $v_j \in V_j$. By construction, we have:
$\proxG{G}{v_i}{c_i} \geq \frac{\beta^*}{3^{k}}$, $\proxG{G+\{\tilde{S}\}}{c_i}{c_1} \geq \pmin$, $\proxG{G + \{\tilde{S}\}}{c_1}{c_j} \geq \pmin$, and, $\proxG{G}{c_j}{v_j} \geq \frac{\beta^*}{3^{k}}$. Thus we get that:
\[
\proxG{G + \{\tilde{S}\}}{v_i}{v_j} \geq \left(\frac{\beta^*}{3^{k}}\right) \cdot \pmin \cdot \pmin \cdot \left(\frac{\beta^*}{3^{k}}\right) = \frac{(\beta^*)^2 \pmin^2}{3^{2k}}.
\]

Hence, the resulting graph has reach at least $\frac{(\beta^*)^2\pmin^2}{3^{2k}}$, as claimed.
\end{proof}

\subsection{Necessity of the loss in~\Cref{claim:independent-set-claim}}\label{appendix:bilo-tecnique-lb}
Extending the discussion from~\Cref{sec:warmup}, we now give a simple counterexample that shows that a lossless variant of~\Cref{claim:independent-set-claim} is false.

\begin{lemma}\label{lem:independent-set-counterexample}
There exists a probabilistic graph $G$ with uniform edge probabilities $\alpha$ and a parameter $\beta$ such that (a) there exists a $\beta$-independent set in $G$ with three vertices, and (b) after adding one edge $e$ to $G$, any pair of vertices $u,v$ have $\proxG{G+e}{u}{v} \ge \beta (1+\frac{\alpha}{2})$. 
\end{lemma}

While the lemma falls short of showing a factor $3$ loss, it shows that a fully lossless version of~\Cref{claim:independent-set-claim} cannot hold.

\begin{proof}[Proof of~\Cref{lem:independent-set-counterexample}]
Consider the probabilistic graph in~\Cref{fig:tightness-of-ball-growing}, with sampling probability $\alpha_{uv} = \alpha$ for every possible edge $uv$, and suppose $\alpha$ is a constant $<1/2$. 
\begin{figure}[H]
    \scalebox{.75}{
    \begin{minipage}[t]{.4\textwidth}
        \begin{tikzpicture}[scale=1.2,
            every node/.style={inner sep=0pt},
            vertex/.style={circle,fill=black,inner sep=1.2pt},  
            bigvertex/.style={circle,fill=black,inner sep=2.2pt}, 
            dotstyle/.style={dotted, black, thick},
            solidedge/.style={black, thick},
            wavyedge/.style={decorate, decoration={snake, amplitude=0.5mm, segment length=3mm}, black, thick}
        ]
        
        \node[vertex, label={[label distance = 1mm]below left:$u$}] (u) at (0,0) {};
        
        \coordinate (ur1) at ($(u)+(45:0.5cm)$); 
        \coordinate (ur2) at ($(ur1)+(45:1.5cm)$); 
        \node[vertex,label = {[label distance = 1mm]right:$c_1$}] at (ur1) {};
        \node[bigvertex, label={[label distance = 1mm]right:$a_1$}] at (ur2) {};
        \draw[solidedge] (u) -- (ur1);
        \draw[dashed] (ur1) -- (ur2);
        \node at ($(ur1)!0.5!(ur2)+(0.4,0)$) {$\ell$};
        
        \coordinate (ul1) at ($(u)+(135:0.5cm)$); 
        \coordinate (ul2) at ($(ul1)+(135:1.5cm)$); 
        \node[vertex, label = {[label distance = 1mm]left:$c_2$}] at (ul1) {};
        \node[bigvertex, label={[label distance = 1mm]left:$a_2$}] at (ul2) {};
        \draw[solidedge] (u) -- (ul1);
        \draw[dashed] (ul1) -- (ul2);
        \node at ($(ul1)!0.5!(ul2)+(-0.4,0)$) {$\ell$};
        
        \coordinate (d1) at ($(u)+(270:1.5cm)$);
        \node[bigvertex, label={[label distance = 1mm]below:$a_3$}] at (d1) {};
        \draw[dashed] (u) -- (d1);
        \node at ($(u)!0.5!(d1)+(0.2,-0.3)$) {$\ell$};
        
        \draw[blue] (ul1) -- (ur1);
        \node at ($(ul1)!0.5!(ur1)+(0,0.2)$) {$e$};
        
        \end{tikzpicture}
    
    \end{minipage}
    } 
    \begin{minipage}[b]{.65\textwidth}
        
        \caption{Graph $G$ with a center vertex $u$ connected to three vertices $a_1, a_2, a_3$ via paths of length $\ell+1, \ell+1,\ell$ respectively. $c_1$ and $c_2$ are the first two vertices in the $\ell +1$ length paths that connect $u$ to $a_1$ and $a_2$ respectively. The new edge $e$ in blue is added between $c_1$ and $c_2$.
        \bigskip
        }
        \label{fig:tightness-of-ball-growing}
    \end{minipage}
    
\end{figure}

It is easy to see that the vertices $\{a_1, a_2, a_3\}$ form a $\beta$-independent set, for $\beta := \alpha^{2\ell+1}$. Additionally, we can see that after adding the edge $c_1 c_2$ as depicted, for every pair of vertices $u, v$, the proximity is at least $\alpha^{2\ell+1}(1+\alpha -\alpha^2) > \beta (1+\frac{\alpha}{2})$. 

\end{proof}

Now we consider variants of~\Cref{claim:independent-set-claim} which bound the structure of independent sets when batches of edges are added. Recalling the discussion from~\Cref{sec:warmup}, consider functions $f, g$, and variants of~\Cref{claim:independent-set-claim} of the following form: If there exists a $\beta$-independent set $X$ in $G$ and a set $S$ of $f(k)$ edges are added to $G$, then there exists a $g(k, \beta)$-independent set of size $|X| - f(k)$ in $G+S$.~\Cref{claim:independent-set-claim} corresponds to the functions $f(k) = 1$ and $g(k, \beta) = 3\beta$. We will illustrate why the natural extension of the argument used to prove~\Cref{claim:independent-set-claim} breaks down for $f(k) > 1$.

Let $S$ be any set of $f(k)$ edge additions. As in~\Cref{sec:ballpacking}, consider the auxiliary graph $H^X = (X, E_H)$ of $G + S$ where for $x_i, x_j \in X$, the edge $x_ix_j$ exists in $H^X$ if and only if $\proxG{G+S}{x_i}{x_j} \geq g(k, \beta)$.
To generalize the case analysis of~\Cref{claim:independent-set-claim}, we consider the structure of this auxiliary graph. In particular, the first case in the proof guarantees the existence of a matching of size $f(k) + 1$ in $H^X$. The second case in the proof guarantees a clique of size $f(k) + 2$. 
Consider the first case, and let $x_ix_j$, $y_iy_j$ be two edges of the matching in $H^X$. The first step in the argument is to see that the contribution in $G+S$ of $x_ix_j$-paths which use at least one edge from $S$ must be at least $g(k, \beta) - \beta$. The same can be said of $y_iy_j$-paths using at least one edge from $\beta$.
We want to use the Splitting Lemma (\Cref{lemma:splitting}) as in~\Cref{claim:independent-set-claim}, but crucially we can only apply this lemma at a single vertex at a time. The natural solution, seen several times in this work, is to partition the $x_ix_j$-paths according to the first and last endpoints of edges in $S$ encountered. If $g(k, \beta)$ is large enough, then we may assert that some set of paths in this partition has contribution at least $\beta$. Let $u, v$ be the first and last endpoints of edges in $S$ defining this set of paths. Then, applying the Splitting Lemma, we may conclude that $\proxG{G+S}{x_i}{u} \cdot \proxG{G+S}{v}{x_j} \geq \beta$.
By a similar argument, for appropriately selected $u', v'$, we have that $\proxG{G+S}{y_i}{u'} \cdot \proxG{G+S}{v'}{y_j} \geq \beta$. The key difficulty is that to derive the desired contradiction, i.e., that either $\proxG{G}{x_i}{y_i} \geq \beta$ or $\proxG{G}{x_j}{y_j} \geq \beta$, we need $u = u'$ and $v = v'$. 

In other words, implicitly in the proof of~\Cref{claim:independent-set-claim} we are using the Pigeonhole principle to assert that in the equivalence relation described above, the class with the largest contribution has the \emph{same} $u, v$ for at least two edges $x_ix_j$ and $y_iy_j$ of the matching in $H^X$. When $f(k) = 1$, this is obvious since there is only one equivalence class. Unfortunately, in general there are $\Omega(f^2)$ equivalence classes, but there are only $f(k) + 2 \in O(f)$ edges in the matching. A similar combinatorial explosion prevents us from generalizing the second case in the proof. 

\subsection{A second exponential algorithm}\label{appendix:li-technique}
In this section we include another exponential approximation algorithm for \BI{}. This time, our technique is similar to the \DR{} analysis given by~\cite{li1992minimum}.
The main idea is to strengthen~\Cref{cor:2kballs} by showing an appropriate relationship between $\reachOPT$ and the optimum objective value for \MplusoneKCENTER{} in the implied metric of $G$.
Though our approach is similar to~\cite{li1992minimum}, there are several challenges to overcome since our distance metric does not provide some of the guarantees of the shortest path metric which are crucial in their approach. In particular, while a given edge $uv$ may participate in shortest paths from a vertex $x$ in at most one orientation, i.e., $u \rightarrow v$ or $v \rightarrow u$, in our setting an edge may contribute to proximity in both orientations. We show how to overcome this, but with an exponential dependence on $k$.

We now describe our algorithm.
Let $E^*$ be the optimal set of edges to add to $G$ for \BI{}, $G^*=(V,E \cup E^*,\alpha)$, $\broadOPT$ be the reach of $G^*$, and $D^*=-\log \broadOPT{}$. We use the algorithms of Gonzalez~\cite{gonzalez1985clustering} or Hochbaum and Shmoys~\cite{hochbaum1985best} to obtain a 2-approximation for the \MKCENTER{} instance on the implied metric $(V, \metricfunc)$ of $G$ with parameter $k+1$. If we let $M_{k+1}$ be the optimal radius of the \MKCENTER{} instance and $\mu_{k+1}$ be the value returned by the algorithm then $\mu_{k+1} \leq 2M_{k+1}$. Let $X \subseteq V$ be the set of~$k+1$ vertices returned by the 2-approximation algorithm. Using the vertices of $X$ add a star in $G$ (with the center chosen arbitrarily from $X$) and let $G'$ be this new graph. Thus, we added at most $k$ new edges to $G$ and we will show that the reach of $G'$ (call it $\beta'$ and let $D'=-\log \beta'$) is at least $\frac{\broadOPT{}^4\pmin^2}{16^{k+1}}$.

For every vertex $s$ in $G'$ (and also in $G$), at least one of the centers from $X$ is within distance $\mu_{k+1}$ in the implied metric of $G$, and therefore also in the implied metric of $G'$. Thus, in the implied metric of $G'$, starting from any vertex $s$, by reaching the nearest center in distance at most $\mu_{k+1}$, then taking one edge($\alpha_e \ge \pmin$) to reach the root of the star we added, taking one more edge to the center closest to the destination vertex $t$, and then traveling a distance of at most $\mu_{k+1}$ again to reach $t$ -- in at most $2\mu_{k+1} - 2 \log \alpha$ distance, one can reach any vertex in the graph starting from any vertex. Thus, 

\begin{inequality}\label{ineqD}
    D^* \leq D' \leq 2\mu_{k+1} - 2\log \pmin \leq 2(2M_{k+1}) - 2\log\pmin \leq 4M_{k+1} - 2\log\pmin
\end{inequality}

We will now show that $M_{k+1}$ is upper-bounded by $D^* + k + 1$. Combined with~\cref{ineqD} this allows us to bound the approximation factor. 

To show that $M_{k+1} \leq D^* + k + 1$, we will show that $G$ is a feasible instance of \MKCENTER{} for parameter value $k+1$ and distance to center at most $D^* + k + 1$. We pick the centers in the following way: we first pick an arbitrary vertex $x$ as a center. Consider the set of at most $2k$ vertices that are the end points of $E^*$. For every new edge $(i,j)$ in $G^*$, if $\proxG{G^*\setminus (i,j)}{x}{i} \geq \proxG{G^*\setminus (i,j)}{x}{j}$ then we pick $j$ as a center. Otherwise, we pick $i$ as a center. In other words, we pick as a center whichever of $i$ and $j$ is farther from $x$ in the implied metric of $G^*\setminus (i,j)$. In this way, we get a set of at most $k+1$ centers. Call this set $\hat{X}$.
We now prove that $\hat{X}$ is a feasible solution.

\begin{restatable}{lemma}{singlecriteriaclosecenter}\label{clm:single-criteria-close-center}
    For every vertex $w \in V$ at least one of the vertices in $\hat{X}$ is within distance $D^*+k+1$ of $w$ in the implied metric of $G$.
\end{restatable}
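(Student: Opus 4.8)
The plan is to fix an arbitrary vertex $w \in V$ and exhibit a center in $\hat{X}$ within distance $D^*+k+1$ of $w$ in the implied metric of $G$. First I would work in $G^*$, where we know $\broad{G^*} \geq \broadOPT$, hence every vertex — in particular the seed center $x$ — has proximity at least $\broadOPT$ to $w$, i.e. distance at most $D^*$ to $w$ in the implied metric of $G^*$. The catch is that $G$ has fewer edges than $G^*$, so distances in the implied metric of $G$ are only larger; the whole point of the $+k+1$ slack is to pay for the $k$ missing edges plus a union-bound-style loss. So the heart of the argument is: starting from $w$, follow a high-proximity path to $x$ in $G^*$; each time this route would use a new edge $(i,j) \in E^*$, we instead ``stop'' at whichever endpoint is the designated center for that edge, and charge the detour.

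The key technical step is a path-surgery / equivalence-class argument of exactly the flavor used in \Cref{lemma:ball-growing} and \Cref{lemma:weak-fundamental-inequality}. Consider the set $P_{wx}$ of all $w$–$x$ paths in $G^*$; its contribution is $\proxG{G^*}{w}{x} \geq \broadOPT$. Partition these paths by the \emph{first} new edge they use (with a further split on which endpoint is the leading vertex), giving at most $2k$ classes, plus one class of paths using no new edge. By the union bound, some class has contribution at least $\broadOPT / (2k+1)$. If that class uses no new edge, then $\proxG{G}{w}{x} \geq \broadOPT/(2k+1)$, giving distance at most $D^* + \log(2k+1) \leq D^*+k+1$ to the center $x$, and we are done. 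Otherwise the class is associated with a new edge $(i,j)$, say with leading vertex $i$; every path in this class begins with a $w$–$i$ segment lying entirely in $G$ followed by the edge $ij$ and then an arbitrary continuation. So the $w$–$i$ prefixes alone have contribution at least $\broadOPT/(2k+1)$, whence $\proxG{G}{w}{i} \geq \broadOPT/(2k+1)$. We would like to conclude that $i$ (or $j$) is a center; the definition picks whichever of $i,j$ is \emph{farther} from $x$ in the implied metric of $G^* \setminus (i,j)$. If $i$ itself is the chosen center, done as above. The remaining case is that $j$ is the center and $i$ is the nearer-to-$x$ endpoint; then I would argue $\proxG{G}{w}{j} \geq \proxG{G}{w}{i}\cdot\alpha \geq \broadOPT\alpha/(2k+1)$ is too weak by an $\alpha$ factor, so instead we iterate: peel the edge $ij$ off and recurse on the residual graph, or more cleanly, bound $\proxG{G^*\setminus(i,j)}{x}{j} \geq \proxG{G^*\setminus(i,j)}{x}{i}$, and relate $\proxG{G^*\setminus(i,j)}{x}{i}$ back down to $G$ by induction on the number of new edges, each step costing at most an additive $1$ in log-distance and a union-bound factor absorbed into the budget.

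Concretely, I would set up an induction on $k$ (equivalently on the number of new edges remaining along the route). The base case $k=0$ is immediate since then $G=G^*$ and $\hat X = \{x\}$ with $\proxG{G}{x}{w} \geq \broadOPT$. For the inductive step, split $P_{wx}$ in $G^*$ as above; in the ``no new edge'' branch we finish directly, and in the branch associated with new edge $(i,j)$ we have $\proxG{G}{w}{q} \geq \broadOPT/(2k+1)$ for one endpoint $q$ of the edge (the leading vertex of that first new edge in this class). If $q$ is the chosen center for $(i,j)$ we are done with a loss of only $\log(2k+1) \le k+1$. If not, $q$ is the endpoint \emph{nearer} to $x$ in $G^*\setminus(i,j)$, and we restart the argument from $q$ toward $x$ inside $G^* \setminus (i,j)$, an instance with one fewer new edge; by induction some center in $\hat X$ lies within distance $D^* + k$ of $q$ in the implied metric of $G \setminus (i,j)$, hence of $G$, and combining with $\metricfunc(w,q) \le \log(2k+1)$ via the triangle inequality gives a center within $D^* + k + \log(2k+1)$ — here I would tune the induction so the accumulated additive loss telescopes to at most $k+1$ rather than $k\log k$, choosing the split to lose $1$ per new edge (e.g.\ splitting only on the first new edge and not the leading-vertex side costs a factor $\le 2$, i.e.\ additive $1$, per level).

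The main obstacle I expect is precisely this bookkeeping: the naive $2k+1$-way union bound costs $\log(2k+1)$ per peeled edge, which over $k$ edges would blow up to $k \log k$ rather than the claimed $k+1$. The fix is to be careful that the recursion peels \emph{one} new edge per level while only paying an additive $O(1)$ — and to verify that the ``farther endpoint'' rule in the definition of $\hat X$ is exactly what makes the nearer endpoint a valid place to continue the route toward $x$ without ever increasing the residual $x$-distance. Getting the constant to land at $+k+1$ rather than $+O(k)$ may require splitting the path set so that each level loses distance at most $1$ (a binary leading/trailing choice, factor $2$) and absorbing the ``which-edge'' choice differently — e.g.\ by noting that along a fixed simple path the new edges appear in a fixed order, so we never need to branch over \emph{which} of the $k$ edges comes first once we condition on the path. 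That observation is likely the crux that turns the additive loss from $\log k$ per edge into $1$ per edge.
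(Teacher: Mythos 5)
Your high-level plan does capture several of the paper's real ingredients: partitioning $P_{wx}$ in $G^*$ by the first new (directed) edge, reading off that the $w$--$q$ prefix of such a class lies entirely in $G$ (so $\proxG{G}{w}{q}$ is lower-bounded by the class's contribution), invoking the ``farther from $x$ in $G^*\setminus(i,j)$'' tie-breaking rule in $\hat X$, and correctly diagnosing that a naive $\log(2k+1)$-per-edge union bound accumulates to $\Theta(k\log k)$ rather than $k+1$. But the mechanism you sketch to fix the constant is not an argument, and more importantly the structure of your recursion departs from the paper's in a way that leaves a genuine gap.

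The gap is that you never actually invoke \Cref{lemma:weak-fundamental-inequality}; as a result, your plan takes a ``pick the heaviest class'' union-bound step and then restarts from a \emph{new} base vertex $q$ inside $G^*\setminus(i,j)$. There are two problems with this. First, the recursion is not well-founded as stated: removing an edge from $G^*$ can drop proximity, so there is no a priori lower bound on $\proxG{G^*\setminus(i,j)}{q}{x}$ to seed the next level. (You can salvage a bound $\proxG{G^*\setminus(i,j)}{i}{x}\ge\broadOPT/(1+\alpha)$ by a union bound over paths through $(i,j)$ combined with the farther-endpoint rule, but that is yet another per-round constant on top of the $\Theta(k)$-way class selection, and the product still blows up.) Second, your proposed repair of the bookkeeping --- ``along a fixed simple path the new edges appear in a fixed order, so we never need to branch over which of the $k$ edges comes first'' --- is a statement about individual paths, not about probability mass; the union bound is over the $\Theta(k)$ disjoint classes of paths, and conditioning on a single path doesn't make that branching factor disappear. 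You acknowledge the obstacle but the fix you gesture at does not resolve it.

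What the paper actually does is structurally different on both counts: it never restarts from a new vertex, and it never takes a max over the $2k+1$ classes. It processes new edges one at a time, and for each edge $\{u,v\}$ with center $v$ it argues by \emph{contradiction}, via \Cref{lemma:weak-fundamental-inequality}, that if the center-first class $C_{vu}$ has contribution $\le\broadOPT/2^{k+1}$, then deleting \emph{both} classes $C_{uv},C_{vu}$ from the remaining path set loses at most half the mass. The contradiction runs: if the residue were small, $\Pr[C_{uv}]$ would be large; \Cref{lemma:weak-fundamental-inequality} (applied with splitting vertex $u$) gives $\Pr[C_{uv}[w,u]]\cdot\alpha\cdot\Pr[C_{uv}[v,x]]\ge\beta/2$; these factors live in $G^*\setminus(u,v)$, so the triangle inequality together with the farther-endpoint rule $\proxG{G^*\setminus(u,v)}{v}{x}\le\proxG{G^*\setminus(u,v)}{u}{x}$ manufactures a lot of $w$--$x$ probability mass that avoids $(u,v)$, contradicting the assumed smallness of the residue. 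A multiplicative $1/2$ per edge is exactly $+1$ in log-distance per edge, so after all $k$ edges the ``no new edge'' class retains $\ge\broadOPT/2^{k+1}$ and thus $\proxG{G}{w}{x}\ge\broadOPT/2^{k+1}$, i.e., the seed $x\in\hat X$ is within $D^*+k+1$. The clean constant is delivered by this halving-by-contradiction, driven by \Cref{lemma:weak-fundamental-inequality}, not by a cleverer union bound --- and that lemma, which you mention only in passing as ``flavor,'' is precisely the ingredient your proposal is missing.
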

\begin{proof}
    If $w$ is a center, the claim is trivially true. So, assume $w$ is not a center. Let $x$ be an arbitrary vertex, and let $P_{G^*}(w,x)$ be the set of all simple paths from $w$ to $x$ in $G^*$. Call the edges in $E^*$ \emph{new} edges. Consider the partition of $P_{G^*}(w,x)$ into $2k+1$ equivalence classes based on the first new (directed) edge they encounter going from $w$ to $x$. Let $C_{vu}$ represent the class of simple $(w,x)$ paths such that $(u,v)$ is the first new edge they encounter and $v$ appears before $u$ on these paths. Define $C_{uv}$ similarly. Suppose in the new edge $(u,v)$, $v$ is the center. Then, we will say that $C_{vu}$ and $C_{uv}$ are the \emph{center-first} and the \emph{center-second} classes, respectively. Moreover, since $v$ is a center, $\proxG{G^*\setminus(u,v)}{x}{v} \leq \proxG{G^*\setminus(u,v)}{x}{u}$. 

    If the contribution of the center-first class $C_{vu}$ is at least $\frac{\broadOPT{}}{2^{k+1}}$ then $\proxG{G}{w}{v} \geq \frac{\broadOPT{}}{2^{k+1}}$ and the claim is true. So, assume $\Pr[C_{vu}] \leq \frac{\broadOPT{}}{2^{k+1}}$. Then,
    \[
        \Pr[P_{G^*}(w,x) \setminus (C_{uv} \cup C_{vu})] + \Pr[C_{uv}] \geq \broadOPT{} - \frac{\broadOPT}{2^{k+1}}
    \]

    We now claim that the first summand accounts for most of this probability mass. 
    
    \begin{claim}
        $\Pr[P_{G^*}(w,x) \setminus (C_{uv} \cup C_{vu})] \geq \frac{1}{2}(\broadOPT{} - \frac{\broadOPT}{2^{k+1}})$.
    \end{claim}
    \begin{proof}[Proof of claim.]
        Suppose otherwise. Then the Splitting Lemma~(\Cref{lemma:splitting}) gives us that 
        \[ \Pr[C_{uv}[w, u]]\cdot\Pr[C_{uv}[u, x]] = \Pr[C_{uv}[w, u]]\cdot\alpha_{(u,v)}\cdot\Pr[C_{uv}[v, x]] \geq \frac{1}{2}(\broadOPT{} - \frac{\broadOPT}{2^{k+1}}). \]
        By noting that $\proxG{G^*\setminus (u, v)}{w}{u} \geq \Pr[C_{uv}[w, u]]$ and $\proxG{G^*\setminus (u, v)}{v}{x} \geq \Pr[C_{uv}[v, x]]$, we observe that
        \[
            \proxG{G^*\setminus (u, v)}{w}{u} \cdot \proxG{G^*\setminus (u, v)}{v}{x} \geq \frac{1}{2}(\broadOPT{} - \frac{\broadOPT}{2^{k+1}}).
        \]
        We can now manipulate definitions to obtain a contradiction.
        \begin{align*}
        \begin{aligned}
        &\Pr[P_{G^*}(w,x) \setminus (C_{uv} \cup C_{vu})] \\
        &\geq \Pr[P_{G^* \setminus (u,v)}(w,x)] \\
        &=\proxG{G^* \setminus (u,v)}{w}{x} \\
        &\geq \proxG{G^* \setminus (u,v)}{w}{u} \cdot\proxG{G^* \setminus (u,v)}{u}{x} \\
        &\geq \proxG{G^* \setminus (u,v)}{w}{u} \cdot\proxG{G^* \setminus (u,v)}{v}{x} \\
        &\geq (1/2)(\broadOPT{} - \broadOPT/2^{k+1}) \\
        \end{aligned}
        \begin{aligned}
            &\\
            \quad&(\text{\hfill since }P_{G^* \setminus (u,v)}(w,x) \subseteq P_{G^*}(w,x) \setminus (C_{uv} \cup C_{vu})) \\
            &(\text{by definition of proximity}) \\
            &(\text{by triangle inequality of the implied metric}) \\
            &(\text{since } \proxG{G^* \setminus (u,v)}{v}{x} \leq \proxG{G^* \setminus (u,v)}{u}{x}) \\
            &(\text{by our argument above}), \\
        \end{aligned}
    \end{align*}
    but we began this proof by supposing that $\Pr[P_{G^*}(w,x) \setminus (C_{uv} \cup C_{vu})] < \frac{1}{2}(\broadOPT{} - \frac{\broadOPT}{2^{k+1}})$.

    \end{proof}

    We repeat the same analysis by looking at the now remaining $2k-1$ classes. If $(u',v')$ is another new edge and the center-first class of $(u',v')$ has contribution at most $\frac{\broadOPT}{2^{k+1}}$ then the remaining $2k-3$ classes must have contribution at least 
    \[ \frac{1}{2}(\frac{\broadOPT{}}{2}(1-\frac{1}{2^{k}})-\frac{\broadOPT{}}{2^{k+1}})=\frac{\broadOPT{}}{4}(1-\frac{1}{2^{k}}-\frac{1}{2^{k}}) =\frac{\broadOPT{}}{4}(1-\frac{1}{2^{k-1}}). \]
    Thus, if after $t$ rounds the contribution of the remaining $2(k-t)+1$ classes is $\frac{\broadOPT{}}{2^t}(1-\frac{1}{2^{k-t+1}})$ then after 1 more round the contribution of the remaining $2(k-t)-1$ classes will be at least $\frac{\broadOPT{}}{2^{t+1}}(1-\frac{1}{2^{k-(t+1)+1}})$.

    Therefore, if the center-first classes of none of the $k$ new edges have contribution at least $\frac{\broadOPT{}}{2^{k+1}}$ in $G^*$ then the ``empty class'' (corresponding to paths that have no new edge) must have contribution $\geq \frac{\broadOPT{}}{2^k}(1-\frac{1}{2}) \geq \frac{\broadOPT{}}{2^{k+1}}$ in $G^*$ (and hence $G$). This completes the proof. 
\end{proof}

Putting it all together, we get the main result.

\begin{restatable}{theorem}{singlecriteriaapprox}\label{clm:single-criteria-approx}
    There exists a polynomial-time algorithm for~\BI{} which produces reach (for $\sourceV = V$) at least $\frac{(\broadOPT)^4\pmin^2}{16^{k+1}}$ by adding at most $k$ edges.
\end{restatable}

\section{Omitted details from the proof of~\Cref{thm:ballpacking-theorem}}\label{appendix:ballpacking}

\bucketing*
\begin{proof}
Let $u \in U$. Consider any $v$ to $u$ path. Since $G$ is connected, such a path always exists. Let $x$ be the first vertex on this path such that $\proxG{G}{x}{U} \geq r$ ($x$ can be $u$). Let $w$ be the vertex just before $x$ on this path ($w$ can be $v$). Then, by our choices of $x$ and $w$, $\proxG{G}{w}{U} < r$. Moreover, by the triangle inequality of the implied metric of $G$, $\proxG{G}{w}{U} \geq \pmin\proxG{G}{x}{U} \geq \pmin r$. Thus, $\proxG{G}{w}{U} \in [r\pmin, r)$. Setting $v'=w$, the claim is true.
\end{proof}

\ballpackingtheorem*
\begin{proof}[Full Proof]
Let $S$ be the set of optimal edges. We assume as input to our algorithm
a guess $\beta'$ for the value of $\broadOPT$ such that $\beta' \geq \frac{\broadOPT}{1 + \varepsilon}$. We set $\proxradius$ and $\proxdiameter$ accordingly. Note that for a guess $\beta' \leq \broadOPT$,~\Cref{thm:strengthened-ball-growing-theorem} is true for $\beta'$ (if the theorem happens to hold for some guess $\beta' > \broadOPT$, this will only improve the result of our algorithm). 

By~\Cref{claim:ballpacking-terminates-claim}, at the end of the {\bf while} loop in Line~\ref{step:pickw} of the algorithm, $|C| \le 2k+1$. In the auxiliary graph $H_r^C$, for all $c_i, c_j \in C, i \neq j$, we place an edge if $\proxG{G}{c_i}{c_j} \geq \proxdiameter\pmin$. By property~(\ref{claim:ballpacking-closest-center-claim}) of~\Cref{claim:ballpacking-tightness}, for every $c_i \in C$ at least one $c_j$ satisfies this condition. In other words, the degree of every vertex in $H_r^C$  is at least one. Thus, the smaller color class $D$ picked in Line~\ref{step:picksmallercolorclass} has at most $k$ vertices. Thus, $D$ has at most $k$ vertices and the star centered at $c$ has at most $k-1$ edges. By then using~\Cref{claim:ballpacking-compactness-claim}, property~(\ref{claim:ballpacking-closest-center-claim}) of~\Cref{claim:ballpacking-tightness}, and the triangle inequality in the implied metric of $G$, we conclude that for all $v \in V$, $\proxG{G+\hat{S}}{v}{D} \ge \proxdiameter^2\pmin$. Therefore, the resultant reach is at least
\[ \proxdiameter^4\pmin^4 = \frac{(\broadprime)^4\pmin^{4}}{4^4k^8} \geq \frac{(\broadOPT)^4 \pmin^{4}}{4^4k^8(1 + \varepsilon)^4}. \]

Finding the set $C$ and building the auxiliary graph can be done in polynomial time. A spanning forest of $H_r^C$ can be found in polynomial time using one of several standard algorithms, for eg., Kruskal's algorithm~\cite{kleinberg2005algorithm}. Furthermore, a forest can be 2-colored in polynomial time~\cite{cormen2009introduction} and the set $\hat{S}$ can also be found in polynomial time. Thus, the algorithm runs in polynomial time. 
 
It remains to show how we can estimate $\broadOPT$. We will do this by mimicking the technique of Demaine and Zadimoghaddam~\cite{demaine2010minimizing}. In the following, let $t$ denote the number of vertices added to $C$ by the algorithm. We note that $\broadOPT \leq 1$, so $\frac{\broadOPT}{\broad{G}} \leq \frac{1}{\broad{G}}$. Then for any $\varepsilon > 0$, there exists some integer $0 \leq i \leq \log_{1+\varepsilon} \frac{1}{\broad{G}}$ with the property that $\broad{G}(1 + \varepsilon)^i \leq \broadOPT \leq \broad{G}(1+\varepsilon)^{i+1}$. We conduct a binary search of integers in the interval $[0, \log_{1+\varepsilon}\frac{1}{\broad{G}}]$. Note that $\broad{G} \geq \pmin^n$, so this interval has polynomial length (for fixed $\varepsilon$).

For each tested integer $j$, we assume that $\broadOPT = \broad{G}(1 + \varepsilon)^j$, and execute the algorithm described above. By~\cref{claim:ballpacking-terminates-claim}, we can conclude that if the algorithm adds more than $2k+1$ vertices to $C$, then $\broad{G}(1 + \varepsilon)^j \geq \broadOPT$, and therefore that $j > i$. Let $j^*$ be the largest integer in the interval for which our algorithm adds at most $2k+1$ vertices to $C$.
Then we can conclude that $\broadOPT \leq \broad{G}(1+\varepsilon)^{j^*+1}$, and in this case our algorithm adds at most $t \leq 2k+1$ vertices to $C$ to produce reach at least 
\[
    \frac{({\broad{G}(1 + \varepsilon)^{j^*}})^4 \pmin^{4}}{4^4k^8} = 
    \frac{({\broad{G}(1 + \varepsilon)^{j^*+1}})^4 \pmin^{4}}{(1 + \varepsilon)^4 4^4k^8}
    \geq \frac{{\broadOPT}^4 \pmin^{4}}{(1 + \varepsilon)^4 4^4k^8}
\]
as desired.
\end{proof}

\section{Proof of the Star Lemma}\label{appendix:star-lemma}

\starstructuretheorem*
\begin{proof}
  Choose an arbitrary vertex $u \in V(S)$.
  Let $S_{star} = \{ij \ | \ v \in V(S) \setminus \{u\}\}$. Let $(i, j) \in \sourceV \times V$. Let $P_{ij}$ be the set of all paths from $u$ to $v$ in $G+S$, and note that by definition
  $\Pr[P_{ij}] = \proxG{G+S}{i}{j} \geq \reach{G+S}{\sourceV} = \beta'$.
  Let $p$ and $p'$ be two paths in $P_{ij}$. Let $l$ be the leading vertex of the first \emph{new edge} (an edge in $S$) appearing along $p$, and let $t$ be the trailing vertex of the last new edge appearing along $p$.
  Define $l'$ and $t'$ similarly. We impose an equivalence relation on $P_{ij}$ by declaring that $p$ and $p'$ are similar if $l = l'$ and $t = t'$.
  We reserve one equivalence class for the set of paths containing no new edges (the \emph{empty class}). Note that there are $|V(S)| \leq 2k$ possible values of $l$, and for each such value there are $|V(S)| - 1$ possible values for $t$.
  Thus, including the empty class there are at most $2k(2k - 1) + 1 < 4k^2$ equivalence classes. Then by the union bound, some equivalence class has contribution at least $\beta'/4k^2$.
  If the empty class meets this criteria, then we are done, since these paths also exist in $G+S_{star}$. Moreover, this class also satisfies the second conclusion of the lemma.
  Otherwise, choose one such equivalence class, defined by vertices $l$ and $t$, and call this class $\mathcal C$.

  We now further partition $\mathcal C$ into three subsets. The first, $\mathcal C_1$, is the set of paths in $\mathcal C$ on which $u$ precedes $l$. The second, $\mathcal C_2$,
  is those paths on which $t$ precedes $u$. $\mathcal C_3$ is $\mathcal C \setminus (\mathcal C_1 \cup \mathcal C_2)$. Note that if $u \in \{l, t\}$, then this partition is still well-defined, with $\mathcal C_3 = \mathcal C$ and $\mathcal C_1 \cup \mathcal C_2 = \emptyset$.
  Again using the union bound, the sum of the contributions of these three sets is an upper bound for the contribution of $\mathcal C$.
  It follows that at least one has contribution at least $\beta'/12k^2$.
  Let $\mathcal C'$ denote whichever of $\mathcal C_1, \mathcal C_2$, or $\mathcal C_3$ has the largest contribution.
  We will show how to replace $\mathcal C'$ with a new set of paths $Q$ which uses only edges in $E \cup S_{star}$.

  Note that $lu, ut \in S_{star}$. If $C' = \mathcal C_1$, then we form $Q$ by, for each $p \in \mathcal C'$, replacing $p[u, t]$ with the edge $ut$.
  If $\mathcal C' = \mathcal C_2$, then we replace $p[l, u]$ with the edge $lu$. Otherwise $\mathcal C' = \mathcal C_3$, in which case we replace $p[l, t]$ with the segment $l, u, t$.
  Observe that if $\mathcal C' = \mathcal C_1$, then $\mathcal C'[i, u] = Q[i, u]$ and $\mathcal C'[t, j] = Q[t, j]$.
  Similarly, if $\mathcal C' = \mathcal C_2$, then $\mathcal C'[i, l] = Q[i, l]$ and $\mathcal C'[u, j] = Q[u, j]$.
  Finally, if $\mathcal C' = \mathcal C_3$, then $\mathcal C'[i, l] = Q[i, l]$ and $\mathcal C'[t, j] = Q[t, j]$.
  In any case, we call the segment of $\mathcal C'$ on which $Q$ differs the \emph{middle segment} of $\mathcal C'$, denoted $\mathcal C_m'$, and we call the other two segments the \emph{beginning} and \emph{ending} segments, written $\mathcal C_b'$ and $\mathcal C_e'$, respectively.
  We define $Q_b, Q_m$, and $Q_e$ similarly.
  
  Now we claim that $Q$ has contribution at least $\frac{\beta'\pmin^2}{12k^2} = \beta_{star}$ in $G + S_{star}$.  
  Let $(p_1, p_2)$ be a pair of paths with $p_1 \in \mathcal C_b' = Q_b$ and $p_2 \in \mathcal C_e' = Q_e$.
  We call $(p_1, p_2)$ a \emph{nice path pair} if $p_1$ and $p_2$ are vertex-disjoint, and we say that $(p_1, p_2)$ \emph{exists} in a sampled graph if both paths exist. 
  Let $\cE_1$ be the event that a nice path pair exists in a sampled graph\footnote{
        Technically, $\cE_1$ is an event in two sample spaces, i.e., the spaces defined by sampling from $G + S$ and
        $G + S_{star}$. However, since edges are sampled independently and the set of possible nice path pairs is identical in both graphs,
        the event remains well-defined and has equal probability under both measures.
    }.
  Note that by construction, the vertex $u$ does not appear on any path in either $Q_b$ or $Q_e$. Then the edges
  of the paths in $Q_m$ are disjoint from the edges of paths in $Q_b$ and $Q_e$. Noting that edges are sampled independently,
  we now have that $\Pr[Q] = \Pr[Q_m]\cdot\Pr[\cE_1]$. Moreover, because $Q_m$ consists of a single path
  on at most two edges, i.e., either the edge $ut$, the edge $uw$, or the path $l, u, t$, we may write $\Pr[Q_m] \geq \pmin^2$,
  and conclude that $\Pr[Q] \geq \pmin^2\Pr[\cE_1]$. Next, we note that the existence in a sampled graph of a path in $\mathcal C'$ implies the existence of a nice path pair, and thus $\Pr[\cE_1] \geq \Pr[\mathcal C'] \geq \beta'/12k^2$.
  Then $\Pr[Q] \geq \beta_{star}$, as desired.
  Furthermore, the set $Q$ also satisfies the second conclusion of the lemma.
\end{proof}

\section{Omitted Proofs from~\Cref{sec:star-submodular}}\label{appendix:submodularity}

\begin{observation}\label{obs:broadcast-not-submodular}
    Neither reach nor the logarithm of reach is submodular with respect to edge additions.
  \end{observation}
  To understand~\Cref{obs:broadcast-not-submodular}, it is helpful to consider a small example: a sub-divided star with three leaves. For simplicity, consider the case in which $\sourceV = V$ and $\pmin = \pmax = \alpha$.
  Then our graph $G$ consists of a ``center'' vertex $v_0$ connected to each of three leaves $v_1, v_2$, and $v_3$ via
  disjoint paths on $\ell$ edges, where $\ell$ can be thought of as some large integer whose value depends on $\alpha$.
  The reach of $G$ is $\alpha^{2\ell}$.
  Now, we observe that $G + \{v_1v_3\}$ is isomorphic to $G + \{v_2v_3\}$, and both have reach less than $2\alpha^{2\ell}$.
  Meanwhile, in the graph $G + \{v_1v_3, v_2v_3\}$, every pair of vertices lies on some cycle of length at most $2\ell + 2$. Thus, the reach of
  this graph is at least $\alpha^{\ell + 1}$. Given a sufficiently large value of $\ell$, we now have
  \[
  \broad{G + \{v_1v_3\}} - \broad{G} < \alpha^{2\ell} < \alpha^{\ell + 1} - 2\alpha^{2\ell} < \broad{G + \{v_1v_3, v_2v_3\}} - \broad{G + \{v_2v_3\}},
  \]
  violating the definition of submodularity. A similar analysis on the same graph can be used to show that the logarithm of
  reach is not submodular. In this case, one need only show that $\frac{\broad{G + \{v_1v_3\}}}{\broad{G}} < \frac{\broad{G + \{v_1v_3, v_2v_3\}}}{\broad{G + \{v_2v_3\}}}$ for an appropriately selected value of $\ell$.

  \begin{figure}
    \centering
    \vspace{-13pt}
    \begin{tikzpicture}
    \draw (-.7, 0) circle (1cm);
    \draw (.7, 0) circle (1cm);
    \draw (0, -1) circle (1cm);
    \node[] at (-1.5, 1) {$\cE_0$};
    \node[] at (1.5, 1) {$\cE_1$};
    \node[] at (-1, -1.8) {$\cE_2$};
    \node[] at (1, 0.2) {$x_b$};
    \node[] at (.6, -.6) {$x_c$};
    \node[] at (0, -1.4) {$x_d$};
\end{tikzpicture}
    \caption{Events and probabilities used in the proof of~\Cref{lem:submodularity}.}\label{fig:venn-diagram}
  \end{figure}
  \submodularitylemma*
  \begin{proof}
    Since proximity only increases by adding edges (this fact can readily be seen by the sampling based definition of proximity), it is clear that $g_v$ is monotone. So let us focus on submodularity.
    
    We first study the two-edge setting: consider any $e_1, e_2 \in E^u$. We will claim that
    \begin{equation}\label{eq:submodular-main}
    \frac{\proxG{G+e_1+e_2}{v}{u}}{\proxG{G+e_2}{v}{u}} \le   \frac{\proxG{G+e_1}{v}{u}}{\proxG{G}{v}{u}}.
    \end{equation}
    
    The inequality is obvious if either $e_1$ or $e_2$ already exists in $G$, and so let us assume that this is not the case.
    
    Let us define $P_1$ (resp., $P_2$) to be the set of simple paths in $G+e_1$ (resp., $G+e_2$) that go from $v$ to $u$, \emph{ending in} the edge $e_1$ (resp., $e_2$). Let $P_0$ be the set of simple paths in $G$ that go from $v$ to $u$, (so they do not contain either of $e_1$ or $e_2$). The key observation is that \emph{any simple path} from $v$ to $u$ in $G+e_1+e_2$ must be in $P_0 \cup P_1 \cup P_2$. This is because no simple path contains both $e_1$ and $e_2$; it also cannot have $e_1$ or $e_2$ as an intermediate edge in the path.
    Now, define $\cE_0, \cE_1$, and $\cE_2$ to be the events that at least one path from $P_0$, $P_1$, or $P_2$ (respectively) exists in a sampled graph.
    Thus, the inequality~(\ref{eq:submodular-main}) is equivalent to:
    \[  \frac{ \Pr[ \cE_0 \cup \cE_1 \cup \cE_2 ]}{ \Pr[ \cE_0 \cup \cE_2 ]} \le \frac{ \Pr[ \cE_0 \cup \cE_1]}{ \Pr[ \cE_0 ]}. \]
    
    To simplify the following algebra, we introduce some variables; see~\Cref{fig:venn-diagram}. Specifically, we say that
    $x_a = \Pr[\cE_0]$, $x_b = \Pr[\bar{\cE_0} \cap \cE_1 \cap \bar{\cE_2}]$, $x_c = \Pr[ \bar{\cE_0} \cap \cE_1 \cap \cE_2 ]$, and $x_d = \Pr[ \bar{\cE_0} \cap \bar{\cE_1} \cap \cE_2]$.
    Using this notation, it is easy to check that
    \[
      \frac{ \Pr[ \cE_0 \cup \cE_1 \cup \cE_2 ]}{ \Pr[ \cE_0 \cup \cE_2 ]} = \frac{x_a + x_b + x_c + x_d}{x_a + x_c + x_d} \leq \frac{x_a + x_b + x_c}{x_a + x_c} \leq \frac{x_a + x_b + x_c}{x_a} = \frac{ \Pr[ \cE_0 \cup \cE_1]}{ \Pr[ \cE_0 ]},
    \]
    establishing the claim.
    
    The claim implies submodularity in a straightforward way: suppose $S \subseteq T \subseteq E^u$ and consider any $e \in E^u$. Again, the case $e \in T$ is trivial, so let us assume that $e \not\in T$. We can now consider $G+S$ to be the ``base'' graph, and use the argument above repeatedly, choosing $e_1 = e$, and an element of $T \setminus S$ as $e_2$. This gives us
    \[ \frac{\proxG{G+T+e}{v}{u}}{\proxG{G+T}{v}{u}} \le   \frac{\proxG{G+S+e}{v}{u}}{\proxG{G+S}{v}{u}}. \]
    Taking logarithms, we obtain that $g_v$ is submodular.
  \end{proof}

  \potentialdeclinelemma*
  \begin{proof}
    Let $e_1, e_2, \dots, e_{2k}$ be the edges in $E'$. Define the current set of active pairs as
    \[ A\su{t-1} = \{ (i,j) \in \sourceV \times V : \mu(i,j; S\su{t-1}) > 0 \}. \]
    
    For any such pair, the first observation is to note that 
    \begin{equation}\label{eq:submodular-avg1} \sum_{\ell \in [2k]} \left( \mu(i,j; S\su{t-1}) - \mu(i,j; S\su{t-1} \cup \{e_\ell \} ) \right) \ge \mu(i, j; S\su{t-1}).
    \end{equation}
    This follows via a standard argument, adding the elements $e_\ell$ in some order and using submodularity. For a technical reason, we note that this also implies that
    \begin{equation}\label{eq:submodular-avg2}
    \sum_{\ell \in [2k]} \left( \mu(i,j; S\su{t-1}) - \max\{ 0, \mu(i,j; S\su{t-1} \cup \{e_\ell \} )\} \right) \ge \mu(i, j; S\su{t-1}).
    \end{equation}
    This follows from~\eqref{eq:submodular-avg1}, because if for some $\ell$, $\mu (i,j; S\su{t-1}\cup\{e_\ell \}) < 0$, that term in the summation alone is $\ge$ RHS, and we only need to use the fact that every other term is non-negative. If all the $\mu$ are $\ge 0$, then~\eqref{eq:submodular-avg1} and~\eqref{eq:submodular-avg2} are identical.
    
    We can sum this over all pairs $(i,j)$ in $A\su{t-1}$, and noting that the RHS is exactly $\Psi(S\su{t-1})$, we have by averaging,
    \[ \exists \ell ~:~ \sum_{(i,j) \in A\su{t-1}} \left( \mu(i,j; S\su{t-1}) - \max\{ 0, \mu(i,j; S\su{t-1} \cup \{e_\ell \} ) \} \right) \ge \frac{1}{2k} \Psi (S\su{t-1}). \]
    
    This implies that $\Psi(S\su{t-1}) - \Psi (S\su{t}) \ge \frac{1}{2k} \Psi (S\su{t-1})$. The lemma follows by rearranging the terms.
  \end{proof}

  \submodalgtheorem*
  \begin{proof}
    We begin by assuming that we have a value $\reach{G^*}{\sourceV} \geq x \geq \reach{G^*}{\sourceV}/(1+\varepsilon)$. Having $x$, we set $\beta' = \frac{x\pmin^2}{12k^2}$, so~\cref{thm:star-structure} tells us that it is possible to achieve reach $\beta'$ by adding $O(k)$ edges incident to a single vertex. We now proceed with Algorithm~\ref{alg:submod}, trying each possible vertex $u$. Because $G$ is connected, we have $\proxG{G}{u}{i} \ge \pmin^n$ for every $u,i$. Thus, $\mu(i, j; S\su{0}) \le \frac{2n}{\varepsilon} \log (1/\pmin^\varepsilon)$ for all $i,j$, implying that the initial potential is at most $(2n^3/\varepsilon) \cdot \log(1/\pmin^\varepsilon)$. Since the potential drops by a factor at least $(1- \frac{1}{2k})$ in each iteration, and since the algorithm terminates when the potential reaches $\log(1/\pmin^{\varepsilon})$, we conclude that the number of iterations is $O(k \log n)$.
    Furthermore, when the algorithm terminates, we have $\mu(i, j; S\su{t}, \beta') \le \log \frac{1}{\pmin^\varepsilon}$ for all pairs $(i,j) \in \sourceV \times V$. This implies that for all $(i, j) \in \sourceV \times V$, 
    $$\proxG{G+S\su{t}}{i}{j} \ge \beta' \cdot \pmin^\varepsilon \geq \frac{\reach{G^*}{\sourceV}\pmin^{2+\varepsilon}}{(1+\varepsilon)12k^2}.$$
    
    To obtain the estimate $x$ for $\reach{G^*}{\sourceV}$, we use a technique similar to that of~\Cref{thm:ballpacking-theorem}. Specifically, we note that a standard
    analysis can be used to obtain a precise bound $b \in O(k\log n)$ on the number of edges added by our algorithm, given that $\reach{G^*}{\sourceV} \geq x$ and that we have guessed the correct vertex $u$. We
    conduct a binary search of integers in the interval $[0, \log_{1+\varepsilon} \frac{1}{\reach{G}{\sourceV}}]$. For each such integer $i$, we execute our algorithm with $x = \reach{G}{\sourceV}(1+\varepsilon)^i$. If the algorithm adds more than
    $b$ edges for every guess of the vertex $u$, then we conclude that $x > \reach{G^*}{\sourceV}$. Let $i^*$ be the largest guessed value such that the algorithm terminates (for some guess of vertex $u$) after adding at most $b$ edges.
    Then we have that
    $x = \reach{G}{\sourceV}(1+\varepsilon)^{i^*} \geq \reach{G^*}{\sourceV}/(1 + \varepsilon)$ and consequently the achieved reach matches or exceeds our desired bound. This completes the proof.
  \end{proof}
  
\section{Full Proof of~\Cref{thm:broadcasthardness}}\label{appendix:hardness-bi}

\broadcasthardness*

The reduction is from a variant of the \textsc{Set Cover} problem. Specifically, we rely on the following hardness assumption ~\cite{feige1998,feige2010}:

\hardgapsetcover*

Note that the assumption implies that doing even slightly better than the bicriteria guarantee of the greedy algorithm of~\Cref{sec:star-submodular} is \cclass{NP}-hard. For our reduction, we only need a bound of $(1 - \Omega(1))m$ in the No case, which is weaker. Likewise, we only require $|S_i| = o(\sqrt{n})$. We also remark that  Assumption~\ref{asmp:hardgsc} likely holds even with $c = (\log n)^{1-o(1)}$. With this stronger assumption, our hardness results can be improved to nearly match our algorithmic bounds. We omit the details. 

\begin{proof} 
Our reduction from \textsc{Gap Set Cover} is as follows.

\vspace{6pt}

\noindent \underline{\textbf{Instance:}}  Given an instance of 
\gsc{} consisting of a collection of $n$ sets $S_1,S_2, \dots,  S_n \subseteq [m]$. We construct a \BI{} instance of $(G=(V,E,\{\alpha_e\}),k)$ with $\pmin = \pmax = \alpha $ as follows:

We create a graph $G$ with a \emph{pivot vertex} $p$, vertices $s_i$ corresponding to sets $S_i$ (called \emph{set vertices}) and vertices $e_i$ corresponding to elements $i \in [m]$ (called \emph{element vertices}). Between every pair of set vertices $s_i, s_j$, we add a path of length $l$, where $l$ is an even integer parameter whose value will be specified later. These paths are mutually disjoint, and so there are $\binom{n}{2} (l-1)$ vertices along the paths. We call these \emph{set-set internal vertices}. Next, we add a path of length $l$ between $s_i$ and $e_j$ for all $j \in S_i$. (I.e., we connect a set vertex $s_i$ to all the element vertices $e_j$ corresponding to elements $j\in S_i$.). Once again, these paths are all mutually disjoint. We call the vertices on the paths the \emph{set-element internal vertices}. Finally, we connect the pivot to each set vertex via mutually disjoint paths of length $l$. We call the internal vertices along these paths \emph{pivot-set internal vertices.}

Now we argue about the maximum reach that can be achieved after adding $k$ edges to such a graph $G$.

\noindent \underline{\textbf{Yes-case:}} let there be a set cover of size $k$ that covers all elements. Suppose we consider adding $k$ edges between the pivot vertex $p$ and set vertices corresponding to sets in the set cover. We claim that between any two vertices in the resulting graph, there is a path of length at most $2.5 l + 1$, thus implying that the reach $\broad{G} \ge \alpha^{2.5l + 1}$. 

To see the claim, we argue separately for each vertex:

First, from the pivot vertex $p$, every vertex can be reached via a path of length $\le 2l$. To see this, note that the distance of each $e_i$ (and therefore also each set-element internal vertex) from $p$ is at most $l+1$; this follows because we have direct edges from $p$ to a set cover. Also, every set-vertex (and therefore also every pivot-set internal vertex) can be reached via a path of length at most $l$. Thus, all of the set-set internal vertices can be reached from $p$ via a path of length $2l$ (indeed, this can be made $1.5l$ by choosing the closer set-vertex).

Second, from any set vertex, we can reach every other set vertex using a path of length $l$, and thus every element vertex with a path of length $\le 2l$. Further, any of the internal vertices can also be reached via a path of length $\le 2l$, as can the pivot.

Third, from any element vertex $e_i$, we can reach every other element vertex $e_j$ with a path of length $\le 2l+2$ (by going to the set vertex covering $i$, going to the pivot, then to the set vertex covering $j$, then going to $e_j$). Further, any set vertex can be reached via a path of length at most $2l$ (going to a set vertex covering $i$ and taking the length $l$ path to the desired set vertex). Any set-set internal vertex can thus be reached via a path of length at most $2.5 l$: we can go from the target vertex to the closest set vertex ---with a path of length $\le 0.5l$--- and from there to $e_i$ by a path of length $\le 2l$ as before. Moreover, any pivot-set internal vertex can be reached via a path of length at most $2l$, using a path of length at most $l + 1$ to the pivot and then proceeding to the target vertex via path of length at most $l - 1$. Finally, any set-element internal vertex can be reached via a path of length $(2.5l + 1)$: from the target, we can either go to an element vertex via a path of length $(0.5l-1)$ or a set vertex via a path of length $(0.5l+1)$, and using the above, this implies that we can get to the target by a path of length $2.5l+1$.

Fourth, from any pivot-set internal vertex $y$, we can reach any other pivot-set internal vertex via a path of length $<2l$ by first traveling to the pivot, and then to the target vertex. Moreover, we can reach any set-element internal vertex $x$ via a path of length at most $2.5l$. We begin by traveling to either the pivot or the set vertex corresponding to $y$ (whichever is closer) via a path of length at most $0.5l$. We then continue to $x$ via at most $2l$ additional edges, as argued above.

Next, from any set-set internal vertex, we can reach any set vertex with a path of length $\le 1.5l$, and thus we can reach every other vertex with a path of length $\le 2.5l$.

Finally, from a set-element internal vertex, it only remains to show that we can reach any other set-element internal vertex using a short path (other cases are covered above by symmetry). Consider two set-element internal vertices $x$ and $x'$, with $x$ (resp., $x'$) on the path from $s_i$ to $e_j$ (resp., $s_{i'}$ to $e_{j'}$). We show that there is a \emph{cycle} with $5l+2$ edges in the graph that contains $x, x'$. Note that this implies that there is a path of length $\le 2.5l + 1$ (i.e., the shorter path on the cycle). As shown in~\Cref{fig:hardness-loop}, note that there is a path of length $2l+2$ between $e_j$ and $e_{j'}$ (using the covering set vertices, as seen above), and there is also the path going to $s_i$, to $s_{i'}$, then to $e_{j'}$. 
\begin{figure}
    \centering
    \includegraphics[width= 0.5\textwidth]{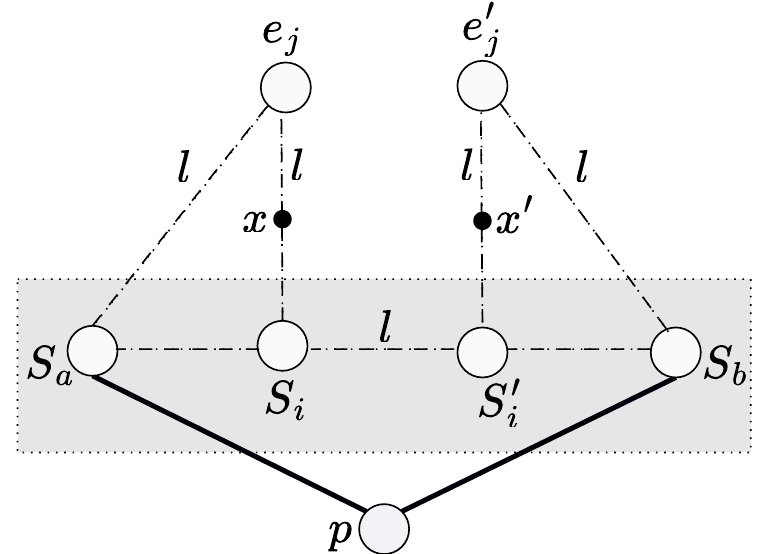}
    \caption{This figure shows the cycle of length $5l+2$ containing two set-element internal vertices $x$ and $x'$. This cycle shows that $x$ and $x'$ have proximity at least $\alpha^{2.5l + 1}$ (in the yes-case).}
    \label{fig:hardness-loop}
\end{figure}

This completes the proof of the claim. Thus, in the yes-case, $\broad{G} \ge \alpha^{2.5l+1}$. 

\vspace{6pt} 

\noindent \underline{\textbf{No-case:}}
By \Cref{asmp:hardgsc}, the union of any $ck$ sets among $S_1, \dots, S_n$  covers at most $(1 - \Omega(1))m$ elements, for any constant $c \ge 1$. 

Now consider adding $r = c'k$ edges, where $r \le ck/8$. Let $E'$ denote the set of added edges and let $G'$ be the graph obtained from $G$ by adding the edges in $E'$.

For any $\{u,v\} \in E'$, we define a set of ``\emph{involved}'' set and element vertices as follows. A set vertex $s_i$ is said to be involved in edge $\{u,v\}$ if for $x \in \{u,v\}$, we have (a) $s_i = x$, (b) $x$ is a set-set internal vertex and one of the end-points of the corresponding path (the one containing $x$) is $s_i$, or (c) $x$ is a set-element internal vertex and the set end-point of the corresponding path is $s_i$. Analogously, we say that an element vertex $e_i$ is involved in edge $\{u,v\}$ if for $x \in \{u,v\}$, either (a) $e_i = x$ or (b) $x$ is a set-element internal vertex and $e_i$ is the element end-point of the corresponding path.

We also generalize the notation slightly and say that a vertex is involved in a set of edges $E'$ if it is involved in at least one of the edges $e \in E'$. The main claim due to our choice of parameters is the following.

\begin{restatable}{claim}{nocaseclaim}
    There exist element vertices $e_i, e_j$ such that (a) neither of them is involved in $E'$, (b) none of the sets containing $i,j$ are involved in $E'$, and (c) there is no set $S_t$ that contains both $i$ and $j$.
\end{restatable}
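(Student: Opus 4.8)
The plan is to bound the number of set and element vertices that can possibly be involved in the $r = c'k$ added edges, and then use a counting argument to show that the uninvolved element vertices, together with the requirement that no single set contains both, cannot be exhausted — provided the parameters ($c$, and the relationship $r \le ck/8$) are chosen appropriately. First I would observe that a single added edge $\{u,v\}$ involves at most two set vertices and at most two element vertices: each of the two endpoints $u, v$ lies on at most one of the $l$-length paths, and each such path has exactly one set endpoint and at most one element endpoint. Hence the set $I_S$ of involved set vertices has $|I_S| \le 2r \le ck/4$, and likewise the set $I_E$ of involved element vertices has $|I_E| \le 2r \le ck/4$.

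Next I would translate condition (b) into a statement about elements. An element $i$ is ``forbidden'' by (b) if some set $S_t$ with $i \in S_t$ has $s_t \in I_S$. The number of such forbidden elements is at most $\sum_{s_t \in I_S} |S_t| \le |I_S| \cdot \max_t |S_t|$. Here I invoke the promised bound $|S_i| = O(\mathrm{polylog}(n))$ (or the weaker $o(\sqrt{n})$ stated in the paper), together with $m = \Theta(n)$, so this count is $o(n) = o(m)$; combined with $|I_E| = o(m)$, the set $B$ of element vertices ruled out by (a) or (b) satisfies $|B| = o(m)$, i.e. strictly fewer than, say, $m/100$ for large $n$. So at least $(1 - o(1))m$ elements survive conditions (a) and (b); call this surviving set $R$.

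The remaining task is to find $i, j \in R$ with no common set (condition (c)). Here I would argue by contradiction: if \emph{every} pair from $R$ were contained in a common set, then in particular, fixing any $i_0 \in R$, every other element of $R$ shares a set with $i_0$, so $R \subseteq \bigcup_{t : i_0 \in S_t} S_t$. But $i_0$ belongs to at most (number of sets) $= n$ sets, each of size $O(\mathrm{polylog}(n))$ — actually a cleaner bound: since each $S_t$ has size $o(\sqrt n)$ and the total over all $t$ of $|S_t|$ is at most $n \cdot o(\sqrt n) = o(n^{3/2})$, this does not immediately suffice, so instead I would note that for the reduction we may as well assume (or enforce by a trivial padding argument) that each element lies in only polylogarithmically many sets, or simply use that $|R| = (1-o(1))m = \Omega(n)$ while any single element's ``co-occurring'' elements number at most (deg of $i_0$)$\cdot \max|S_t| = o(n)$ under the sparsity assumption. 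Either way $R$ is too large to be covered this way, a contradiction, giving the desired pair $e_i, e_j$.

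The main obstacle will be getting the parameter bookkeeping exactly right: one must verify that the sparsity hypothesis $|S_i| = O(\mathrm{polylog}(n))$ (or $o(\sqrt n)$) from Assumption~\ref{asmp:hardgsc}, the scaling $m = \Theta(n)$, and the budget constraint $r \le ck/8$ together force $|B| < m - 2$ with room to spare, and — more delicately — that $R$ genuinely contains a non-co-occurring pair rather than being swallowed by the neighborhood of a single element. If the naive ``fix $i_0$'' bound is not quite enough, the fallback is a double-counting / pigeonhole argument over all pairs in $R$: the number of co-occurring pairs is at most $\sum_t \binom{|S_t|}{2} \le \max_t|S_t| \cdot \sum_t |S_t| / 2$, which under the sparsity assumption is $o(m^2)$, hence strictly smaller than $\binom{|R|}{2} = \Theta(m^2)$, so some pair in $R$ is not co-occurring. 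I would write the final version using whichever of these two bounds composes most cleanly with the exact hypotheses invoked elsewhere in the hardness proof.
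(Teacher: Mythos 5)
Your overall structure parallels the paper's, but there is a genuine gap in the step where you rule out conditions (a) and (b). You try to bound the number of ``bad'' elements using sparsity alone: you argue that the forbidden elements number at most $|I_S| \cdot \max_t |S_t| = O\bigl(r \cdot \mathrm{polylog}(n)\bigr)$ and then claim this is $o(m)$. This does not hold. In the hard \gsc{} instances of Assumption~\ref{asmp:hardgsc}, $m = \Theta(n)$ and each $|S_i| = \mathrm{polylog}(n)$, so a cover of size $k$ forces $k = \Omega\bigl(n/\mathrm{polylog}(n)\bigr)$, i.e.\ $k$ is nearly linear in $n$. Your bound then becomes $O\bigl(c'k \cdot \mathrm{polylog}(n)\bigr) = \Theta(n) = \Theta(m)$, which is not $o(m)$ and in fact could exceed $m$ depending on constants, making the estimate vacuous. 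The paper avoids this by not relying on sparsity at all at this step: it assembles a family $I$ of at most $6r < ck$ set vertices (the involved sets plus, for each involved element, one arbitrary covering set) and invokes the \textbf{no-case guarantee} that any $ck$ sets leave $\Omega(m)$ elements uncovered. That $\Omega(m)$ surviving set $U$ then automatically satisfies (a) and (b). You must use the no-case promise here; a purely combinatorial count on set sizes cannot work.

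Two smaller points. First, your count of ``at most two set vertices per added edge'' is an undercount: a set-set internal vertex involves \emph{both} set endpoints of its path, so a single edge can involve up to four set vertices (the paper's count). This only affects the constant in the budget constraint $r \le ck/8$, but it should be stated correctly. Second, for condition (c), your fallback double-counting argument, $\sum_t \binom{|S_t|}{2} \le n\cdot\mathrm{polylog}(n) = o(m^2)$ versus $\binom{|U|}{2} = \Omega(m^2)$, is precisely what the paper does and is correct; your initial ``fix $i_0$'' attempt has exactly the weakness you suspected, since the number of sets containing a single element is not bounded by the assumption, only the set sizes are.
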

\begin{proof}[Proof of Claim.]
First, note that any edge $e \in E'$ can have at most $4$ set vertices and $2$ element vertices involved in it. Thus, if we pick $r$ edges, we will have at most $4r$ set vertices and $2r$ element vertices involved. For each of the involved element vertices, choose an arbitrary set that covers that element, thus obtaining a set $I$ of $6r$ set vertices with the property that all the set and element vertices involved in $S'$ are either in $I$ or are covered by $I$. Now by our assumption for the no-case and the choice of $r$, this means there are $\Omega(m)$ element vertices that are not covered by the sets in $I$; Call this set $U$. By definition, for any $e_i \in U$, none of the sets covering $e_i$ are in $I$ (and so property (b) automatically holds). Finally, note that there must exist $e_i, e_j \in U$ such that no set vertex has length $l$ paths to both of them. This can be seen by a simple averaging argument as follows. Let $c_{ij}$ be the number of sets that contain both $i$ and $j$. Then, we have 
\[ \sum_{i,j \in U} c_{ij} \le \sum_{\ell \in [n]} \binom{|S_{\ell}|}{2} \le n \cdot \text{polylog}(n).  \] 
Indeed, if we take all pairs $i,j \in [m]$, the first inequality becomes equality. Now, if $c_{ij} \ge 1$ for all $i,j \in U$, then $\sum\limits_{i,j \in U} c_{ij} \ge \binom{|U|}{2}$, and since $|U| = \Omega(m)$ and $m = \Theta(n)$, this leads to a contradiction. Thus, there must exist $i,j \in U$ such that $c_{ij} = 0$, and this completes the proof of part (c) of the claim.
\end{proof}

For $e_i, e_j$ satisfying the claim, we show that $\proxG{G'}{e_i}{e_j}$ has to be small. First, note that the distance between them in $G'$ is $\ge 3l$. This is because any path from $e_i$ to $e_j$ must go via one of the set vertices containing $i,j$, and since they are not involved in $E'$, the shortest path between those set vertices has length $l$. Our goal now is to argue that the reach is also ``close'' to $\alpha^{3l}$. For this, we show how to simplify the graph for easier reasoning about reach.

\begin{claim}\label{lem:contraction}[Subset contraction]
Let $S$ be any subset of the vertices of $G'$. Define a \emph{contraction} as the process where we replace $S$ with a single ``hub'' vertex $h$, and replace every edge of the form $\{u,v\}$ where $u \in S$ and $v \not\in S$ with $\{h,v\}$ (forming parallel edges if appropriate). Let $G^c$ be the graph obtained after contraction. For any $u, v \not\in S$, we have $\proxG{G^c}{u}{v} \ge \proxG{G'}{u}{v}$.
\end{claim}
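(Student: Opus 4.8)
The plan is to exhibit an explicit coupling between the edge-sampling processes on $G'$ and on $G^c$ so that whenever $u$ and $v$ are connected in the sampled subgraph of $G'$, they are connected in the corresponding sampled subgraph of $G^c$. First I would set up the correspondence between edges: each edge of $G^c$ that is not incident to the hub $h$ corresponds to exactly one edge of $G'$ (namely itself, since it lives entirely outside $S$), and each edge $\{h,v\}$ of $G^c$ corresponds to the unique edge $\{u,v\}$ of $G'$ with $u\in S$ from which it was created (parallel edges in $G^c$ correspond to distinct such edges in $G'$). This is a bijection between $E(G^c)$ and $E(G')$. The coupling is then simply: sample a subgraph $H'$ of $G'$ by keeping each edge independently with probability $\alpha$, and let $H^c$ be the subgraph of $G^c$ obtained by keeping exactly the images under this bijection. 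Since the bijection preserves edge identities, $H^c$ has precisely the same distribution as a directly sampled subgraph of $G^c$.

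Next I would argue the connectivity implication under this coupling. Suppose $u$ and $v$ (both outside $S$) are connected in $H'$ via a path $P$. Walk along $P$; it alternates between maximal subpaths lying entirely outside $S$ and maximal subpaths lying entirely inside $S$. In $H^c$, every edge of an outside-$S$ subpath is present (it is its own image), and every maximal inside-$S$ subpath of $P$ enters $S$ at some vertex adjacent (in $H'$) to a vertex $a\notin S$ via an edge whose image in $H^c$ is $\{h,a\}$, and likewise leaves $S$ via an edge whose image is $\{h,b\}$ for some $b\notin S$; both $\{h,a\}$ and $\{h,b\}$ are in $H^c$. Thus in $H^c$ we can short-circuit each inside-$S$ excursion through the single vertex $h$, yielding a walk from $u$ to $v$ in $H^c$, hence $u$ and $v$ are connected in $H^c$. (If $P$ never enters $S$ the conclusion is immediate.) Therefore the event ``$u,v$ connected in $H'$'' is contained in the event ``$u,v$ connected in $H^c$'', and taking probabilities gives $\proxG{G^c}{u}{v}\ge\proxG{G'}{u}{v}$.

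The one subtlety to be careful about — and the step I expect to require the most attention — is the handling of parallel edges created by the contraction. When two edges $\{u_1,v\}$ and $\{u_2,v\}$ with $u_1,u_2\in S$ both collapse to $\{h,v\}$, the claim's construction says to form parallel edges, so the bijection with $E(G')$ is genuine; I must make sure the sampling on $G^c$ is understood to sample these parallel copies independently (consistent with the ``graph sampled from $G$'' definition applied to multigraphs), so that the coupling is distribution-preserving. As long as that convention is fixed, the rest is routine. An alternative, if one prefers to stay within simple graphs, is to instead keep edge $\{h,v\}$ in $H^c$ whenever \emph{at least one} of its preimages survives in $H'$ — this only makes $\{h,v\}$ more likely to appear, so the containment of connectivity events still holds and the inequality is preserved; I would mention this as a remark but carry out the clean multigraph coupling as the main argument.
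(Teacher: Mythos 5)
Your proposal is correct and is essentially the paper's argument, merely spelled out in detail: the paper's one-line proof appeals to exactly this edge-by-edge coupling, observing that a sampled path in $G'$ yields a $u$--$v$ connection in the coupled sample of $G^c$ because the contracted edges are retained as parallel edges. One small slip worth noting: the map you describe is an injection from $E(G^c)$ into $E(G')$ rather than a bijection, since edges of $G'$ internal to $S$ have no preimage, but this does not affect the coupling or the conclusion.
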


The claim then follows immediately from the sampling-based definition of proximity: suppose we sample edges with probability $\alpha$ each, then if a path exists in $G'$, it also exists in $G^c$ (because of us placing parallel paths).  Now given $E'$, define $S'$ as the set of all set and element vertices involved in $E'$, along with the pivot $p$. Then, define $S$ to be the union of $S'$ and all the internal vertices along paths between vertices of $S'$.  The crucial observation now is that every edge in $E'$ has both its end-points in $S$.

Now suppose we contract the set $S$ in $G'$ and obtain the graph $G^c$. By the claim, it suffices to show an upper bound on $\proxG{G^c}{e_i}{e_j}$ (where $e_i, e_j$ are the element vertices that we identified earlier). To do this, we make another observation about $G^c$: its vertices consist of $h$ (the new hub vertex), a subset $V_{set}$ of the original set vertices, a subset $V_{elt}$ of the element vertices, all the internal vertices of the paths between $V_{set} \cup V_{elt}$, and all the internal vertices of the paths between $h$ and $V_{set} \cup V_{elt}$. Thus it is natural to define a ``path compressed'' graph $H$, whose vertex set is $\{h\} \cup V_{set} \cup V_{elt}$, which has an edge iff there is a path of length $l$ in $G^c$. Note that there can be parallel edges in $H$. Now suppose we view $H$ as an probabilistic graph, where the sampling probability is $\alpha^l$ for every edge. Then we have the following easy observation:

\begin{observation}\label{obs:equality}
For all $u, v \in \{h\} \cup V_{set} \cup V_{elt}$, we have $\proxG{G^c}{u}{v} = \proxG{H}{u}{v}$.
\end{observation}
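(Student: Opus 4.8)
The plan is to show that $G^c$ is, up to self-loops at $h$ that are irrelevant for connectivity, an \emph{$l$-subdivision} of $H$: the edges of $H$ are in bijection with a family of pairwise internally vertex-disjoint paths of length exactly $l$ in $G^c$, and every vertex of $G^c$ outside $\{h\}\cup V_{set}\cup V_{elt}$ is an internal vertex of a unique one of these paths and has degree $2$ in $G^c$. First I would verify this structural claim directly from the construction. Every edge of $G$ lies on one of the mutually internally vertex-disjoint length-$l$ paths joining a pair of set/element/pivot vertices. Recall that, beyond $S' = \{p\}\cup(\text{involved set and element vertices})$, the set $S$ contains the internal vertices of precisely those length-$l$ paths both of whose endpoints lie in $S'$. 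Since distinct length-$l$ paths of $G$ share no internal vertex and $S'$ contains no internal vertex, an internal vertex of a path $P$ belongs to $S$ iff both endpoints of $P$ lie in $S'$; in that case the entire path $P$ collapses onto $h$ (contributing only self-loops), and otherwise every internal vertex of $P$ survives. Contraction never merges two surviving internal vertices (it only identifies vertices of $S$ with the single hub $h$, which is a terminal), so each surviving internal vertex retains degree $2$ and lies on a unique surviving length-$l$ path joining two vertices of $\{h\}\cup V_{set}\cup V_{elt}$; conversely, by the definition of $H$, each such path is recorded as an edge of $H$ (parallel edges of $H$ corresponding to distinct internally disjoint paths of $G^c$), and the edges of $E'$, having both endpoints in $S$, become self-loops at $h$.

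Next I would couple the two sampling experiments. Let $f \mapsto P_f$ be the bijection between edges of $H$ and the length-$l$ subdivision paths of $G^c$. Sample $G^c$ by retaining each edge independently with probability $\alpha$, and for each edge $f$ of $H$ let $Y_f$ be the indicator that all $l$ edges of $P_f$ are retained. The paths $P_f$ are pairwise internally vertex-disjoint, hence edge-disjoint, so the $Y_f$ are mutually independent, with $\Pr[Y_f = 1] = \alpha^l$; thus $\{Y_f\}$ is distributed exactly as a sample of the information graph $H$, whose edge-sampling probability is $\alpha^l$.

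It then remains to check that, for terminals $u, v \in \{h\}\cup V_{set}\cup V_{elt}$, the event ``$u$ and $v$ lie in the same component of the sampled $G^c$'' coincides with ``$u$ and $v$ lie in the same component of the sample $\{Y_f\}$ of $H$.'' In the forward direction, a $u$-$v$ path in the sampled $G^c$ that enters a subdivision path $P_f$ must traverse it in full, because the interior vertices of $P_f$ have degree $2$ in $G^c$ and hence offer no way to leave $P_f$ except at its two terminal endpoints; such a path therefore passes through a sequence of terminals $u = w_0, w_1, \dots, w_r = v$ with $Y_{f_i} = 1$ for the edge $f_i$ of $H$ joining $w_{i-1}$ and $w_i$, so $u$ and $v$ are connected in the $H$-sample. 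Conversely, a path in the $H$-sample from $u$ to $v$ using edges $f_1, \dots, f_r$ with $Y_{f_i} = 1$ yields, by concatenating the fully retained paths $P_{f_1}, \dots, P_{f_r}$, a $u$-$v$ walk in the sampled $G^c$. Hence the two connectivity events are identical, giving $\proxG{G^c}{u}{v} = \proxG{H}{u}{v}$.

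The only genuinely delicate point is the first, structural, step: one must handle the contraction carefully to be sure that $G^c$ really is an $l$-subdivision of $H$ --- that no surviving internal vertex acquires a degree other than $2$, that the subdivision paths remain internally disjoint after identifying part of $G'$ with $h$, and that the self-loops created at $h$ (from edges of $E'$ and from fully collapsed paths) are irrelevant for connectivity. Once that is settled, the coupling and the connectivity equivalence are routine, consistent with the claim in the text that this observation is easy.
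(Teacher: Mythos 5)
Your proof is correct and supplies exactly the details the paper omits — the paper states this as an ``easy observation'' without proof, and the natural argument (which you give) is to recognize $G^c$ as an $l$-subdivision of $H$ (modulo self-loops at $h$), couple the two edge-sampling experiments via the indicator that each length-$l$ subdivision path is fully retained, and then note that terminal-to-terminal connectivity in the sampled $G^c$ is determined entirely by which subdivision paths survive, because internal vertices have degree $2$. Your care about the structural step is warranted but everything checks out: since $S$ contains the internal vertices of a length-$l$ path iff both its endpoints lie in $S'$, surviving internal vertices come from paths with at least one endpoint outside $S'$, no edge of $E'$ is incident to them (by the paper's observation that $E'\subseteq S\times S$), and contraction never identifies two surviving internal vertices, so each retains degree exactly $2$ and the surviving paths remain internally vertex-disjoint, hence edge-disjoint, giving the claimed independence of the path-survival indicators.
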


For our $e_i, e_j$ of interest, bounding $\proxG{H}{e_i}{e_j}$ will turn out to be simple, because the edge probability $\alpha^l$ will be chosen to be so small that only the shortest path between $e_i$ and $e_j$ matters. We formalize this in the following simple claim.

\begin{claim}\label{lem:puv-bound}
Let $H = (V_H, E_H)$ be an probabilistic graph on $n_H$ vertices in which every edge has sampling probability $\le \delta < \frac{1}{2 n_H}$. Let $u, v \in V$ such that dist$(u,v) = c$, for some integer $c \ge 1$. Then $\proxG{H}{u}{v} \le 2 n_H^{c-1} \delta^c $.
\end{claim}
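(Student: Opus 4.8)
The plan is a union bound over simple paths followed by a geometric series. First I would note that $\proxG{H}{u}{v}$ is exactly the probability that $u$ and $v$ lie in the same component of a sampled graph, which is the probability that \emph{some} simple path from $u$ to $v$ survives; the union bound (which needs no independence) then gives $\proxG{H}{u}{v} \le \sum_{p} \Pr[p \text{ exists}]$, the sum ranging over all simple $u$--$v$ paths. A simple path $p$ of length $\ell$ uses $\ell$ \emph{distinct} edges, which are sampled independently, so $\Pr[p \text{ exists}] = \prod_{e \in p} p_e \le \delta^\ell$; and since $\mathrm{dist}(u,v) = c$ there are no $u$--$v$ paths shorter than $c$, so only $\ell \ge c$ contributes.

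Next I would bound the number of simple $u$--$v$ paths of length $\ell$: such a path is determined by its ordered sequence of $\ell-1$ internal vertices, all distinct and chosen from $V_H \setminus \{u,v\}$, so there are at most $n_H^{\ell-1}$ of them. Combining, and extending the sum to infinity for cleanliness,
\[
  \proxG{H}{u}{v} \;\le\; \sum_{\ell \ge c} n_H^{\ell-1}\delta^{\ell} \;=\; n_H^{c-1}\delta^{c}\sum_{j \ge 0}(n_H\delta)^{j} \;=\; \frac{n_H^{c-1}\delta^{c}}{1 - n_H\delta}.
\]
The hypothesis $\delta < \tfrac{1}{2 n_H}$ gives $n_H\delta < \tfrac12$, hence $\tfrac{1}{1-n_H\delta} < 2$, and the claimed bound $\proxG{H}{u}{v} \le 2 n_H^{c-1}\delta^c$ follows. (The assumption $c \ge 1$ is genuinely needed: for $c=0$ the left side equals $1$ while the asserted bound is $2/n_H < 1$.)

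For the parallel edges that $H$ may carry, I would first merge each bundle of parallel edges between a fixed pair of vertices into a single edge whose survival probability is the probability that at least one edge of the bundle survives; this leaves connectivity in every sampled realization unchanged, hence preserves all proximities exactly, and yields a simple graph to which the argument above applies verbatim. The only subtlety is that a merged bundle of multiplicity $t$ has probability at most $t\delta$ rather than $\delta$, so one either reads $\delta$ in the statement as the per-bundle survival probability, or invokes that in the application the multiplicities are polylogarithmic while $\delta = \alpha^\ell$ is chosen so small that $t\delta < \tfrac{1}{2n_H}$ still holds. This is really the whole proof — it is short, and I do not anticipate a genuine obstacle; the one thing to keep straight is the bookkeeping distinction between correlated survival events \emph{across} different simple paths (handled by the union bound) and the independent edge events \emph{within} one simple path (handled by the product $\le \delta^\ell$).
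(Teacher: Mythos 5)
Your proof is correct and follows the same approach as the paper: a union bound over simple paths, bounding the count of length-$\ell$ paths by $n_H^{\ell-1}$ and each path's survival probability by $\delta^\ell$, then summing the geometric series with $n_H\delta < 1/2$. Your extra remarks (the necessity of $c \ge 1$, and the parallel-edge handling, which the paper performs just outside the claim by collapsing a bundle into one edge with a suitably larger $\delta$) are accurate but not needed for the claim itself.
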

\begin{proof}[Proof of Claim.]
For any path of length $\ell$, the probability that the path exists in a sampled graph is $\delta^{\ell}$. Between any two vertices, there are clearly at most $n_H^{\ell-1}$ paths of length $\ell$, and thus by a union bound, noting that dist$(u,v)= c$ and therefore there are no paths of length $<c$, we have:
\begin{align*}
\proxG{H}{u}{v} &\le \sum_{\ell = c}^{n_H - 1} n_H^{\ell-1} \delta^\ell \le n_H^{c-1} \delta^c \cdot \sum_{\ell \ge 0} (\delta n_{H})^\ell \le 2 n_H^{c-1} \delta^c.
\end{align*}
\end{proof}

The claim now implies a bound on $\proxG{H}{e_i}{e_j}$; we cannot use it directly since our $H$ has parallel edges, but we note that there are at most $6k + 1$ parallel edges between any two vertices (because that is a bound on the number of ``non-internal'' vertices, plus the pivot, in the contracted set $S$). Thus, we can replace parallel edges by a single edge with $\delta \le (6k+1) \alpha^l$, and use~\Cref{obs:equality} to obtain:
\[ \proxG{H}{e_i}{e_j} = \proxG{G'}{e_i}{e_j} \le\proxG{G^c}{e_i}{e_j} \le 2 (m+n)^2 \delta^3. \]
This is because the shortest path in $H$ between $e_i$ and $e_j$ has length three, as there is no set that contains both $i$ and $j$. Now, if we choose $l$ large enough (approximately $\frac{\log n}{\epsilon}$), we can make $2 (m+n)^2 \delta^3$ to be $< \alpha^{3l - \epsilon}$, for any $\epsilon > 0$.

Since the reach in the yes-case is $\ge \alpha^{2.5l + 1}$, the desired gap follows.

\end{proof}

\section{Proof of~\Cref{thm:sshardness}}\label{appendix:ss-hardness}

\sshardness*
\begin{proof}

    We once again reduce from \problem{Gap Set Cover}. Let $(S_1, S_2, \ldots S_n \subseteq [m], k)$ be an instance of
    this problem.
    Recall that it is \cclass{NP}-hard to distinguish between the cases
    $\emph{(i)}$ there exists a collection of at most $k$ sets which cover every element, and \emph{(ii)} any collection of
    $2ck$ sets leaves at least $2ck+1$ elements uncovered\footnote{
        We know from~\cite{feige1998,feige2010} that it is \cclass{NP}-hard to distinguish between the existence
        of a set cover with size $k$ and the non-existence of any set cover with size $2ck$. Our stronger hardness assumption
        can be obtained via a simple reduction: copy the instance of \problem{Gap Set Cover} and create $2ck$ additional
        replicas of each element, giving each replica the same set memberships as the original. 
    }~\cite{feige1998,feige2010}. 
    Let $d$ be the maximum size of any set, i.e., $d = \max_{i \in [n]} |S_i|$. Let
    $b$ be the maximum number of sets containing any single element, i.e., $b = \max_{j \in [m]} |\{i \in [n] \colon j \in S_i\}|$. Finally, we will
    need an additional value $l$, which can be thought of as an integer which is polynomial in $m + n$. We will show how to
    select the exact value of $l$ at the end of the proof.
    
    We construct
    an instance of \RI{} with uniform edge-sampling probability $\alpha$ as follows. First we introduce a vertex $\sourcev$, which will be the source vertex of 
    our constructed instance. Next, for each set $S_i$, $i \in [n]$, we introduce a vertex $v_i$. We call these vertices
    \emph{set vertices}. For each set vertex $v_i$, we introduce $l-1$ auxiliary vertices and $l$ edges, such that $v_i$ is connected to
    $\sourcev$ by a path of edge-length $l$ on these vertices and edges. We call these auxiliary vertices the \emph{set-path vertices} corresponding to set $S_i$.
    So far, we have added (less than) $1 + ln$ vertices and exactly $ln$ edges. Next, for each element $j \in [m]$, we add a new vertex $v_j$, which we call
    an \emph{element vertex}. For each set $S_i$ containing element $j$, we add $l - 1$ auxiliary vertices and $l$ edges
    such that $v_i$ and $v_j$ are connected by a path of edge-length $l$ on these vertices and edges. We call these auxiliary vertices
    the \emph{set-element-path} vertices corresponding to set $S_i$ and element $j$. This step adds at most $m\cdot d\cdot l$ vertices and edges.
    We call the probabilistic graph we have constructed $G$. We set the limit on edge additions to $k$, and the target value for the reach
    of $\sourcev$ to $\alpha^{\frac{3l}{2} + 1}$. This completes the construction of our instance of \RI{}.
    
    It remains to show that the reduction is correct. We begin by assuming that $(\{S_i\}_{i \in [n]}, k)$ is a yes-instance.
    That is, we assume that there exist $k$ sets $S_{i_1}, S_{i_2}, \ldots, S_{i_k}$ which cover every element. In this case,
    we add the $k$ edges between $\sourcev$ and the set vertices corresponding to this set cover. That is, we propose the solution
    $S = \{\sourcev v_{i_1}, \sourcev v_{i_2}, \ldots, \sourcev v_{i_k}\}$. Now we show that $\reach{G+S}{\sourcev} \geq \alpha^{\frac{3l}{2} + 1}$.
    Let $u \neq \sourcev$ be some vertex in $G$. If $u$ is a set vertex or a set-path vertex, then our initial construction guarantees that
    $\proxG{G+S}{u}{\sourcev} \geq \proxG{G}{u}{\sourcev} \geq \alpha^l$. If $u$ is an element vertex, then we identify a set $S_i$ which contains the element
    corresponding to $u$ and is part of the cover. Then 
    \[
        \proxG{G+S}{u}{\sourcev} \geq \proxG{G+S}{u}{S_i}\cdot\proxG{G+S}{S_i}{\sourcev} \geq \alpha^l\cdot \alpha = \alpha^{l+1}
    \]

    Finally, assume that $u$ is a set-element-path vertex corresponding to set $S_i$ and element $j$. Observe that
    either $\proxG{G}{u}{v_i} \geq \alpha^{\frac{l}{2}}$ or $\proxG{G}{u}{v_j} \geq \alpha^{\frac{l}{2}}$. In the former case,
    \[
        \proxG{G+S}{u}{\sourcev} \geq \alpha^{\frac{l}{2}}\cdot\proxG{G+S}{v_i}{\sourcev} \geq \alpha^{\frac{3l}{2}},
    \]
    and in the latter case
    \[
        \proxG{G+S}{u}{\sourcev} \geq \alpha^{\frac{l}{2}}\cdot\proxG{G+S}{v_j}{\sourcev} \geq \alpha^{\frac{3l}{2}+1}.
    \]

    Thus, all vertices have proximity at least $\alpha^{\frac{3l}{2}+1}$ to $\sourcev$ in $G+S$, so $S$ witnesses that
    $(G, \sourcev, k, \alpha^{\frac{3l}{2}+1})$ is a yes-instance of \RI{}.

    We now assume that $(\{S_i\}_{i \in [n]}, k)$ is a no-instance of \problem{Gap Set Cover}. 
    In this case, we let $S^*$ be an optimal solution to our constructed instance of \RI{}, and additionally allow that
    $S^*$ may contain up to $ck$ edges.
    That is, $S^*$ is a set of at most $ck$
    edge additions, with $\reach{G + S^*}{\sourcev} \geq \reach{G^*}{\sourcev}$. We will first give an upper bound on $\reach{G^*}{\sourcev}$, and then
    show how we could have chosen $l$ such that this upper bound yields the desired hardness result.
    We impose an arbitrary order on the (at most $2ck$) endpoints of the edges in
    $S^*$, $s_1, s_2, \ldots s_{ck}$. We then introduce a new solution $S$ of size at most $2ck$ such that
    $S = \{\sourcev s_1, \sourcev s_2, \ldots, \sourcev s_{2ck}\}$. By~\Cref{lemma:ss-star-lemma},
    $\reach{G+S}{\sourcev} \geq \frac{\reach{G^*}{\sourcev}}{2ck+2}$. We call the vertices $s_1, s_2, \ldots s_{2ck}$ the \emph{destinations} of the solution $S$,
    and we say that a set $S_i$ is \emph{involved} in solution $S$ if the destinations of $S$ include the set vertex
    $v_i$, any set-path vertex corresponding to $S_i$, or any set-element-path vertex corresponding to $S_i$.
    Note that every set-path vertex and every set-element-path vertex corresponds to exactly one set, so since $S$ has at most
    $2ck$ destinations we can conclude that at most $2ck$ sets are involved in $S$. Next, we say that an element $j$ is \emph{uncovered}
    if it is not contained in any set which is involved in $S$. Because only $2ck$ sets are involved in $S$, at least $2ck+1$ elements are
    uncovered. Moreover, because $S$ has at most $2ck$ destinations, there is at least one uncovered element $j$ for which the corresponding
    element vertex $v_j$ is not itself a destination of $S$. We will now show that $\proxG{G+S}{v_j}{\sourcev} \leq \alpha^{2l}(b+bd)$.

    Every path from $v_j$ to $\sourcev$ begins with $l$ edges from $v_j$ to some set vertex $v_i$, where $j \in S_i$ and $S_i$
    is not involved in $S$. From there, paths extend either via $l$ more edges to $\sourcev$, or via $l$ more edges to another
    element vertex. Each path of the former variety has contribution $\alpha^{2l}$, and $j$ is contained in at most $b$ sets, so these
    paths have contribution at most $b\alpha^{2l}$. Similarly, there are at most $bd$ paths of the latter variety, and each has contribution
    $\alpha^{2l}$. Hence, 
    
    \[
        \reach{G^*}{\sourcev} \leq \reach{G+S}{\sourcev}(2k+2) \leq \proxG{G+S}{v_j}{\sourcev}(2k+2) \leq \alpha^{2l}(b + bd)(2ck+2)
    \]

    We now claim that to achieve the desired hardness bound we need only set 
    \[
        l > \frac{2}{3\varepsilon}\big[-\log_{\alpha}(b+bd) -\log_{\alpha}(2ck+2) + \frac{4}{3} - \varepsilon \big]
    \]

    In this case, simple manipulations reveal that
    \begin{align*}
        \frac{3\varepsilon l}{2} &> -\log_{\alpha}(b+bd) -\log_{\alpha}(2ck+2) + \frac{4}{3} - \varepsilon \\
        \alpha^{\frac{3\varepsilon l}{2}}(b+bd)(2ck+2) &< \alpha^{\frac{4}{3}}\alpha^{-\varepsilon} \\
        \alpha^{2l}\alpha^{\frac{3\varepsilon l}{2}}(b+bd)(2ck+2) &< \alpha^{2l}\alpha^{\frac{4}{3}}\alpha^{-\varepsilon} \\
        \alpha^{2l}(b+bd)(2ck+2) &< \alpha^{2l}\alpha^{\frac{-3\varepsilon l}{2}}\alpha^{\frac{4}{3}}\alpha^{-\varepsilon} \\
        \reach{G^*}{\sourcev} &< (\alpha^{\frac{3l}{2} + 1})^{\frac{4}{3} - \varepsilon}
    \end{align*}

    Consequently, any algorithm which produces reach at least $(\reach{G^*}{\sourcev})^{\frac{4}{3} - \varepsilon}$ using at most $ck$ edges
    can also distinguish between yes- and no-instances of \problem{Gap Set Cover}. This completes the proof.
\end{proof}

\section{Omitted Proofs from~\Cref{sec:subset-source}}\label{appendix:subset-source}

\ssballgrowingtheorem*
\begin{proof}
    Let $v$ be an arbitrary vertex. Consider the set of all paths from $\sourcev$ to $v$ in $G'$.
    Partition these paths according to the last new edge (edge in $S$) encountered. Including the empty class, i.e., the class of paths which use no edges from $S$, there are at most $k + 1$ equivalence classes.
    Hence, some class has contribution at least $\beta'/(k+1)=2\proxradius$. If the empty class is one such class, then we observe that these paths also exist (and also have contribution at least $\beta'/(k+1)$) in $G$. Thus, in this case $\proxG{G}{\sourcev}{v} \geq 2\proxradius$ and the claim is true.
    Otherwise, let $e = s_is_j$ be the last new edge encountered by paths in the class with the largest contribution. Partition this class according to the orientation of $e$, i.e., $s_i \rightarrow s_j$ or $s_j \rightarrow s_i$.
    One set has contribution at least $\beta'/(2k+2) = \proxradius$. Without loss of generality, assume this is the set of paths for which $s_i$ is the trailing vertex of $e$, and call this set of paths $P$.
    Since $\Pr[P] \geq \proxradius$ and every path in $P$ passes through $s_i$, applying the Splitting Lemma (\Cref{lemma:splitting}) at $s_i$ we get that $\Pr[P[v, s_i]] \cdot \Pr[P[s_i, v_s]] \geq \Pr[P] \geq \proxradius$ which implies that $\Pr[P[v, s_i]] \geq \proxradius$.
    Thus, in this case, $\proxG{G}{v}{V(S)} \geq \proxG{G}{v}{s_i} \geq \proxradius$, as desired.
\end{proof}

\ssballpackingtheorem*
\begin{proof}
We assume as input to our algorithm a guess $\beta'$ for the value of $\reach{G^*}{\sourcev}$. Similar to the estimation of the optimal reach used in the proof of~\Cref{thm:ballpacking-theorem} (see \Cref{appendix:ballpacking}), we can obtain an arbitrarily good estimate for $\reach{G^*}{\sourcev}$ via a binary search. Here, we proceed as if we know $\reach{G^*}{\sourcev} \geq \beta' \geq \reach{G^*}{\sourcev}/(1+\varepsilon)$, and we set $\proxdiameter$ and $\proxradius$ accordingly, noting that the former inequality guarantees that~\Cref{thm:ss-ball-growing-cor} holds for $\beta'$.

We call Algorithm~\ref{alg:ballpacking} with $v=\sourcev$ and $\sourcev$ as the center of the star in step~\ref{step:picksmallercolorclass} as described in~\Cref{sec:subset-source}. By~\Cref{claim:ss-ballpacking-terminates-claim}, at the end of the while loop in step~\ref{step:pickw}, $C$ has at most $2k+1$ vertices. In the auxiliary graph $H_r^C$, for all $c_i, c_j \in C, i \neq j$, we put an edge between $f(c_i)$ and $f(c_j)$ if $\proxG{G}{c_i}{c_j} \geq \proxdiameter\pmin$. By property~(\ref{claim:ballpacking-closest-center-claim}) for every $c_i \in C$ at least one $c_j$ exists which satisfies this condition. In other words, the degree of every vertex in $H_r^C$  is at least one. Thus, the smaller color class $D$ picked in step~\ref{step:picksmallercolorclass} has at most $k$ vertices. Since we add $\sourcev$ to $D$, $D$ has at most $k+1$ vertices and the star centered at $\sourcev$ has at most $k$ edges. By~\Cref{claim:ballpacking-compactness-claim}, property~(\ref{claim:ballpacking-closest-center-claim}) and the triangle inequality in the implied metric of $G$, every vertex in $G+\hat{S}$ has proximity at least $\proxdiameter^2\pmin$ to some vertex in $D$. Thus, the resultant reach of $\sourcev$ is at least
\[\proxdiameter^2\pmin^2 = \frac{(\beta')^4\pmin^2}{(2k+2)^4} \geq \frac{\reach{G^*}{\sourcev}^4\pmin^2}{(2k+2)^4(1+\varepsilon)^4}\].
\end{proof}

\section{Constant Witnesses}\label{appendix:constant-witnesses}

In this section we introduce $(c, b)$-\emph{witnessing solutions}, and use them
to obtain linear approximations for \SRI{} using poly$(k)\cdot \log n$
edges. Our idea is to reduce our problem, a graph modification problem, to~\HS{}.
Observe that given an instance of \SRI{}, our task is to select $k$ \emph{modifications}, i.e., edge additions, from a set of polynomial size, i.e., $V^2 \setminus E$.
Given an optimal objective value $\reach{G^*}{\sourceV}$, we evaluate a candidate set $E'$ of modifications by checking a polynomial number of \emph{constraints}, i.e., we require that for each $(u, v) \in \sourceV \times V$, $\proxG{G+E'}{u}{v} \geq \reach{G^*}{\sourceV}$.
In the optimization context, our objective is a maximin over these constraints. Moreover, each of these constraints can be verified in polynomial time.
Many minimax and maximin graph modification problems can be phrased in this manner.

The key idea of our framework is to argue that there always exists a solution $S$ which is nearly optimal, i.e., $\reach{G+S}{\sourceV} \approx \reach{G^*}{\sourceV}$, with the additional property that each individual constraint, i.e., each vertex pair $(u, v) \in \sourceV \times V$, can be satisfied using only a \emph{constant} number of modifications in $S$.
In the context of~\SRI{}, the relevant definitions are as follows.

\begin{definition}\label{def:broadcast-constant-witness}
    Let $G = (V, E)$ be a probabilistic graph, $\sourceV \subseteq V$ a source set, $(u, v) \in \sourceV \times V$, $c \in \mathbb{N}$, and $0 \leq b \leq 1$.
    A $(u, v, b)$-\emph{witness} of size $c$ is a set $W_{uv} \subseteq V^2$ of size $c$ with the property that
    $\proxG{G+W_{uv}}{u}{v} \geq b$. If $S \subseteq V^2 \setminus E$ contains as a subset
    a $(u, v, b)$-witness of size at most $c$ for every vertex pair $u, v \in \sourceV \times V$, we say $S$ is
    a $(c, b)$-\emph{witnessing solution} to \SRI{} on $G$.    
\end{definition}

\Cref{def:broadcast-constant-witness} can be adapted in a natural way to any graph modification problem meeting the criteria described above.
To see why the idea is algorithmically useful, suppose that our problem of interest always admits a $(c, b)$-witnessing solution of size $k^d$, where $k$ is the budget for modifications. 
Then we construct an instance of~\HS{} as follows. For each possible combination of $c$ modifications (in our case, for each combination of $c$ edge additions) we create an element.
Because the number of possible modifications is polynomial and $c$ is a constant, we have a polynomial number of elements.
Next, for each constraint (in our case, each vertex pair $(u, v) \in \sourceV \times V$) we create a set. Again, because there are only a polynomial number of constraints, the size of our constructed instance is also polynomial in the input size.
Finally, for each constraint $(u, v)$, we add to the associated set the elements corresponding to all $(u, v, b)$-witnesses of size at most $c$.

By construction, the existence of a $(c, b)$-witnessing solution of size $k^d$ guarantees the existence of a hitting set of size $k^{cd}$. Each element in this hitting set corresponds to at most a constant $c$ number of modifications, so by running a $O(\log n)$-approximation for~\HS{}~\cite{johnson1974approximation}, we achieve objective value $b$ with $O(k^{cd}\log n)$ modifications.
\bigskip

In the rest of this section, we show how to apply the framework outlined above to~\SRI{}. The first step is to prove the existence of $(c, \reach{G^*}{\sourceV}\cdot\poly(k, \pmin, \pmax))$-witnessing solutions of size $\poly(k)$.
Actually, we have already done this. The Star Lemma (\Cref{thm:star-structure}) guarantees the existence of a $(2, \frac{\reach{G^*}{\sourceV}\pmin^2}{12k^2})$-witnessing solution of size $2k - 1$.
However, in this section we do not need the added edges to form a star, so we can achieve a slightly improved bound; see part $(ii)$ of the following lemma.
We include part $(i)$ because in the important special case of uniform activation probabilities, i.e., when $\pmin = \pmax$, this bound has no dependence on the activation parameter.

\begin{restatable}{lemma}{bcwexistence}\label{lemma:broadcast-constant-witness-existence}
    For any instance $(G = (V, E), k)$ of \SRI{} with optimum reach $\reach{G^*}{\sourceV}$, there exist both
    (i) a $(3, \frac{\reach{G^*}{\sourceV}\pmin^2}{3k^4\pmax^2})$-witnessing solution of size at most $7k - 6$, and
    (ii) a $(1, \frac{\reach{G^*}{\sourceV}\pmin}{12k^2})$-witnessing solution of size at most $\binom{2k}{2}$.
\end{restatable}
\begin{figure}[h]
    \centering
    \begin{tikzpicture}

    \node[circle,fill,inner sep=2pt,label=below:$u$] (u1) at (-6,0) {};
    \node[circle,fill,inner sep=2pt,label=below:$l$] (l1) at (-5,0) {};
    \node[circle,fill,inner sep=2pt,label=below:$t$] (t1) at (-2,0) {};
    \node[circle,fill,inner sep=2pt,label=below:$v$] (v1) at (-1,0) {};
    \node[circle,fill,inner sep=2pt,label=above:$w$] (w1) at (-3.5,1) {};
    \node[circle,fill=white, draw=white,inner sep=2pt] (t11) at (-3,0.3) {};
    \node[circle,fill=white, draw=white,inner sep=2pt] (l11) at (-4,0.3) {};
    \node[circle,fill=white, draw=white,inner sep=2pt] (t12) at (-3,0) {};
    \node[circle,fill=white, draw=white,inner sep=2pt] (l12) at (-4,0) {};
    \node[circle,fill=white, draw=white,inner sep=2pt] (t13) at (-3,-0.3) {};
    \node[circle,fill=white, draw=white,inner sep=2pt] (l13) at (-4,-0.3) {};
     \draw[dotted] (-6,0) sin (-5.8,.1) cos (-5.6,0) sin (-5.4,-.15) cos (-5.2,0) sin (-5,.1);
    \draw[dotted] (-2,0) sin (-1.8,.1) cos (-1.6,0) sin (-1.4,-.1) cos (-1.2,0) sin (-1,.1);
    \draw[dotted] (l12) -- (t12);
    \draw[dotted] (l11) -- (t11);
    \draw[dotted] (l13) -- (t13);
    \draw[blue] (l1) -- (w1);
    \draw[blue] (t1) -- (w1);
    \draw[red] (l1) -- (l11);
    \draw[red] (l1) -- (l12);
    \draw[red] (l1) -- (l13);
    \draw[red] (t1) -- (t11);
    \draw[red] (t1) -- (t12);
    \draw[red] (t1) -- (t13);

    \node[circle,fill,inner sep=2pt,label=below:$u$] (u2) at (1.3,0) {};
    \node[circle,fill,inner sep=2pt,label=below:$l$] (l2) at (3.7,0) {};
    \node[circle,fill,inner sep=2pt,label=below:$t$] (t2) at (6.1,0) {};
    \node[circle,fill,inner sep=2pt,label=below:$v$] (v2) at (7.1,0) {};
    \node[circle,fill,inner sep=2pt,label=below:$w$] (w2) at (2.5,0) {};
    \node[circle,fill=white, draw=white,inner sep=2pt] (t21) at (5.4,0.3) {};
    \node[circle,fill=white, draw=white,inner sep=2pt] (l21) at (4.4,0.3) {};
    \node[circle,fill=white, draw=white,inner sep=2pt] (t22) at (5.4,0) {};
    \node[circle,fill=white, draw=white,inner sep=2pt] (l22) at (4.4,0) {};
    \node[circle,fill=white, draw=white,inner sep=2pt] (t23) at (5.4,-0.3) {};
    \node[circle,fill=white, draw=white,inner sep=2pt] (l23) at (4.4,-0.3) {};
    \draw[dotted] (1.3,0) sin (1.5,.1) cos (1.7,0) sin (1.9,-.1) cos (2.1,0) sin (2.3,.1)cos (2.5,0);
    \draw[dotted] (2.5,0) sin (2.7,.1) cos (2.9,0) sin (3.1,-.1) cos (3.3,0) sin (3.5,.1)cos (3.7,0);
    \draw[dotted] (6.1,0) sin (6.3,.1) cos (6.5,0) sin (6.7,-.1) cos (6.9,0) sin (7.1,.1);
    \draw[dotted] (l22) -- (t22);
    \draw[dotted] (l21) -- (t21);
    \draw[dotted] (l23) -- (t23);
    \draw[red] (l2) -- (l21);
    \draw[red] (l2) -- (l22);
    \draw[red] (l2) -- (l23);
    \draw[red] (t2) -- (t21);
    \draw[red] (t2) -- (t22);
    \draw[red] (t2) -- (t23);
    \draw[blue] (w2) to[out=65, in=105] (t2);

\end{tikzpicture}
    \caption{Two figures illustrating the construction of $Q$ in the proof of part \emph{(i)} of~\Cref{lemma:broadcast-constant-witness-existence}. Both figures depict $\C'$, which is a set of paths from $u$ to $v$ in $G + S^*$. On the left, $\C' = \C_3$. On the right, $\C' = \C_1$ (the case in which $\C' = \C_2$ is symmetric). Red edges denote the fan-out and fan-in edges of $\C'$. Blue edges are the middle segment of $Q$.
    }
    \label{fig:constant_witness}
\end{figure}
\begin{proof}

    We begin with the proof of part \emph{(i)}.
    Let $S^*$ be an optimum solution producing reach $\reach{G^*}{\sourceV}$
    in the probabilistic graph $G^* = G + S^*$. We impose an arbitrary order on the endpoints of
    the edges in $S^*$, $s_1, s_2, \ldots, s_m$, where $m \leq 2k$. We construct a new solution $S$, where
    \[
        S = S^* \cup \{s_1s_2, s_1s_3, \ldots s_1s_m\} \cup \{s_2s_3, s_2s_4, \ldots s_2s_m\} \cup \{s_3s_4, s_3s_5, \ldots s_3s_m\} 
    \]

    Intuitively, we have chosen three\footnote{Note that the edges of $S^*$ always involve at least three distinct endpoints, unless $k = 1$. In this latter case, \SRI{} is polynomial-time solvable.}
    distinct endpoints of edges in $S^*$, and formed stars with these endpoints as the centers and all other endpoints as the leaves.
    It is easy to check that $S$ has size at most $k + 3m - 6 \leq 7k - 6$. To complete the proof, we must show that
    $S$ contains a $(u, v, \frac{4\reach{G^*}{\sourceV}\pmin^2}{12k^4\pmax^2})$-witness of size at most three for every pair of vertices $u, v \in V$.

    Let $P_{uv}$ be the set of paths from $u$ to $v$ in $G^*$, and recall (see~\Cref{sec:preliminaries}) that the contribution of
    $P_{uv}$ is exactly equal to $\proxG{G^*}{u}{v}$, and therefore an upper bound for $\reach{G^*}{\sourceV}$. Now, let $p_i$ and $p_j$ be two paths in $P_{uv}$. Let
    $l_i$ be the leading vertex of the first edge contained in $S^*$ (a \emph{new edge}) to appear along $p_i$. Also, let $t_i$ be
    the trailing vertex of the last new edge to appear along $p_i$. Define $l_j$ and $t_j$ similarly. We impose an equivalence relation
    on $P_{uv}$ by declaring that $p_i$ is similar to $p_j$ if $l_i = l_j$ and $t_i = t_j$. Note that we may reserve one equivalence
    class for the set of paths containing no new edges (the \emph{empty class}), so the equivalence relation remains well-defined. Note that the paths are simple, so $l_i \neq t_i$. It follows that the number of equivalence classes is at most $2k*(2k-1)+1 < 4k^2$. The sum of the contributions of these
    classes is an upper bound for $\reach{G^*}{\sourceV}$, so there must be at least one class of paths with contribution
    at least $\frac{\reach{G^*}{\sourceV}}{4k^2}$. If the empty class meets this criteria, then
    we are done, as the empty set is a $(u, v, \frac{\reach{G^*}{\sourceV}}{4k^2})$-witness, and $\frac{\reach{G^*}{\sourceV}}{4k^2} > \frac{4\reach{G^*}{\sourceV} \pmin^2}{(12k^4)\pmax^2}$ as long as $k > 1$.
    Otherwise, choose one such equivalence class, defined by
    vertices $l$ and $t$, and call this class $\C$. At least one of $s_1, s_2$, or $s_3$ is distinct from
    both $l$ and $t$. Call this vertex $w$.

    We now further partition $\C$ into three subsets. The first, denoted $\C_1$, is the set of paths in $\C$ on which
    $w$ precedes $l$. The second, $\C_2$, is those paths on which $t$ precedes $w$. The third, $\C_3$, is all other paths in $\C$. The sum
    of the contributions of these three subsets is an upper bound for the contribution of $\C$.
    It follows that at least one has contribution at least $\frac{\reach{G^*}{\sourceV}}{12k^2}$. Let $\C'$ denote whichever of $\C_1$, $\C_2$, or $\C_3$,
    has the largest contribution. We will now show how to replace $\C'$ with a new set of paths $Q$ which uses at most three edges from $S$.

    We begin by handling a special case, namely the case in which the edge $lt$ appears along at least one path in $\C'$.
    In this case, we use the fact that $lt \in S$ to include those paths in $Q$ with no modifications.
    Consequently, conditioned on the existence of the edge $lt$ in a sampled graph, the existence of any path in $\C'$ implies
    the existence of a path in $Q$. In other words, when conditioned on the existence of the edge $lt$, the contribution of
    $Q$ is at least as large as the contribution of $\C'$.   
    It is therefore sufficient for us to show that the contributions of $Q$ and $\C'$ are not too different when conditioned
    on the non-existence of $lt$ in a sampled graph. Hence, we proceed with the simplifying assumption that
    the edge $lt$ does not appear along any path in $\C'$. 
    
    We now show how to edit these paths to form $Q$. See~\Cref{fig:constant_witness} for a visual aid.
    If $\C' = \C_1$, then for each $p \in \C'$ we replace $p[w, t]$ with the edge $wt$.
    If $\C' = \C_2$, then for each $p \in \C'$ we replace $p[l, w]$ with the edge $lw$.
    Otherwise, we replace $p[l, t]$ with the segment $(l, lw, w, wt, t)$.
    We observe that if $\C' = \C_1$, then $\C'[u, w] = Q[u, w]$ and $\C'[t, v] = Q[t, v]$. Similarly,
    if $\C' = \C_2$ then $\C'[u, l] = Q[u, l]$ and $\C'[w, v] = Q[w, v]$, and
    if $\C' = \C_3$ then $\C'[u, l] = Q[u, l]$ and $\C'[t, v] = Q[t, v]$. We call the segment
    of $\C'$ on which $Q$ differs, namely either $\C'[w, t], \C'[l, w]$, or $\C'[l, t]$ the \emph{middle segment}
    of $\C'$, denoted $\C'_m$, and we call the other two segments the \emph{beginning} and \emph{ending} segments, written $\C'_b$ and $\C'_e$, respectively.
    We define $Q_b, Q_m$, and $Q_e$ similarly. Moreover, we note that whatever the value of $\C' \in \{\C_1, \C_2, \C_3\}$, $Q_b = \C'_b$ and $Q_e = \C'_e$.

    Thus far, we have identified a set of paths $\C'$ from $u$ to $v$ in $G + S^*$ with contribution at least $\frac{\reach{G^*}{\sourceV}}{12k^2}$,
    and we have used $\C'$ to construct a new set $Q$ of paths which use (in total) at most three edges from $S$.
    To complete the proof, it is sufficient to show that $\Pr[Q] \geq \frac{4\pmin^2}{k^2\pmax^2}\Pr[\C']$.
    Intuitively, we accomplish this first by arguing that since $Q$ and $\C'$ have identical beginning and ending segments, it is sufficient
    to compare their middle segments, and second by performing that comparison. However, the potential positive correlation between
    paths in different segments of $\C'$ necessitates a slightly more technical argument.

    Let $(p_1, p_2)$ be a pair of paths from the beginning and ending segments of $\C'$,
    i.e., $p_1 \in \C'_b = Q_b$ and $p_2 \in \C'_e = Q_e$. We say that $(p_1, p_2)$ is a \emph{nice path pair}
    if $p_1$ and $p_2$ are vertex-disjoint, and that $(p_1, p_2)$ \emph{exists} in a sampled graph if
    both paths exist. Let $\cE_1$ be the event that a nice path pair exists in a sampled graph\footnote{
        Similar to the proof of~\Cref{thm:star-structure}, $\cE_1$ as stated is an event in two sample spaces, i.e., the spaces defined by sampling from $G + S^*$ and
        $G + S$. However, since edges are sampled independently and the edges relevant to $\cE_1$ exist in both graphs,
        the event remains well-defined and has equal probability under both measures.
    }.
    Note that by construction, the vertex $w$ does not appear on any path in either $Q_b$ or $Q_e$. Then the edges
    of the paths in $Q_m$ are disjoint from the edges of paths in $Q_b$ and $Q_e$. Noting that edges are sampled independently,
    we now have that $\Pr[Q] = \Pr[Q_m]\cdot\Pr[\cE_1]$. Moreover, because $Q_m$ consists of a single path
    on at most two edges, i.e., either the edge $wt$, the edge $lw$, or the path $l, w, t$, we may write $\Pr[Q_m] \geq \pmin^2$,
    and conclude that $\Pr[Q] \geq \pmin^2\Pr[\cE_1]$.

    We now upper bound $\Pr[\C']$. 
    We call those new edges (edges in $S^*$) which are incident to $l$ and used by at least one path in $\C'$
    the \emph{fan-out} edges of $\C'$. Similarly, we call those new edges which are incident to $t$ and used by at least
    one path in $\C'$ the \emph{fan-in} edges of $\C'$. Let $\cE_{out}$ (respectively, $\cE_{in}$) be
    the event that at least one fan-out (respectively, fan-in) edge exists in a sampled graph.
    Observe that $\Pr[\C'] \leq \Pr[\cE_1 \cap \cE_{out} \cap \cE_{in}]$. Using the fact that, by construction, no new edges appear on
    any paths in $\C'_b$ or $\C'_e$, we have that $\cE_1$ and $(\cE_{out} \cap \cE_{in})$ are independent. 
    Then $\Pr[\C'] \leq \Pr[\cE_1]\cdot\Pr[\cE_{out} \cap \cE_{in}]$.

    We now need only to upper bound $\Pr[\cE_{out} \cap \cE_{in}]$. Observe that each pair of edges, one being a fan-out edge and the other being a fan-in edge,
    exists in a sampled graph with probability at most $\pmax^2$. Recall, according to our prior argument, that the edge $lt$ does not appear on any path in $\C'$, so
    the fan-out and fan-in edges are disjoint sets. Furthermore, their union has size at most $k$. We use these facts to obtain the following
    bound:
    \[
        \Pr[\cE_{out} \cap \cE_{in}] \leq \pmax^2 \cdot \max_{i \in [k]}\{i\cdot(k-i)\} \leq \frac{\pmax^2 k^2}{4}.
    \]
    Putting the whole proof together, we see that
    \begin{align*}
        \proxG{G+W_{uv}}{u}{v} \geq \Pr[Q] \geq \pmin^2\Pr[\cE_1] &= \pmin^2\Pr[\cE_1]\cdot\frac{\Pr[\cE_{out} \cap \cE_{in}]} {\Pr[\cE_{out} \cap \cE_{in}]} \\ &\geq \frac{4 \pmin^2}{k^2 \pmax^2}\Pr[\C'] = \frac{\reach{G^*}{\sourceV} \pmin^2}{3k^4 \pmax^2},  
    \end{align*}
    where $W_{uv}$ consists of the at most three edges from $S$ appearing along paths in $Q$. Hence, we have found the desired witness.

    To prove part \emph{(ii)} of the lemma, we use a larger solution. Recalling that we have labeled the endpoints of the edges in $S^*$ $s_1, s_2, \ldots$
    we create a new solution $S = \{s_is_j \ | i \neq j\}$ of size at most $\binom{2k}{2}$. That is, we add an edge between each pair of
    endpoints of edges in $S^*$. The next part of the proof proceeds as before, up to the definition of $\C'$. Note that this time we do not need to handle the
    special case concerning edge $lt$ separately. We now construct our
    set $Q$ by replacing the segment $\C'[l, t]$ with the edge $lt$. This is the only new edge used by $Q$, so all that remains
    is to show that $\Pr[Q] \geq \frac{\reach{G^*}{\sourceV}\pmin}{12k^2}$. To accomplish this, we use the same definitions as before for nice path pairs and the
    event $\cE_1$. We note that $\Pr[\C'] \leq \Pr[\cE_1]$ by definition, and $\Pr[Q] \ge \pmin\cdot \Pr[\cE_1]$ because the sampling of edge $lt$ occurs independently
    of event $\cE_1$. The claim follows.

\end{proof}

We now give the details of the reduction to~\HS{} outlined at the beginning of this section.

\begin{restatable}{theorem}{bcwalg}\label{thm:broadcast-constant-witness-algorithm}
    For any $\varepsilon > 0$ and source-set $\sourceV \subseteq V$, there exist polynomial-time algorithms which produce
    probabiliprobabilistic graphs with reach (of $\sourceV$) at least (i) $\frac{\reach{G^*}{\sourceV} \pmin^2}{(1 + \varepsilon)3k^4 \pmax^2}$ using $O(k^3\log n)$ edge additions,
    and (ii) $\frac{\reach{G^*}{\sourceV}\pmin}{(1 + \varepsilon)12k^2}$
    using $O(k^2\log n)$ edge additions.
\end{restatable}
\begin{proof}
    We prove part \emph{(i)} of the theorem. The proof for part \emph{(ii)} is conceptually identical and therefore omitted for brevity.
        We will begin by assuming that we already know the value of $\reach{G^*}{\sourceV}$. In this case, we reduce to
        \HS{} as follows. We define $\W = \binom{V^2\setminus E}{3} \cup \binom{V^2\setminus E}{2} \cup \binom{V^2\setminus E}{1}$
        as the set containing all groups of at most three potential edge additions. Note that $|\W| \in O(n^6)$.
        The elements
        of $\W$ are the elements of our hitting set instance. Then, for each pair of vertices $(u, v) \in \sourceV \times V$
        with $\proxG{G}{u}{v} < \frac{\reach{G^*}{\sourceV}\pmin^2}{3k^4\pmax^2}$, we add a set $\W_{uv}$ consisting of
        all $(u, v, \frac{\reach{G^*}{\sourceV}\pmin^2}{3k^4\pmax^2})$-witnesses of size at most three. This completes the construction.
        We use~\cref{lemma:broadcast-constant-witness-existence}
        to observe that there exists a hitting set of size at most
        $\binom{7k-6}{3} + \binom{7k - 6}{2} + \binom{7k - 6}{1} \in O(k^3)$.
    
        The algorithm proceeds by using the well-known greedy $O(\log n)$-approximation for \HS{}~\cite{johnson1974approximation} to generate a
        hitting set of size $O(k^3\log n)$. We return the union of all the witnesses contained in this hitting set.
        By construction, this set of edge additions contains as a subset a $(u, v, \frac{\reach{G^*}{\sourceV}\pmin^2}{3k^4\pmax^2})$-witness
        for every pair $u, v$ of vertices, and because every member of our hitting set contains at most three edges, our solution has
        size $O(k^3\log n)$.
    
        It remains to show how we can estimate $\reach{G^*}{\sourceV}$. We will do this via the same technique used in~\Cref{thm:ballpacking-theorem}.
        In the following, let $b$ denote the precise bound on edge additions
        given by the algorithm in the preceding paragraph. That is, $b$ is
        $\binom{7k-6}{3} + \binom{7k - 6}{2} + \binom{7k - 6}{1} \in O(k^3)$
        multiplied by the approximation factor given by~\cite{johnson1974approximation}.
        We note that $\reach{G^*}{\sourceV} \leq 1$, so $\frac{\reach{G^*}{\sourceV}}{\broad{G}} \leq \frac{1}{\broad{G}}$.
        Then for any $\varepsilon > 0$, there exists some integer $0 \leq i \leq \log_{1+\varepsilon} \frac{1}{\broad{G}}$ with the property that
        $\broad{G}(1 + \varepsilon)^i \leq \reach{G^*}{\sourceV} \leq \broad{G}(1+\varepsilon)^{i+1}$. We conduct a binary search of integers in the
        interval $[0, \log_{1+\varepsilon}\frac{1}{\broad{G}}]$. Note that $\broad{G} \geq \pmin^n$, so this interval has
        polynomial length (for fixed $\varepsilon$).
        For each tested integer $j$, we assume that $\reach{G^*}{\sourceV} = \broad{G}(1 + \varepsilon)^j$, and execute the
        algorithm described above. If the algorithm adds more than $b$ edges, then we conclude that $\broad{G}(1 + \varepsilon)^j \geq \reach{G^*}{\sourceV}$,
        and therefore that $j > i$. Let $j^*$ be the largest integer in the interval for which our algorithm adds at most $b$ edges.
        Then we can conclude that $\reach{G^*}{\sourceV} \leq \broad{G}(1+\varepsilon)^{j^*+1}$, and in this case our algorithm
        adds at most $b \in O(k^3\log n)$ edges to produce reach at least
        \[
          \frac{\broad{G}(1 + \varepsilon)^{j^*}\pmin^2}{3k^4\pmax^2} = \frac{\broad{G}(1 + \varepsilon)^{j^*+1}\pmin^2}{(1+\varepsilon)3k^4\pmax^2} \geq \frac{\reach{G^*}{\sourceV}\pmin^2}{(1+\varepsilon)3k^4\pmax^2},    
        \]
        as desired.
\end{proof}

We conclude this section by showing that we can obtain an improved bound for the special case of~\RI{}, i.e., we prove~\Cref{thm:ss-klogn}.

\ssstarlemma*
\begin{proof}
    Let $S_{star} = \{\sourcev s \colon s \in V(S) \setminus \{\sourcev\}\}$.
    Clearly, $S_{star}$ has size at most $2k$. Let $v$ be an arbitrary vertex, and consider the set of all paths from $\sourcev$ to $v$ in $G'$.
    Partition these paths according to the last new edge (edge in $S$) encountered. Including the empty class, i.e., the class of paths which use no edges from $S$, there are at most $k + 1$ equivalence classes.
    Hence, some class has contribution at least $\beta'/(k+1)$. If the empty class is one such class, then we observe that these paths also exist (and also have contribution at least $\beta'/(k+1)$) in $G+S_{star}$.
    Otherwise, let $e = s_is_j$ be the last new edge encountered by paths in the class with the largest contribution. Partition this class according to the orientation of $e$, i.e., $s_i \rightarrow s_j$ or $s_j \rightarrow s_i$.
    One set has contribution at least $\beta'/(2k+2)$. Without loss of generality, assume this is the set of paths for which $s_i$ is the trailing vertex of $e$, and call this set of paths $\mathcal C$.
    Now, consider the set of paths $Q$ from $\sourcev$ to $v$ in $G + S_{star}$ defined by prepending the edge $\sourcev s_i$ to each path in $\mathcal C[s_i, v]$.
    Note that $Q[s_i, v] = \mathcal C[s_i, v]$, and these sets have equal contribution. Then since every path in $\mathcal{C}[\sourcev, s_i]$ uses the edge $e$, we have that $\Pr[Q[s_i, v]] = \Pr[\mathcal C[s_i, v]] \geq \Pr[\mathcal C]/\pmax$.
    Meanwhile, by construction $\Pr[Q] \geq \pmin\cdot\Pr[Q[s_i, v]]$. Combining these inequalities, we have that $\Pr[Q] \geq \frac{\beta'\pmin}{(2k+2)\pmax} = \beta_{star}$, as desired.
\end{proof}

Observe that the second conclusion of the lemma implies that the solution $S_{star}$ is in fact a $(1, \frac{\reach{G^*}{\sourcev}\pmin}{(2k+2)\pmax^2})$-witnessing solution of size at most $2k$.
To achieve~\Cref{thm:ss-klogn}, we now need only form the reduction to \HS{} described at the beginning of this section.

\ssalg*
\begin{proof}[Proof of~\Cref{thm:ss-klogn}]
    Let $(G = (V, E), \sourcev, k)$ be an instance of \RI{}.
    We begin by assuming that we already know the optimum achievable reach $\reach{G^*}{\sourcev}$. We
    reduce to \HS{}.
    The elements of our \HS{} instance are $V^2 \setminus E$, i.e., all possible edge additions.
    For each vertex $u \in V$ with $\proxG{G}{\sourcev}{u} < \reach{G^*}{\sourcev}$, we create a set $\W_u$
    consisting of all single edge-additions which improve the proximity of $\sourcev$ to $u$ to at least $\frac{\reach{G^*}{\sourcev}\pmin}{(2k + 2)\pmax}$.
    That is, $\W_u$ is the set of all $(u, \frac{\reach{G^*}{\sourcev}\pmin}{(2k + 2)\pmax})$-witnesses of size 1.
    According to~\cref{lemma:ss-star-lemma}, there exists a hitting set of size at most $2k$. We use
    the well-known greedy approximation for \HS{}~\cite{johnson1974approximation} to obtain a hitting set of size $O(k\log n)$, and we return these edges as
    our solution. It follows from the construction that this solution achieves reach at least $\frac{\reach{G^*}{\sourcev}\pmin}{(2k + 2)\pmax}$.
    The procedure to estimate the value of $\reach{G^*}{\sourcev}$ is similar to~\Cref{thm:broadcast-constant-witness-algorithm}.
\end{proof}

\bibliographystyle{abbrvnat}
\bibliography{refs}

\end{document}